\documentclass[12pt,oneside,reqno]{article} 
\usepackage{graphics}
\usepackage[a4paper]{geometry}

\geometry{left=2cm,right=2cm,top=3.4cm}

\usepackage[english]{babel}
\usepackage{amsmath}
\usepackage{amsthm}
\usepackage{graphicx}
\usepackage[dvipsnames]{xcolor}
\usepackage{empheq}

\usepackage{amsfonts}
\newtheorem{proposition}{Proposition}
\usepackage{graphicx}
\usepackage{dcolumn}
\usepackage{bm}

\graphicspath{{./Figure/}}

\begin{document}


\title{Agents' beliefs and economic regimes polarization\\ in interacting markets}

\begin{author}
{Fausto Cavalli$^{\,\,\rm a}$\footnote{E-mail address: fausto.cavalli@unicatt.it}\,\,,
Ahmad Naimzada$^{\,\,\rm b}$\footnote{E-mail address: ahmad.naimzada@unimib.it}\,\,,
Nicol\'o Pecora$^{\,\,\rm c}$\footnote{E-mail address: nicolo.pecora@unicatt.it}\,\,, 
Marina Pireddu$^{\,\,\rm d}$\footnote{E-mail address: marina.pireddu@unimib.it}\\
{\footnotesize{${}^{\rm a\,}$Dept. of Mathematical Sciences, Mathematical Finance and
  Econometrics,}}\\ {\footnotesize{Catholic University, Via Necchi 9, 20123 Milano, Italy.}}\\
{\footnotesize{${}^{\rm b\,}$Dept. of Economics,
Management and Statistics, University of Milano-Bicocca,}}\\ {\footnotesize{U6 Building, Piazza dell'Ateneo Nuovo 1, 20126 Milano, Italy.}}\\
{\footnotesize{${}^{\rm c\,}$Dept. of Economics and
  Social Sciences, Catholic University,}}\\ {\footnotesize{Via Emilia Parmense 84, 29122 Piacenza, Italy.}}\\
	{\footnotesize{${}^{\rm d\,}$Dept. of Mathematics and its
Applications, University of Milano-Bicocca,}}\\
{\footnotesize{U5 Building, Via Cozzi 55, 20125 Milano, Italy.}}}
\end{author}
\date{}
\maketitle


\begin{abstract}
 In the present paper a model of a market consisting of real and
  financial interacting sectors is studied. Agents populating the
  stock market are assumed to be not able to observe the true
  underlying fundamental, and their beliefs are biased by either
  optimism or pessimism. Depending on the relevance they give to
  beliefs, they select the best performing strategy in an evolutionary
  perspective. The real side of the economy is described within a
  multiplier-accelerator framework with a nonlinear, bounded
  investment function. We show that strongly polarized beliefs in an
  evolutionary framework can introduce multiplicity of steady states,
  which, consisting in enhanced or depressed levels of income, reflect
  and reproduce the optimistic or pessimistic nature of the agents'
  beliefs. The polarization of these steady states, which coexist with
  an unbiased steady state, positively depends on that of the beliefs
  and on their relevance. Moreover, with a mixture of analytical and
  numerical tools, we show that such static characterization is
  inherited also at the dynamical level, with possibly complex
  attractors that are characterized by endogenously fluctuating
  pessimistic and optimistic levels of national income and price. This
  framework, when stochastic perturbations are included, is able to
  account for stylized facts commonly observed in real financial
  markets, such as fat tails and excess volatility in the returns
  distributions, as well as bubbles and crashes for stock prices.
\end{abstract}

\vspace{2mm}

\noindent \textbf{Keywords:} complex dynamics; bifurcations; multistability; market interactions; heterogeneous beliefs.

%

\maketitle

\bigskip

\bigskip

\noindent
{\it The 2008-2009 crisis stimulated new reflections on the origins and the
causes of turmoil in the economic environment. Accordingly, it turned
out to be relevant the understanding of how instability can reinforce
and spread among different sectors, and of what are the factors that may
hinder their interaction.  As a consequence of such complexity, agents
populating the economic system are often better represented as
boundedly rational actors, mispricing risks and inappropriately evaluating the
economic variables. This is usually at the source of
instabilities, that can make the economic system evolve in
unpredictable ways, especially when non-deterministic components are
acting.}\\
{\it The research goal of the present paper is to improve the
understanding of the effect that a biased evaluation of the economic
variables may have on the resulting economic dynamics and of whether an
increase in the integration between the real and financial sectors is harmful or
beneficial to the economy. Relying on analytical investigations
complemented by numerical simulations, we study how a boundedly
rational agents framework can give rise to multistability phenomena in the
economic system, and how the eventual related instability may spread
between the integrated markets.}


\section{Introduction}\label{sec:intro}
Seeking the explanations of the 2008-2009 crisis only within the real
or the financial side of the economy leads to a dead end way. The
global response to the crisis in the form of fiscal stimulus indeed
awakened the interest in the multiplier-accelerator approach dating
back to \cite{samuelson}, from which the modern business
cycle theory originated. According to this scheme, endogenous
fluctuations spring in the economic activity as a consequence of the
mutual interplay between consumption, investments and national income
changes, driven either by a multiplicative component that links
expenditures to national income (the so-called Keynesian multiplier)
or by the principle through which induced investments depend on the
pace of growth in the economic activity (the accelerator
principle). The original model of \cite{samuelson} has been refined
and developed through the decades, emphasizing from time to time
different aspects, ranging from the role of the monetary sector or of the
inventory adjustment to that of expectations in the national income,
or to the effect of bounded investment functions. A burgeoning
literature has been widely disseminating: we just mention
the contributions in \cite{CD93, hicks,  hommes, LW06, naim1, GPS05}.\\
However, the increasing relevance and influence of the financial side
of the economy on the real sector (the so-called phenomenon of
\textit{financialization} of the real economy) can not be neglected,
with the consequent need to investigate the interactions between the
real and stock subsystems. The related literature can be roughly
divided into two research strands, where the former sustains and
describes the reverberations of the financial sector into the real one
(as in \cite{chen,long,naim3,west}), while the latter, on the
contrary, contains a support for the opposite influence direction (see
e.g. \cite{bask,kontonikas,long,naim3}).

In the wake of the aforementioned discussion, the recent financial
crisis strengthened the need to understand the causes of turmoil in
the economic environment, keeping a focus on the role of the financial
system. Among several factors, a systematic mispricing of risk is
commonly regarded as one of the underlying sources of instability for
the economic environment. The presence of such a bias, indeed, reflects
a boundedly rational behavior of the involved economic actors, who may
have a rough knowledge of the economic fundamentals. This can lead agents to
systematically overestimate or underestimate them, and to possibly
stick to or switch among different forecasting rules on the basis of
simple heuristics. It is well-known (see \cite{hommes}) that bounded
rationality together with heterogeneity is a potential source of
endogenous fluctuations in the dynamics of the relevant economic
variables.
 An interesting consequence of this modeling framework is also the capability to exhibit simulated time
series characterized by several qualitative properties, as leptokurtic
returns distributions, large and clustered volatility, that are
distinctive features of real time series.

The present contribution belongs to the research strand on the
study of interacting markets, and embraces a modelling approach similar
to those adopted in \cite{cnp17,naim2,west}. In particular, in each of
these works, the sources of complex dynamics are always clearly linked
to and explained through particular mathematical phenomena occurring
in the model.\\ For example, in \cite{west} a Keynesian-type goods real
market is combined with a stock market populated by heterogeneous
fundamentalist and chartist agents. The main result is that the system
is characterized by different coexisting steady states, that can
become unstable and give rise to complex dynamics. From the
mathematical point of view, we have persistent multistability of
stable steady states or of attractors characterized by similar degrees
of complexity.\\  Conversely, in \cite{naim2} it was deeply investigated
the effect of coupling either stable and unstable isolated real and
stock markets. In this case, a Keynesian good market interacts with a
stock market characterized by optimistic and pessimistic agents, whose
fractions can change over time according to an evolutionary switching mechanism.
Surprisingly, instabilities
can not only spread from one market to the other, but they can even arise by increasing
the level of interaction between markets that are stable
when isolated. From the mathematical viewpoint, this is explained
investigating the loss of stability and the kinds of consequent
dynamics arising, depending on the degree of interaction or on the
degree of optimism/pessimism. Moreover, the analysis of the resulting
time series of the economic variables shows a significant deviation
from normality.\\
The work in \cite{cnp17} provides evidence that even a potentially stable
economic system can show endogenously fluctuating dynamics, since a
locally stable steady state can coexist with other complex
attractors, so that the long-run dynamics are strongly path-dependent
and can result in sequences of alternating periods of persisting large or small volatility.

In all the above-mentioned papers, the agents' heterogeneity and their
boundedly rational nature have no effects on the set of possible
steady states, which always consists of a unique steady state
(\cite{naim2,cnp17}) or of three coexisting steady states
(\cite{west}).\\
However, it is reasonable to wonder whether the boundedly rational
nature of the agents can also affect the final outcomes of the system
in terms of existing steady states and their position. In other words,
we are interested in inquiring if, as a consequence of a biased
evaluation of economic variables, ``biased'' steady states can
emerge. For example, may an excess of optimistic/pessimistic bias
foster the emergence of ``optimistic'' and ``pessimistic'' steady
states, which, in turn, can evolve into optimistic and
pessimistic complex attractors?  Additionally, to which extent the
birth of biased steady states is affected by the markets'
integration?

In order to investigate these aspects, building on the approach
pursued in the above mentioned contributions, we analyze an economy
made by two, possibly interdependent, subsystems, namely a real and a
financial sector. The real sector essentially recalls the economy
presented in \cite{cnp17} and \cite{naim1}, in which a
multiplier-accelerator model with a bounded investment function is
studied, and it is coupled with a stock market populated by optimistic
and pessimistic fundamentalists, modeled by following the approach in
\cite{naim2}. In so doing, we shall focus our attention on the degree
of interaction between markets, which positively depends on the parameter
$\omega\in[0,1],$ as well as we shall take into account the role of
agents' beliefs, represented by a symmetric optimistic/pessimistic
bias $b>0,$ and their willingness to switch toward the best
performing strategy while operating in the stock market, described by
the so-called intensity of choice $\beta>0$ (see e.g. \cite{hommes})
that regulates the evolutionary selection between forecasting
rules. The resulting dynamics are described by a three-dimensional
nonlinear system of difference equations for which we analyze the
existence and the stability of steady states and, through global
analysis, we investigate the presence of multistability phenomena.

The results we come up with can be read under different
perspectives. Our most relevant finding concerns the
static analysis of the set of possible steady states. We show that,
independently of the structural features of the sectors under
consideration and of their degree of integration, the steady state final
outcomes depend on the optimistic/pessimistic bias and on the
intensity of choice of the evolutionary process. If both $b$ and
$\beta$ are suitably small, a unique steady state exists, whose
position is independent of $b$ and $\beta.$ Conversely, if either
$b$ or $\beta$ increases, two new biased steady states emerge,
whose position depends on such parameters and which coexist with the
previous one. With respect to it, the two new steady states respectively exhibit larger or
smaller values in the level of national income and price. This means that
in some sense, such final outcomes ``keep track'' of the
optimistic/pessimistic biased belief of the agents, and depart from the unbiased equilibrium.
The intuition for the existence of these biased steady states can be read as follows: when the realizations of the true fundamental value of the traded asset are close to some biased beliefs, if the intensity of choice is high enough, then almost all agents will adopt the optimistic or pessimistic predictor, which is the best performing predictor in terms of forecasting error, leading the dynamics to converge to a biased equilibrium. In the same way, when the intensity of choice is sufficiently large, even small differences in predictors' performances may lead agents to massively switch among forecasting rules. Such alternating waves of optimism and pessimism can drive the system to periodic cycles.\\
From the mathematical viewpoint, the emergence of additional steady states
follows a pitchfork bifurcation, whose consequent multistability
is further investigated through global analysis, in order to have a
complete picture of the possible dynamical scenarios. The
independence of the emergence (but not of the position) of the biased steady states from $\omega$ makes possible that the same economic
setting can simultaneously take advantage or be hindered from
isolated or fully integrated markets, with a consequent increase in
the level of national income and prices, or, conversely, a
reduction in the level of national income like, for instance, during a
financial shock within an integrated system. The latter framework can
be read in the wake of the recent financial crisis where, if on the one
hand an integrated economy may facilitate exchange and production
through the intermediation of financial instruments, on the other hand
no real benefits to the society deriving from the activities of
increased financialization may be appraised.

Carrying on the investigation and moving to a dynamical perspective, the
analysis we perform allows us to determine the local stability
conditions for each steady state, from which we infer a possible
ambiguous role for each relevant parameter. Deepening the
investigation with the help of numerical simulations, we show that
periodic, quasi-periodic and complex dynamics can occur, from a global viewpoint, either when a
unique or three steady states exist. Accordingly,
oscillations arising in our model are endogenously generated and
stoked by the acceleration principle acting in the real subsystem
which, in turn, may be triggered by the nonlinearity of the real subsystem,
as found in
\cite{naim1} too. In addition, we show the possible coexistence of pairs
of complex attractors, arising by the loss of stability of the two
new steady states, which give rise to endogenous fluctuations around
either the pessimistic or the optimistic steady state. We also find
that, even when the fundamental steady state is locally stable, other
attractors may coexist with it, making the choice of initial condition
crucial for the course of the resulting economic dynamics. Dealing with a purely deterministic version of the model, the convergence
toward a particular attractor is determined in advance by the
choice of the initial datum. Conversely, when a stochastic version of
the model is considered, in which the agents' demands are
buffeted with noise, trajectories can approach different
attractors from time to time, so that, for example, periods of
persistent optimistic and pessimistic fluctuations can interchange,
giving rise to alternating booms and busts. In general, the emergence
of stylized facts is mathematically connected with an increasing
complexity of the generated dynamics. On the contrary, we show that this is
possible in our model even in the presence of stable steady states, when the
pitchfork bifurcation occurs. That level of investigation turns out
to be relevant to reproduce realistic dynamics observed in the real
financial markets together with several stylized facts regarding the
return of stock prices (and output as well), such as positive
autocorrelation, volatility clustering and non-normal distribution
characterized by a high kurtosis and fat tails, with
dynamics resulting from the combination of nonlinear forces and
random shocks. Therefore, the analytical and numerical results we retrieve from
the deterministic framework flow into a better understanding of the
stochastic framework.

To summarize, a growing polarization of beliefs gives rise to an
increasing richness in the economic regimes that may originate due to the evolutive framework. In such regimes the optimistic and
pessimistic hallmark of the belief is not lost but it is mixed up by the complexity of the scenarios, and
marks the emerging regimes which are characterized by low, medium or high levels of economic activity.

The remainder of the paper is organized as follows: Section 2 describes the
model economy highlighting the features of the two markets; in Section 3 we present the
analytical results on the steady states and their local stability; in Section 4 the simulations
regarding the deterministic framework as well as the stochastic
version of the model are performed and analyzed; in Section 5 we draw some conclusions on our findings;
the Appendix gathers all the proofs.

\section{The model economy}\label{sec:model}
In the present section we shall introduce the baseline model made up
of two interacting sectors, i.e., the real market and the stock
market, which are separately outlined in the following subsections.
Since the real market we consider essentially consists in the same
closed economy taken into account in \cite{cnp17}, while the stock
market is very close to that depicted in \cite{naim2},
we limit ourselves to report the constitutive and distinctive elements of each
market, addressing the interested reader to the above-mentioned works
for a more detailed description. In what follows, we are making use of
sigmoid functions, namely functions
$g:\mathbb{R}\rightarrow\mathbb{R}$ that satisfy the following
assumptions:
\begin{subequations}
  \label{eq:sp}
  \begin{align}
    \bullet&\,\, \text{vanish at } z=0 \text{ and are strictly
      increasing}; \label{eq:spa}\\
    \bullet&\,\, \text{are convex on }
    (-\infty,0] \text{, concave on } [0,+\infty)  \text{ and attain
      maximum slope } 1 \text{ at } z=0;\label{eq:spb}\\
    \bullet&\,\, \text{are bounded above and below.}\label{eq:spc}
  \end{align}
\end{subequations}

\subsection{The real market}
As usual, at each time $t,$ the national income $Y_t$ can be expressed
through the macroeconomic equilibrium condition
\begin{equation}
  Y_t = C_t + G_t + I_t, \label{equil}
\end{equation}
in which we assume constant exogenous government expenditures
$G_t\equiv \bar{G}$ and aggregate consumption $C_t$ depending on
autonomous consumption $\bar{C}$ and on the last period national
income, setting $C_t=\bar{C} + cY_{t - 1},$ where $c \in (0, 1)$
represents the marginal propensity to consume. Private investments are
expressed through
\begin{equation}
  I_t = \bar{I}  + \gamma g_I (Y_{t - 1} - Y_{t - 2}) + \omega h P_{t - 1},
  \label{inv}
\end{equation}
where $\bar{I}$ are exogenous investments, while the last two terms
respectively encompass the accelerator principle and the dependence of
the real market on the stock sector, with $\omega\in[0,1]$
representing the degree of interaction between the two markets, $h>0$
the marginal propensity to invest from the stock market wealth and
$P_{t-1}$ the price of the financial asset. As noted in \cite{west}, a
better performance of the financial side of the economy, here
encompassed in $P,$ has a beneficial effect on the patrimonial
situation of the private sector, promoting greater investments.\\
Concerning the second term, $\gamma > 0$ is the accelerator parameter
and $g_I:\mathbb{R}\rightarrow\mathbb{R}$ is a function that satisfies
assumptions \eqref{eq:sp}. Thanks to assumption \eqref{eq:spa}, if the
national income raises or decline, after a gestation lag, the
component of investments depending on national income variation
$\Delta Y_{t-1}$ increases or decreases accordingly, by a bounded
quantity, thanks to assumption \eqref{eq:spc}. Finally, thanks to
assumption \eqref{eq:spb}, the effect of both positive and negative
national income variations are less and less significant as
$|\Delta Y_{t-1}|\rightarrow+\infty,$  attaining its
maximum at $\Delta Y_{t-1}=0,$ where the slope coincides with
accelerator $\gamma.$ We stress that all the previous aspects are in
line with the classic macroeconomic literature of the 1930s-1950s (see
e.g.  \cite{goodwin,kaldor,kalecki}).

Taking into account the expressions for the government expenditure,
aggregate consumption and private investments, from \eqref{equil} we
obtain
\[
Y_{t}=A+cY_{t-1}+\gamma g_I(Y_{t-1}-Y_{t-2}) +\omega hP_{t-1},
\]
where we defined $A\equiv\bar{C}+\bar{I}+\bar{G}$ as the sum of the
autonomous components. We stress that if $\omega = 0$ the stock market
has no influence on the national income, while if $\omega = 1$ the
two markets are fully integrated.
\subsection{The stock market}
We assume that the stock market is populated by fundamentalists, who
buy (sell) the asset if its price is lower (higher) than the value
estimated as the fundamental asset price but, due to some form of
bounded rationality, do not know the true fundamental value of the
asset price, trying to form forecasts about it. On the basis of
these beliefs, they operate in the stock market. In particular,
we suppose there are two groups of fundamentalists: optimists, who
typically overestimate the true fundamental asset price, and
pessimists, who underestimate it.  This approach has been adopted
before in \cite{degrauwe} and \cite{hommes}, assuming that
optimists (resp. pessimists) expect a constant price above
  (resp. below) the fundamental price. The asset price is set by a
market maker on the basis of the following nonlinear adjustment
mechanism
\begin{equation}
  P_{t}=P_{t-1}+\sigma g_P\left( \alpha _{t}D_{t-1}^{O}+\left( 1-\alpha_t \right)D_{t-1}^{P}\right), \label{mm}
\end{equation}
where $\alpha_{t}$ represents the fraction of optimistic
fundamentalists, $D_{t-1}^{O}$ and $D_{t-1}^{P}$ denote optimistic and
pessimistic demand, respectively, while $\sigma >0$ measures the
reactivity of the price variation to aggregate excess demand and
$g_P:\mathbb{R}\rightarrow\mathbb{R}$ is a function fulfilling
assumptions \eqref{eq:sp}. We stress that a similar price adjustment
mechanism has been deeply investigated in \cite{NP15}, where it was
introduced to prevent an overreaction of price variation to a large
excess demand with the consequent unrealistic uncontrolled growth or
negativity of prices. In the present contribution, an overreaction
would also lead the real sector variables to diverge or
to become economically meaningless.

The demand of optimistic and pessimistic agents is proportional to the
gap between the estimated fundamental asset price and the realized
price, i.e.,
$D_{t}^{O} =\mu \left( F_{t}^{O}-P_{t}\right),\, D_{t}^{P} =\mu \left(
  F_{t}^{P}-P_{t}\right),$
where $\mu>0$ represents the agents' reactivity to such gap. In an
environment in which agents, due to their bounded rationality, roughly
know the fundamental stock price, they can disagree about the correct
value of the fundamental price and can make different estimates on it. In particular, we assume that
$F_{t}^{O} =F_{t}^{\ast }+b$ and $F_{t}^{P} =F_{t}^{\ast }-b,$ where
$b$ is a positive parameter representing the bias on the
  fundamental price. On raising $b,$ agents become more and more
  distant in their beliefs, which are increasingly polarized
  toward high level of optimism/pessimism. Hence, parameter $b$
  also portrays the degree of agents' heterogeneity\footnote{As
    it will become evident along the paper, $b$ is one of the
    most relevant parameters from both the interpretative and the
    mathematical point of view. To keep the investigation analytically
    tractable and the interpretation more straightforward, we decided
    to limit ourselves to
    consider the symmetric situation in which $F_{t}^{O}$ and $F_{t}^{P}$ lie at the same distance
		$b$ from $F_{t}^{\ast }.$}.\\
As concerns the true, unobserved, fundamental value $F^{*}_{t},$ it is described by a
weighted average between an exogenous fundamental value $F^{*}$ and an
endogenous component proportional to the national income $Y_t,$
resulting in $F_t^{\ast } = (1 - \omega) F^{\ast } + \omega dY_t,$ where
$d>0$ captures the strength of the linear dependence between the
fundamental value and the current national income level.  In
this way, $F_t^{\ast }$ is linked to the course of the real economy
through $\omega \in [0, 1],$ i.e., the interaction degree intensity
parameter which has been introduced for the investments' function in
\eqref{inv}. It is worth noticing that if $\omega =0$ the true
unobserved fundamental value is completely exogenous, while if
$\omega =1$ it is endogenously
determined by national income, as in \cite{west}.

Therefore, the demand functions for both optimists and pessimists become
\[
\begin{split}
D_{t}^{O} &=\mu \left( \left( 1-\omega \right) F^{\ast }+\omega
dY_{t}+b-P_{t}\right) \\
D_{t}^{P} &=\mu \left( \left( 1-\omega \right) F^{\ast }+\omega
dY_{t}-b-P_{t}\right),
\end{split}
\]
and, replacing them into the price equation \eqref{mm}, we obtain the
following expression for the price
  dynamics
\[
P_{t}=P_{t-1}+\sigma g_P\big(\mu \left( \left( 1-\omega \right) F^{\ast }+\omega
dY_{t-1}-P_{t-1}+b\left( 2\alpha _{t}-1\right) \right)\big).
\]

In order to discipline the evolution of the agents' share employing
a certain speculative rule, we assume that a fraction of optimistic
agents can switch from period to period to the other behavior, and
vice versa. In the present model, differently from \cite{naim2} and similarly to \cite{CNP17ee}, such
an evolutionary process is governed by the squared forecasting error
evaluated on the last realization of the fundamental value and of the
 stock price, namely
\[
SE_{t}^{i}=\left( F_{t-1}^{i}-P_{t-1}\right)^{2}, \hspace{5mm} i\in\{O,P\}.
\]
Following \cite{brock}, we assume the fractions evolve according to a
multinomial logit
rule, that is
\[
\alpha _{t}=\frac{e^{-\beta \left(
F_{t-1}^{O}-P_{t-1}\right) ^{2}}}{e^{-\beta \left( F_{t-1}^{O}-P_{t-1}\right) ^{2}}+e^{-\beta
\left( F_{t-1}^{P}-P_{t-1}\right) ^{2}}},
\]
where the positive parameter $\beta,$ also called intensity of choice, measures how fast the mass of optimistic fundamentalists
will switch to the optimal prediction strategy. In the limit case
$\beta=0$, fractions will be constant and equal to 0.5, and traders never
switch strategy; in the other extreme case $\beta\rightarrow+\infty$,
in each period all traders use the same, optimal strategy. The
latter case represents the highest possible degree of rationality with respect to
strategy selection based upon past performance in a heterogeneous
agents environment (see \cite{hommes} for further
discussion on the topic). Moreover, $\beta$ also encompasses the
  agents' perception of the beliefs relevance. Independently of the moderate or strong polarization of the
  beliefs (small or large value of $b$), if the agents give a small relevance
  to them ($\beta$ is small), they will more likely stick to their
  current forecasting behavior, while if they give a serious
  consideration to them ($\beta$ is large), their tendency to trust and
  switch to the best performing belief will result strengthened.

In so doing, the equation that describes the dynamics of the asset
price reads as
\[
P_{t}=P_{t-1}+\sigma g_P\left(\mu \left( \left( 1-\omega \right) F^{\ast }+\omega
dY_{t-1}-P_{t-1}+b\left( \frac{2}{1+e^{-4b\beta \left( P_{t-1}-\left(
1-\omega\right) F^{\ast }-\omega dY_{t-1}\right) }}-1\right) \right)\right).
\]

Summarizing, defining $Z_{t}\equiv Y_{t-1}$ in view of the subsequent
analysis and taking into account both the {real and financial}
sectors, we can introduce the function $G=(G_1,G_2,G_3):\mathbb R_+^3\to\mathbb R^3,\,
({Y_t,P_t,Z_t})\mapsto(G_1(Y_t,P_t,Z_t),G_2(Y_t,P_t,Z_t),G_3(Y_t,P_t,Z_t)),$
which {describes} the functioning of the whole economy:

\begin{subequations}\label{map}
\begin{empheq}[left={\empheqlbrace\,}]{align}
  &Y_{t+1} =G_1(Y_t,P_t,Z_t) = A+cY_{t}+\gamma g_I(Y_{t}-Z_{t}) +\omega hP_{t},\label{mapY} \\
  &P_{t+1}=G_2(Y_t,P_t,Z_t)= P_{t}+\sigma g_P\big(\mu \left( \left(
      1-\omega \right) F^{\ast }+\omega
    dY_{t}-P_{t}+b\left( 2\alpha_{t+1}-1\right) \right)\big),\label{mapP}\\
  &Z_{t+1}=G_3(Y_t,P_t,Z_t)= Y_t,\label{mapZ}
\end{empheq}
\end{subequations}
where
\[
  \alpha_{t+1}=\dfrac{1}{1+e^{-4b\beta \left( P_{t}-\left( 1-\omega\right)
F^{\ast }-\omega dY_{t}\right) }}.
\]

\section{Analytical results on steady states and their stability}\label{sec:an}
In this section we investigate the existence of equilibria for the
model in \eqref{map}, studying the effect on the position and on the
stability of the steady states played by the relevant parameters,
paying specific attention to the role of the intensity of choice $\beta,$ of the bias $b$ and of the
interaction parameter $\omega.$ We
start by investigating the number and the expression of possible
steady states of the map in \eqref{map}.
\begin{proposition}\label{ss}
  System \eqref{map} has
  \begin{itemize}
  \item[a)] a unique steady state
    $S^{\ast}=(Y^{\ast}, P^{\ast}, Z^{\ast}) = \left(\frac{A+\omega \left(
          1-\omega \right) hF^{\ast }}{1-c-\omega^{2}dh},
      \frac{(1-\omega)(1-c) F^{\ast }+\omega d A}{1-c-\omega ^{2}dh},
      \frac{A+\omega \left( 1-\omega \right) hF^{\ast
        }}{1-c-\omega^{2}dh}\right)$
    if
    \begin{equation}\label{cond}
      b\le\frac{1}{\sqrt{2\beta}}.
    \end{equation}
    Such steady state is well defined for any interaction degree value
    $\omega\in [0,1]$ provided that
    \begin{equation}
      1 - c - h d > 0. \label{a1}
    \end{equation}
  \item[b)] three steady states
    $S^{\ast}=(Y^{\ast}, P^{\ast}, Z^{\ast}),\, S^{L}=(Y^{L}, P^{L},
    Z^{L})$
    and $S^{H}=(Y^{H}, P^{H}, Z^{H})$ if $b>\frac{1}{\sqrt{2\beta}}.$
In particular, $S^{L}$ and
    $S^{H}$ are symmetric w.r.t.  $S^{\ast},$ which is still well
    defined for any $\omega$ provided that \eqref{a1} is
    fulfilled. Moreover, it holds that $I^{L}<I^{\ast}<I^{H},\,$ for
    $I\in\{Y,\,P,\,Z\},$
    and 
    \begin{equation}
      \begin{gathered}
        P_{\ell}=P^{\ast} - \frac{1-c}{1 - c - \omega^2 dh} b < P^L < P^{\ast} < P^H <
        P^{\ast} + \frac{1 - c}{1 - c - \omega^2 dh} b=P_{h}.\\
        Y_{\ell}=Y^{\ast} - \frac{ h \omega}{1-c-\omega^2 d h }b < Y^L < Y^{\ast} < Y^H <
        Y^{\ast} + \frac{ h \omega}{1-c-\omega^2 d h }b=Y_{h}.
        \label{bounds}
      \end{gathered}
    \end{equation}

    All the components of such new steady states are strictly positive
    if
    \begin{equation}
      b<\min\left\{\dfrac{dA}{1-c},F^{\ast}\right\}. \label{a2}
    \end{equation}
  \end{itemize}
\end{proposition}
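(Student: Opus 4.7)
The plan is to reduce the fixed-point equations of the three-dimensional map to a single scalar equation, recognize the classical pitchfork structure, and then bootstrap the non-trivial roots into closed-form expressions for the remaining coordinates.

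First I would set $Y_{t+1}=Y_t$, $P_{t+1}=P_t$, $Z_{t+1}=Z_t$. Equation \eqref{mapZ} forces $Z=Y$, so $g_I(Y-Z)=g_I(0)=0$ by \eqref{eq:spa}, and \eqref{mapY} collapses to $(1-c)Y=A+\omega h P$. Since $g_P$ is strictly increasing and vanishes at $0$, \eqref{mapP} reduces to $(1-\omega)F^{\ast}+\omega d Y-P+b(2\alpha-1)=0$. The key observation is that the logit share can be rewritten using $\tanh$: setting $x:=P-(1-\omega)F^{\ast}-\omega dY$, one checks $2\alpha-1=\tanh(2b\beta x)$. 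Substituting into the price equation turns everything into the one-dimensional scalar equation
\[
\phi(x):=b\tanh(2b\beta x)-x=0.
\]

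Next I would analyse $\phi$. It is odd, $\phi(0)=0$, and $\phi'(x)=2b^{2}\beta\,\mathrm{sech}^{2}(2b\beta x)-1$ attains its maximum at $x=0$ where $\phi'(0)=2b^{2}\beta-1$. When $b\le 1/\sqrt{2\beta}$ we have $\phi'(x)\le 0$ everywhere, with strict inequality off the origin, so $\phi$ is strictly decreasing and $x=0$ is the only root; this gives case (a). When $b>1/\sqrt{2\beta}$ the origin becomes a local increase, and since $\phi$ is bounded (because $\tanh$ is) we have $\phi(x)\to\mp\infty$ as $x\to\pm\infty$, so exactly one positive root $x^{+}$ and its reflection $x^{-}=-x^{+}$ appear; this is the pitchfork. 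Recovering the coordinates is then linear algebra: from $(1-c)Y-\omega h P=A$ and $P-\omega dY=(1-\omega)F^{\ast}+x$ I solve
\[
Y=Y^{\ast}+\frac{\omega h}{1-c-\omega^{2}dh}\,x,\qquad P=P^{\ast}+\frac{1-c}{1-c-\omega^{2}dh}\,x,
\]
with $Y^{\ast},P^{\ast}$ obtained by plugging $x=0$. Assumption \eqref{a1} makes the denominator $1-c-\omega^{2}dh$ positive uniformly in $\omega\in[0,1]$ (since $\omega^{2}\le 1$), so all three steady states are well-defined; setting $x=\pm x^{+}$ yields $S^{H}$ and $S^{L}$, which are manifestly symmetric about $S^{\ast}$, with $Z^{H,L}=Y^{H,L}$.

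The inequalities in \eqref{bounds} drop out from the strict bound $|x^{+}|<b$ (because $|\tanh|<1$ strictly), which already gives $Y_{\ell}<Y^{L}<Y^{\ast}<Y^{H}<Y_{h}$ and the analogous chain for $P$, and guarantees $Y^{L}<Y^{\ast}<Y^{H}$ whenever $\omega>0$. The last task is positivity under \eqref{a2}. For $P_{\ell}$ I would write $(1-\omega)F^{\ast}+\omega\,dA/(1-c)>b$ as the convex combination $(1-\omega)(F^{\ast}-b)+\omega(dA/(1-c)-b)>0$, which is immediate from $b<F^{\ast}$ and $b<dA/(1-c)$. For $Y_{\ell}$ I would use \eqref{a1} to conclude $A>(1-c)b/d>hb\ge\omega h b$, and then $A+\omega(1-\omega)hF^{\ast}>\omega h b$ follows since the extra term is non-negative.

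The main obstacle is really the first step: seeing that the share $2\alpha-1$ is exactly $\tanh(2b\beta x)$ with $x$ being precisely the residual of the price equation, so that the full system collapses to a one-dimensional pitchfork problem. Once that reduction is in place, everything else is an essentially mechanical combination of linear algebra and the elementary estimate $|\tanh|<1$.
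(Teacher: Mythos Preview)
Your argument is correct and follows the same overall strategy as the paper: reduce the fixed-point system to a single scalar equation and count roots by comparing the slope of a sigmoid against that of a line at the central solution. The paper works directly in the variable $P$, writing the steady-state condition as $f_1(P)=f_2(P)$ with $f_1$ affine and $f_2$ a logistic sigmoid, and then argues via convexity/concavity of $f_2$ around its inflection point $P^{\ast}$; the threshold $f_1'(P^{\ast})\gtrless f_2'(P^{\ast})$ yields exactly condition \eqref{cond}, and symmetry of $P^{L},P^{H}$ about $P^{\ast}$ is checked by a direct computation of $f_i(P^{\ast}+\varepsilon)-f_i(P^{\ast})$. Your reformulation via the centered variable $x=P-(1-\omega)F^{\ast}-\omega dY$ and the identity $2\alpha-1=\tanh(2b\beta x)$ is a genuine streamlining: the scalar equation $\phi(x)=b\tanh(2b\beta x)-x=0$ is manifestly odd, so the symmetry of the biased steady states and the bounds $|x^{\pm}|<b$ come for free, and the linear back-substitution for $Y,P$ is cleaner than the paper's route through $f_1,f_2$. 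Interestingly, the paper itself adopts essentially your substitution (there called $W$) only later, in the proof of the comparative-statics result. One small remark: you correctly flag that $Y^{L}<Y^{\ast}<Y^{H}$ is strict only for $\omega>0$; the paper's proof glosses over this edge case.
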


From Proposition \ref{ss}, it is possible to have either one or three
steady states, whose existence is triggered by sufficiently large
values of $\beta$
and $b.$
Steady state $S^{\ast},$
in which $Y^{\ast}$
reduces to the same steady state of the Samuelson model when
$\omega=0,$ is in agreement with
those found in \cite{cnp17,naim3,naim2}, so that all the comments therein apply to
$S^{\ast},$ as well.

We stress that the steady state $S^{\ast}$ represents an \textit{unbiased}
  final outcome of the economic system, since it is affected neither by
the bias, nor by the evolutionary pressure.  Conversely,
  suitably strong beliefs (namely a suitably strong joint effect of
  belief polarization and relevance given to them by the agents)
  trigger the emergence of the two supplementary \textit{biased}
  steady states $S^L$ and $S^H.$ Moreover, recalling \eqref{bounds},
  $S^{\ast}$ consists in intermediate steady price and national
  income levels, while $S^L$ and $S^H$ represent polarized steady states,
  with respectively small and large steady values. Proposition
  \ref{ss} then shows that sufficiently strong beliefs can drive,
  independently of the economic setting and its features, the
  system toward optimistic (like $S^H$) or pessimistic (like
  $S^L$) steady states. This result is absent in the related
  literature, where the only possibilities are given either by a unique steady state
  (\cite{cnp17,naim2}) or by always existing multiple steady states
  (\cite{west}), which are then independent of the
  underlying characteristics of the economy and of the agents. In
  agreement with a Keynesian line of thought, Proposition \ref{ss}
  clearly shows how beliefs, influencing the stock market performance
  that, in an integrated framework, has in turn a direct influence on
  the investments, can be the essential cause and driving force that
  introduces and leads to final outcomes consisting of persistent
  flourishing or depressed income levels. The optimistic/pessimistic
  polarization of beliefs, under the evolutionary perspective,
  reflects on the ``sentiment'' of the steady states, and can drive
  the dynamics of the economy toward a more optimistic or pessimistic
  final output. We stress that also in such situations, the
  intermediate outcome $S^{\ast}$ is still relevant, as it will become
  evident from the dynamical analysis, while in \cite{west} it
  attracts just a set with null measure.

  We recall that $b$ encompasses the boundedly rational nature of the
  agents, while the presence of the switching mechanism, that is
  crucial as well, is regulated by the parameter $\beta$, that is
  positively connected to the rationality of the agents in the
  selection of the forecasting rule, as the larger is the intensity
  of choice, the larger is the fraction of agents that ``rationally''
  turns to the speculative rule that proved to be the most effective
  in forecasting the actual stock price. In this sense, we can say that
	the emergence of the pessimistic and optimistic steady states
	is fostered by the joint action of an irrational component ($b$) and of a
	rational component ($\beta$) in agents' behavior.
	The roles of $b$ and $\beta,$
  which we already noted to be very interconnected from the
  interpretative point of view, can not be completely disentangled
  also from the mathematical viewpoint, and we will see that both
  parameters have a similar influence on the results.


From \eqref{bounds}, the optimistic or pessimistic divergence of
  the biased steady states from the unbiased one is potentially more
  relevant as the polarization of the beliefs increases. Our next
  level of investigation is then devoted to study the effect of the
  main parameters on the steady states position. Since
  $Z^i=Y^i,\,i\in\{\ast,H,L\}$ and $Z$ has not its own relevance from the
  economic point of view, we do not deal with it, recalling, however,
  that it behaves exactly as $Y.$

In order to have the steady state $S^{\ast}$ well defined for any
interaction degree value $\omega\in [0,1],$ in the remainder of the
paper we shall assume that \eqref{a1} is fulfilled, even when not
explicitly mentioned, as well as we also assume \eqref{a2} when
dealing with $S^H$ and $S^L$, to guarantee the economic meaningfulness
of each
steady state value.\\
In the following comparative static result, we investigate the effects
of $b$ and $\beta$ on the steady state values $Y^i$ and
$P^i,\,i\in\{\ast,H,L\}.$
\begin{proposition}\label{scbeta}
  Under assumption \eqref{a1}, $S^{\ast}$ remains
  unchanged on increasing $\beta$ or $b.$\\
  Let $b<\min\{dA/(1-c),F^{\ast }\}.$ 
  Then, on increasing $\beta\in(1/(2b^2),+\infty),$ we have that $P^H$
  and $Y^H$ are strictly increasing, while $P^L$ and $Y^L$ are
  strictly decreasing. Moreover,
  $\lim_{\beta\rightarrow+\infty} P^L(\beta)=P_{\ell}$ and
  $\lim_{\beta\rightarrow+\infty} P^H(\beta)=P_{h},$
  $\lim_{\beta\rightarrow+\infty} Y^L(\beta)=Y_{\ell}$ and
  $\lim_{\beta\rightarrow+\infty} Y^H(\beta)=Y_{h},$
  with $P_{\ell},\,Y_{\ell}$ and $P_h,\,Y_h$ representing respectively the lower and upper bounds defined in \eqref{bounds}.\\
  On increasing $b\in (1/{\sqrt{2\beta}},\min\{dA/(1-c),F^{\ast }\}),$
  we have that $P^H$ and $Y^H$ are strictly increasing, while $P^L$
  and $Y^L$ are strictly decreasing.
\end{proposition}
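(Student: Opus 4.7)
My plan is to reduce the three-dimensional steady-state problem to a single scalar equation in an auxiliary variable and then apply the implicit function theorem. First, at any fixed point $Y=Z$, so $g_I(Y-Z)=0$ and the first equation gives $Y=(A+\omega h P)/(1-c)$. Since $g_P$ is strictly increasing with $g_P(0)=0$, the price equation forces its argument to vanish. Introducing
\[
u := P-(1-\omega)F^{\ast}-\omega d Y,
\]
and observing that the logistic share satisfies $2\alpha-1=\tanh(2b\beta u)$, the steady-state condition collapses to the scalar equation
\[
F(u;b,\beta) := u - b\tanh(2b\beta u)=0.
\]
Eliminating $Y$ in the definition of $u$ yields the affine relations
\[
P = P^{\ast}+\frac{1-c}{1-c-\omega^{2}dh}\,u,\qquad Y = Y^{\ast}+\frac{\omega h}{1-c-\omega^{2}dh}\,u,
\]
whose coefficients are strictly positive thanks to \eqref{a1}. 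In particular $u=0$ recovers $S^{\ast}$, making the independence of $S^{\ast}$ from $b$ and $\beta$ manifest.

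Next, by the oddness of $u\mapsto b\tanh(2b\beta u)$, the nonzero solutions of the scalar equation come in symmetric pairs $\pm u^{H}$, with $u^{H}>0$ corresponding to $S^{H}$ and $-u^{H}$ to $S^{L}$; hence it suffices to track the sign of $\partial u^{H}/\partial\beta$ and $\partial u^{H}/\partial b$. At the positive root I would show $F_{u}(u^{H})>0$ by noting that $F_{u}(0)=1-2b^{2}\beta<0$ in the regime considered and $F(u)\to+\infty$ as $u\to+\infty$, so $F$ crosses zero from below at $u^{H}$. Direct differentiation gives
\[
F_{\beta}=-2b^{2}u\,\mathrm{sech}^{2}(2b\beta u),\qquad F_{b}=-\tanh(2b\beta u)-2b\beta u\,\mathrm{sech}^{2}(2b\beta u),
\]
both strictly negative at $u^{H}>0$. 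The implicit function theorem then yields $\partial u^{H}/\partial\beta>0$ and $\partial u^{H}/\partial b>0$, which, combined with the affine relations above, gives the strict monotonicity of $P^{H},Y^{H}$ in both parameters and, through $u^{L}=-u^{H}$, the opposite monotonicity of $P^{L},Y^{L}$.

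Finally, for the limits as $\beta\to+\infty$, $u^{H}(\beta)$ is monotone increasing and bounded above by $b$ (since $\tanh<1$), so it converges to some $\bar u\in(0,b]$; passing to the limit in the scalar equation and using $\tanh(2b\beta\bar u)\to 1$ yields $\bar u=b$. Substituting into the affine relations produces exactly $P_{h},P_{\ell},Y_{h},Y_{\ell}$. The main obstacle is the sign verification $F_{u}(u^{H})>0$ at the positive root: this is the key input that legitimizes the implicit function step and also pins down the correct branch of solutions, after which every other step reduces to routine symbolic manipulation.
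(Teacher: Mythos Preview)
Your argument is correct and reaches the same conclusions as the paper, but the route is somewhat different. Both proofs reduce the fixed-point system to the same scalar variable (your $u$ is exactly the paper's $W$) and the same affine links $P-P^{\ast}=\frac{1-c}{1-c-\omega^{2}dh}\,u$, $Y-Y^{\ast}=\frac{\omega h}{1-c-\omega^{2}dh}\,u$. From there the paper writes the scalar equation as $e^{-4b\beta W}=(b-W)/(b+W)$ and argues monotonicity by elementary graphical comparison: for $\beta_{1}<\beta_{2}$ the left-hand side shifts and the intersection moves in the required direction; for the $b$-dependence it uses the rewriting $e^{-4\beta W}=\bigl((b-W)/(b+W)\bigr)^{1/b}$ to again compare curves. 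You instead write the equation as $u=b\tanh(2b\beta u)$ and apply the implicit function theorem, computing $F_{u},F_{\beta},F_{b}$ explicitly and checking signs. Your approach is cleaner: it handles $b$ and $\beta$ uniformly, avoids the ad hoc $1/b$-power trick, and makes the branch identification transparent via the sign of $F_{u}$. The paper's approach is more elementary (no differentiability hypotheses on the implicit branch needed) and yields the limits $\beta\to\infty$ in essentially the same way you do. The one point in your plan that deserves a sentence more is the claim $F_{u}(u^{H})>0$: it follows because $F_{u}(u)=1-2b^{2}\beta\,\mathrm{sech}^{2}(2b\beta u)$ is strictly increasing on $(0,\infty)$, so $F$ has a single critical point there and the unique positive zero lies to its right; this is implicit in your ``crosses from below'' remark but worth stating.
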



The comparative static results of Proposition \ref{scbeta} deepen the
understating of the effect of the beliefs on the steady states. In Figure
\ref{fig:scbetab} we report the evolution of $Y^L,\,Y^{\ast}$ and
$Y^{H}$ on increasing $\beta$ and $b,$ setting
$c=d=h=0.5,A=F=10,\omega=1,$ and $b=0.5$ (Figure \ref{fig:scbetab} (A)),
 $\beta=1$ (Figure \ref{fig:scbetab} (B)). The values of $Y^{L}$
and $Y^{H}$ are obtained numerically solving an implicit
equation similar to \eqref{im}, but in terms of $Y,$ using \eqref{yp}. Indeed, $S^{\ast}$ is neither affected by the intensity of
choice nor by the bias. Conversely, not only the existence of $S^L$
and $S^H$ is affected by $\beta$ and $b,$ but also their values, and in particular their departure from the intermediate
  steady state $S^{\ast}.$ This means that the stronger are the beliefs (in terms of polarization or intensity of choice),
	the stronger get the optimistic/pessimistic connotation of the possible final
  outcomes. In agreement with \eqref{bounds}, the maximum
  possible deviation of $S^{L}$
and $S^{H}$ from $S^{\ast}$ is proportional to the belief
  polarization and the bounds for $P$ and $Y$ can actually be approached as
  $\beta\rightarrow +\infty$, that is agents immediately switch to the
  best performing behavior as if ``perfectly rational''.
\begin{figure}[t]
\begin{center}
  \begin{minipage}{0.4\textwidth}
    \centering
    \includegraphics[width=\textwidth,trim=1cm 0cm 2cm 0cm, clip=true]{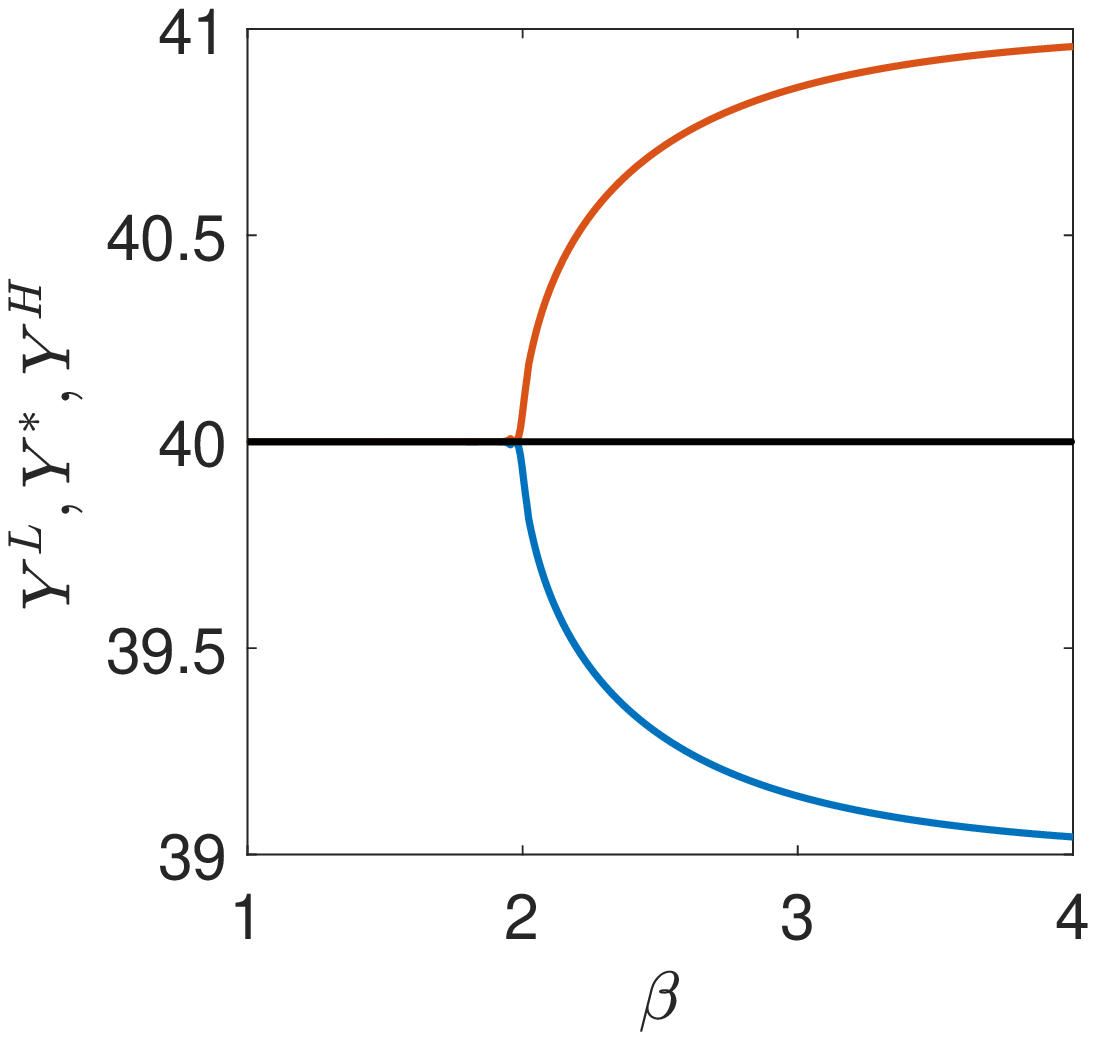}\\
    (A)
  \end{minipage}
  \begin{minipage}{0.4\textwidth}
    \centering
    \includegraphics[width=\textwidth,trim=1cm 0cm 2cm 0cm, clip=true]{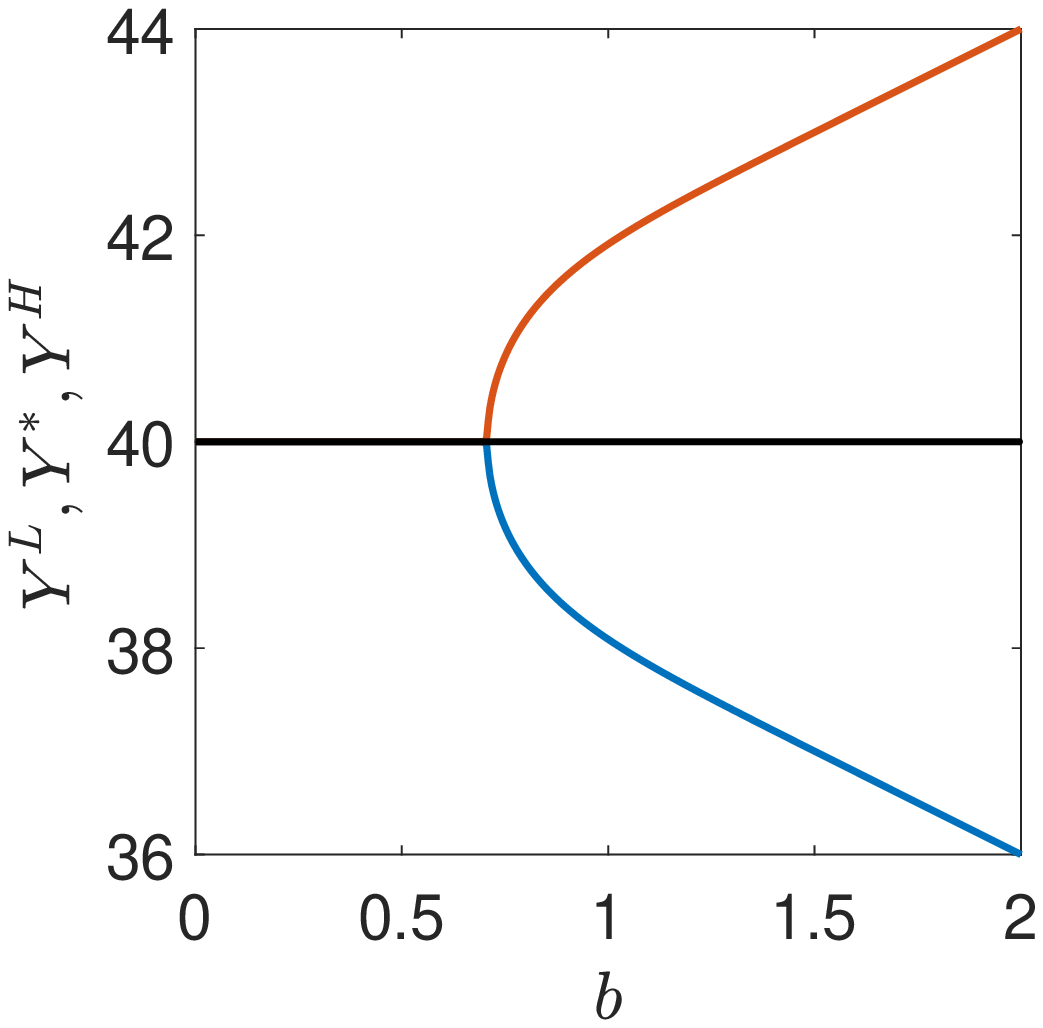}
    (B)
  \end{minipage}
\end{center}
\caption{Steady state national incomes $Y^{L}$ (blue), $Y^{\ast}$ (black) and $Y^{H}$
  (red) on varying $\beta$ (plot (A)) and $b$ (plot
  (B)).} \label{fig:scbetab}
\end{figure}

The next proposition studies the effects of $\omega$ on the steady
states.

\begin{proposition}\label{prop3}
  Under assumption \eqref{a1}, on increasing $\omega\in[0,1],$
  $P^{\ast}$
  can be either increasing, decreasing or there exists
  ${\omega}_{P^{\ast}}\in(0,1)$ such that $P^{\ast}$ decreases on
  $[0,{\omega}_{P^{\ast}})$ and increases on $({\omega}_{P^{\ast}},1].$
  On increasing $\omega\in[0,1],$
  $Y^{\ast}$
  can be either increasing or there exists
  ${\omega}_{Y^{\ast}}\in(0,1)$ such that $Y^{\ast}$ increases on
  $[0,{\omega}_{Y^{\ast}})$ and decreases on $({\omega}_{Y^{\ast}},1].$
  \\
  Let ${1}/{\sqrt{2\beta}}<b<\min\{dA/(1-c),F^{\ast }\}$ 
  and assume that
  \begin{equation}\label{ine}
    P^{L}(\omega)\ne P^{\ast} - \frac{1-c}{1 - c - {\omega}^2 d h} \sqrt{b^2 - \frac{1}{2
        \beta}},\quad\forall\omega\in (0,1).
  \end{equation}	
  Then, on increasing $\omega\in(0,1),$
  we have that, for $j\in\{L,H\},$
  $P^j$ can be either increasing, decreasing or there exists
  ${\omega}_{P^{j}}\in(0,1)$ such that $P^j$ decreases on
  $(0,{\omega}_{P^{j}})$ and increases on $({\omega}_{P^{j}},1).$
  On increasing $\omega\in(0,1),$ for $j\in\{L,H\},$
  $Y^{j}$
  can be either increasing or there exists
  ${\omega}_{Y^{j}}\in(0,1)$ such that $Y^{j}$ increases on
  $(0,{\omega}_{Y^{j}})$ and decreases on $({\omega}_{Y^{j}},1).$
\end{proposition}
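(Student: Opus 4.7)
The plan is to handle $S^{\ast}$ and the biased steady states $S^{L},S^{H}$ through the same underlying sign analysis of a derivative's numerator, but after rather different preliminary reductions.

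For $S^{\ast}$, the closed-form expressions from Proposition~\ref{ss} are directly available, so I would just differentiate. Setting $\kappa:=dA-(1-c)F^{\ast}$, the quotient rule yields
\begin{equation*}
(P^{\ast})'(\omega)=\frac{dh\kappa\,\omega^{2}+2dh(1-c)F^{\ast}\omega+(1-c)\kappa}{(1-c-\omega^{2}dh)^{2}},\qquad
(Y^{\ast})'(\omega)=\frac{dh^{2}F^{\ast}\omega^{2}+2h\kappa\,\omega+h(1-c)F^{\ast}}{(1-c-\omega^{2}dh)^{2}}.
\end{equation*}
Both denominators are positive under \eqref{a1}, so the signs are governed by quadratics in $\omega$. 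A case split on the sign of $\kappa$, combined with the Vieta observation that the product of the roots of either quadratic equals $(1-c)/(dh)>1$ (again by \eqref{a1}), shows that at most one real root can lie inside $(0,1)$. This delivers the three possibilities for $P^{\ast}$ in the statement, while for $(Y^{\ast})'$ the leading coefficient and constant term are both strictly positive, so only the two alternatives (monotone increasing or a single interior maximum) can occur.

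For $S^{L}$ and $S^{H}$ the key step is a reduction that renders the dependence on $\omega$ completely explicit. At a steady state of \eqref{map}, equation \eqref{mapY} yields the linear relation $(1-c)Y=A+\omega hP$, and the strict monotonicity of $g_{P}$ together with $g_{P}(0)=0$ forces the argument of $g_{P}$ in \eqref{mapP} to vanish. Setting $u:=P-(1-\omega)F^{\ast}-\omega dY$ and using $2\alpha-1=\tanh(2b\beta u)$, the steady-state condition reduces to the scalar transcendental equation $u=b\tanh(2b\beta u)$, whose non-trivial roots $\pm u_{0}$ depend on $b,\beta$ alone. Substituting the linear $Y$-relation into the definition of $u$ and comparing with the formula of $P^{\ast}$ produces the identity $u=\lambda(\omega)(P-P^{\ast}(\omega))$, with $\lambda(\omega):=(1-c-\omega^{2}dh)/(1-c)$. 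Consequently
\begin{equation*}
P^{H,L}(\omega)=P^{\ast}(\omega)\pm\frac{(1-c)u_{0}}{1-c-\omega^{2}dh},\qquad
Y^{H,L}(\omega)=Y^{\ast}(\omega)\pm\frac{\omega h u_{0}}{1-c-\omega^{2}dh},
\end{equation*}
with $u_{0}$ independent of $\omega$. Differentiating these closed forms, the numerators of $(P^{H,L})'$ and $(Y^{H,L})'$ are again quadratics in $\omega$ whose leading and constant coefficients coincide with those of the $S^{\ast}$ case, while the linear coefficient of $(P^{H,L})'$ becomes $2dh(1-c)(F^{\ast}\pm u_{0})$ (analogously for $Y^{H,L}$). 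Under \eqref{a2} one has $b<F^{\ast}$, and $u_{0}<b$ follows from $\tanh<1$, so $F^{\ast}\pm u_{0}>0$; the structural features used for $S^{\ast}$ carry over, and the same Vieta-based case split reproduces the stated trichotomy for $P^{L},P^{H},Y^{L},Y^{H}$.

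The main obstacle is the decoupling step: a direct implicit-function-theorem attack on the three-dimensional system would give derivatives involving $\omega$-dependent Jacobians that obscure the structural picture. The change of variables $u=\lambda(\omega)(P-P^{\ast})$ is what makes the transcendental equation independent of $\omega$ and places the biased roots symmetrically at $\pm u_{0}/\lambda(\omega)$ around $P^{\ast}$; once this is done, the remaining work is the same quadratic sign analysis as for $S^{\ast}$. The non-degeneracy assumption \eqref{ine}, equivalent to $u_{0}\ne\sqrt{b^{2}-1/(2\beta)}$, excludes a measure-zero locus in parameter space where the quadratic controlling $(P^{L})'$ would factor with a double root, a degenerate configuration in which the clean dichotomy between pure monotonicity and a single interior extremum would otherwise break down.
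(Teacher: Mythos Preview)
Your treatment of $S^{\ast}$ matches the paper's. For $S^{L},S^{H}$, however, you take a genuinely different and cleaner route. The paper applies the implicit function theorem to $f(P,\omega)=0$, obtaining $P'(\omega)=\bigl(Ad-F^{\ast}(1-c)+2P(\omega)dh\omega\bigr)/(1-c-dh\omega^{2})$, and then argues via the sign of $P'$, computes $P''$ at a critical point to check it is a minimum, and rules out further sign changes; an analogous but separate argument is then run for $Y^{L,H}$. Your key observation that the transcendental equation $u=b\tanh(2b\beta u)$ carries no $\omega$-dependence (this is the paper's equation \eqref{imp} in disguise) short-circuits all of that: it yields the closed forms $P^{H,L}(\omega)=P^{\ast}(\omega)\pm(1-c)u_{0}/(1-c-\omega^{2}dh)$, and the question reduces to the same quadratic sign analysis as for $S^{\ast}$, with the same root product $(1-c)/(dh)>1$. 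This is more direct and in fact does \emph{not} need assumption \eqref{ine}: that hypothesis is precisely what the paper requires for $\partial f/\partial P\ne 0$ in the IFT step, which your argument bypasses entirely. Your interpretation of \eqref{ine} as a double-root condition on the quadratic in $\omega$ is therefore incorrect; since $u_{0}$ is $\omega$-independent, \eqref{ine} collapses to the single condition $u_{0}\ne\sqrt{b^{2}-1/(2\beta)}$ on $(b,\beta)$ alone, and it plays no role in your proof. One small slip: for $(Y^{H,L})'$ the shift by $\pm u_{0}$ enters the leading and constant coefficients (they become $dh^{2}(F^{\ast}\pm u_{0})$ and $h(1-c)(F^{\ast}\pm u_{0})$), not the linear one as you state, but the structural features you use (both positive, root product $(1-c)/(dh)>1$) are preserved, so the conclusion stands.
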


\begin{figure}[t!]
\begin{center}
  \begin{minipage}{0.32\textwidth}
    \centering
    \includegraphics[width=\textwidth,trim=1.7cm 0cm 2.6cm 0cm, clip=true]{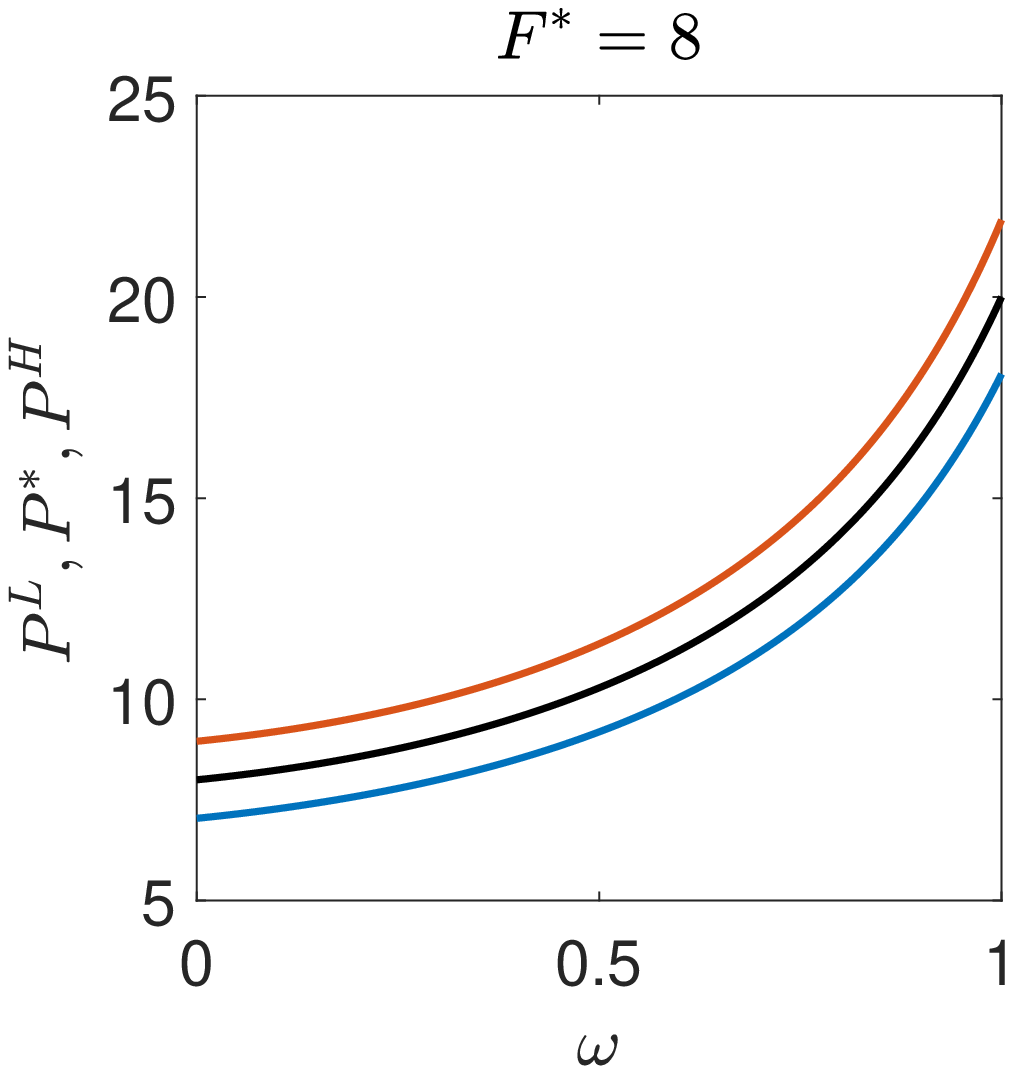}\\
    (A)
  \end{minipage}
  \begin{minipage}{0.32\textwidth}
    \centering
    \includegraphics[width=\textwidth,trim=1.7cm 0cm 2.6cm 0cm, clip=true]{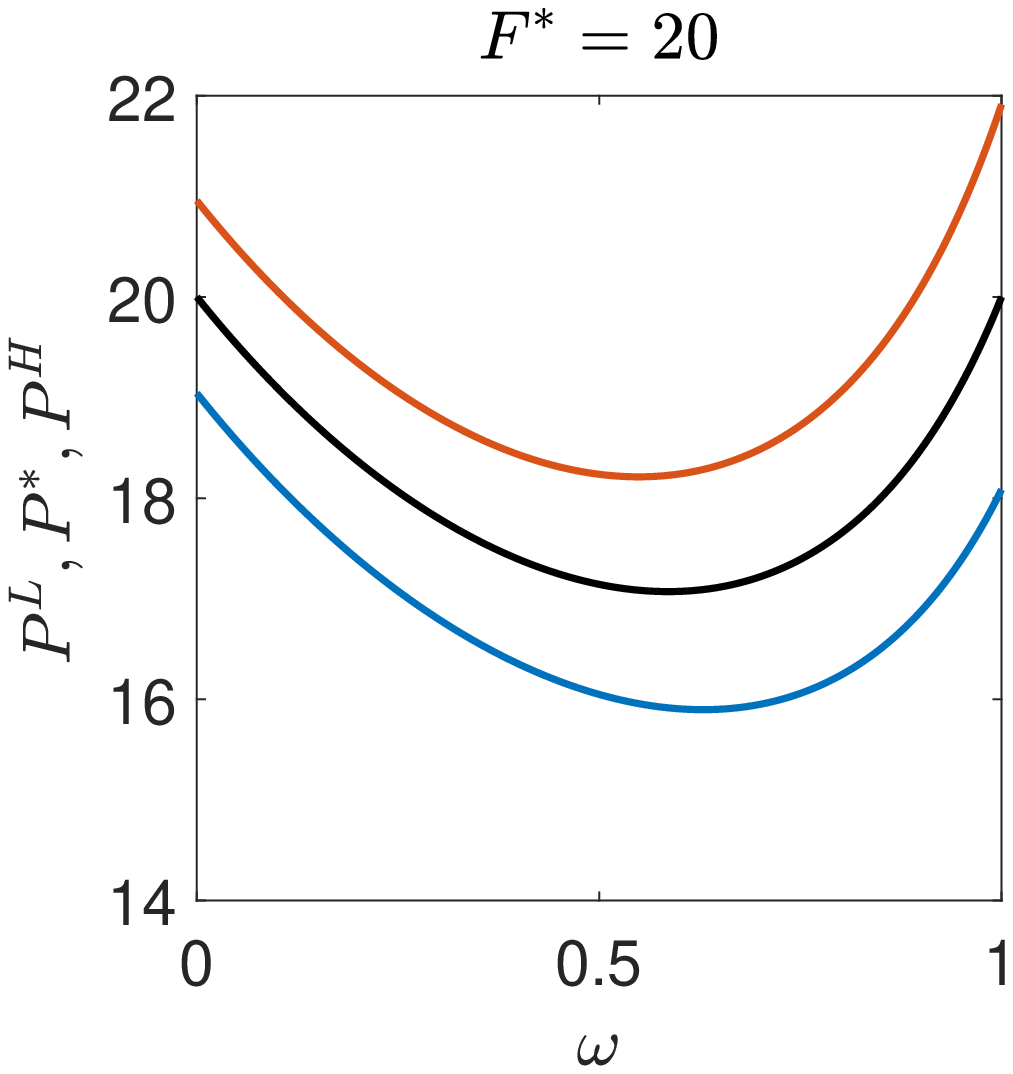}
    (B)
  \end{minipage}
  \begin{minipage}{0.32\textwidth}
    \centering
    \includegraphics[width=\textwidth,trim=1.7cm 0cm 2.6cm 0cm,
    clip=true]{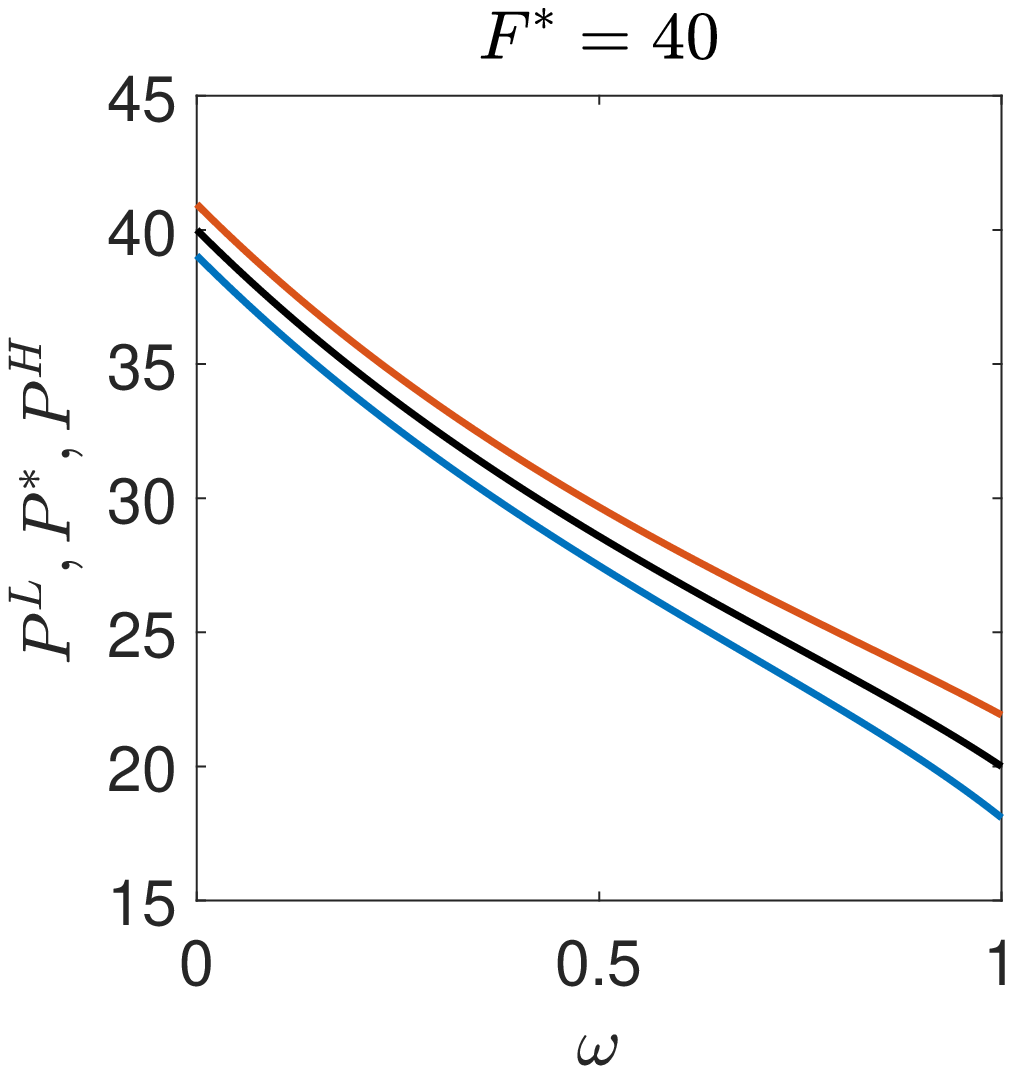}
    (C)
  \end{minipage}
\end{center}
\caption{Steady states $P^{L}$ (blue), $P^{\ast}$ (black) and $P^{H}$
  (red) on varying $\omega.$} \label{fig:scw}
\end{figure}

The previous proposition shows that the effect produced by the degree of interaction between
real and stock markets on the steady state values can be quite ambiguous. Since the
  value of $Y^{\ast}$ is the same as that resulting in \cite{cnp17} in
  the case of exogenous government expenditure, the unbiased steady
  national income is either increasing or concave unimodal, reaching a greater value in the case of completely
  integrated markets than in that of isolated ones (see
  \cite{cnp17}). Additionally, both optimistic and pessimistic steady
  national income can be either increasing or concave unimodal as well.

  Concerning prices, in Figure \ref{fig:scw} we report the evolution with respect to $\omega$
  of the steady state values $P^L,\,P^{\ast}$ and $P^H$ setting
  $A=10,c=d=h=0.5$ and $b=\beta=1,$ for different values of
  $F^{\ast}.$ We remark that $P^L$ and $P^H$ are computed numerically
  solving equation \eqref{im} through which they are implicitly defined. On
  increasing $\omega$ we may have situations in which the more the
  markets are integrated, the more price increases (Figure
  \ref{fig:scw} (A)), as well as opposite situations in which,
  increasing the market integration, price is penalized (Figure
  \ref{fig:scw} (C)), with intermediate scenarios in which a partial
  integration provides the lowest price (Figure \ref{fig:scw} (B)). In
  this last case, we stress a further element of ambiguity: if we
  compare the steady state values obtained for the same pair of
  stock and real markets when independent ($\omega=0$) and completely
  integrated ($\omega=1$), it is not possible to conclude which
    one has the largest prices. From Figure \ref{fig:scw} (B), it
  is evident that we can simultaneously have all the possible
  orderings between prices, as
  $P^L(0)>P^L(1),P^{\ast}(0)\approx P^{\ast}(1)$ and $P^H(0)<P^H(1).$\\
In the rest of the section, we focus on the local stability of the steady states. The
first group of results (see Propositions \ref{stabcond}--\ref{th:stabw}) deals with $S^{\ast}.$ We start providing
analytical conditions under which $S^{\ast}$ is locally asymptotically
stable.

\begin{proposition}\label{stabcond}
  Under assumption \eqref{a1}, System \eqref{map} is locally
  asymptotically stable at $S^{\ast}$ provided
  that
  \begin{subequations}\label{eq:sc}
    \begin{empheq}[left=\empheqlbrace]{align}
      &2b^2\beta-1<0\label{eq:sc1}\\
      &\left( 2-E\right) \left( 1+c+2\gamma\right) -\omega ^{2}dhE>0 \label{eq:sc2} \\
      &1-c+cE+\omega ^{2}dhE+\gamma c-E\gamma-E\gamma c+E^{2}\gamma -
      E^{2}\gamma^{2}+E\gamma^{2}-\gamma>0 \label{eq:sc3} \\
      &2\gamma+c-cE-\gamma E-\omega ^{2}dhE<3 \label{eq:sc4} %
    \end{empheq}%
  \end{subequations}
where $E=\mu\sigma(1-2b^2\beta).$
\end{proposition}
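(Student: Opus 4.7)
The plan is to linearize the map \eqref{map} at $S^{\ast}$, read off the characteristic polynomial of the Jacobian, and then apply the Jury (equivalently Schur--Cohn) stability criterion for cubic polynomials, matching each of the four resulting inequalities with \eqref{eq:sc1}--\eqref{eq:sc4}.

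First I would compute the Jacobian of $G=(G_1,G_2,G_3)$ at $S^{\ast}$. Three simplifications occur here: by $Y^{\ast}=Z^{\ast}$ the argument of $g_I$ vanishes, by $P^{\ast}=(1-\omega)F^{\ast}+\omega dY^{\ast}$ together with $\alpha_{|S^{\ast}}=1/2$ the argument of $g_P$ vanishes, and by \eqref{eq:spb} one has $g_I'(0)=g_P'(0)=1$. Differentiating the logit expression for $\alpha_{t+1}$ and evaluating at the steady state (the factor $1/4$ coming from the derivative of the standard sigmoid at $0$) yields $\partial\alpha/\partial P_t=b\beta$ and $\partial\alpha/\partial Y_t=-b\beta\omega d$. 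Substituting and setting $E=\mu\sigma(1-2b^2\beta)$, the Jacobian takes the form
\[
J(S^{\ast})=\begin{pmatrix} c+\gamma & \omega h & -\gamma\\ \omega d E & 1-E & 0\\ 1 & 0 & 0\end{pmatrix}.
\]
Expanding $\det(\lambda I-J(S^{\ast}))$ along the third row produces the characteristic polynomial
\[
p(\lambda)=\lambda^3+(E-1-c-\gamma)\,\lambda^2+\bigl(c+2\gamma-(c+\gamma+\omega^2 dh)E\bigr)\lambda-\gamma(1-E).
\]

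Local asymptotic stability of $S^{\ast}$ is then equivalent to all roots of $p$ lying inside the unit disk, which, for a cubic $p(\lambda)=\lambda^3+a_2\lambda^2+a_1\lambda+a_0$, is characterised by the four Jury inequalities $p(1)>0$, $-p(-1)>0$, $|a_0|<1$, and $|a_1-a_0a_2|<1-a_0^2$. A quick computation gives $p(1)=E(1-c-\omega^2 dh)$; under \eqref{a1} the factor $1-c-\omega^2 dh$ is positive, so $p(1)>0$ is equivalent to $E>0$, namely \eqref{eq:sc1}. Similarly $-p(-1)$ collapses to $(2-E)(1+c+2\gamma)-\omega^2 dhE$, which is \eqref{eq:sc2}. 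The inequalities \eqref{eq:sc3} and \eqref{eq:sc4} will then emerge from the two remaining Jury conditions after expanding $1-a_0^2$, $a_0a_2$ and $a_1$ in terms of $c,\gamma,E,\omega,d,h$ and collecting powers of $E$.

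The main obstacle is purely algebraic: because $a_0=-\gamma(1-E)$ and $a_2=E-1-c-\gamma$ each carry a factor involving $(1-E)$, the products $a_0^2$ and $a_0a_2$ generate quadratic terms in $E$ which must be carefully combined with the linear term $a_1$, and rearranged so as to recognise the precise polynomial forms stated in \eqref{eq:sc3}--\eqref{eq:sc4}. By contrast, the Jacobian computation is routine once the cancellations induced by $\alpha^{\ast}=1/2$ and by the vanishing of the sigmoid arguments at $S^{\ast}$ are exploited, and the identifications for $p(1)$ and $-p(-1)$ fall out after a few lines of rearrangement.
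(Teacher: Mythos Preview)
Your linearization and computation of the characteristic polynomial are correct and coincide exactly with the paper's; in particular the identifications of $p(1)>0$ with \eqref{eq:sc1} (under \eqref{a1}) and of $-p(-1)>0$ with \eqref{eq:sc2} are fine. The difficulty lies in the last step. The paper does not use the standard Jury/Schur--Cohn pair $|a_0|<1$ and $|a_1-a_0a_2|<1-a_0^2$, but rather the \emph{Farebrother} simplified conditions for a cubic \cite{farebrother}, whose third and fourth inequalities read
\[
1-C_2+C_1C_3-C_3^{\,2}>0,\qquad C_2<3,
\]
with $C_1,C_2,C_3$ the coefficients of $\lambda^2,\lambda,\lambda^0$. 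Expanding these for the polynomial you (correctly) computed gives \eqref{eq:sc3} and \eqref{eq:sc4} immediately.

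With your choice of criterion, one half of the double inequality $|a_1-a_0a_2|<1-a_0^2$ does collapse to $1-C_2+C_1C_3-C_3^{\,2}>0$, hence to \eqref{eq:sc3}. But neither the other half (which becomes $1+C_2-C_1C_3-C_3^{\,2}>0$) nor $|a_0|<1$ (which here reads $|\gamma(1-E)|<1$) produces the condition $C_2<3$ that is \eqref{eq:sc4}. So the algebra you flag as the ``main obstacle'' will in fact not deliver \eqref{eq:sc4} as written; you would at best obtain an equivalent but different-looking system of inequalities. To prove the proposition exactly as stated you should switch to the Farebrother formulation, whose fourth condition is precisely $C_2<3$.
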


Before making explicit from \eqref{eq:sc} the roles of the interaction
  degree, of the intensity of choice and of the bias on the stability of $S^{\ast },$
  we note that condition \eqref{eq:sc1} actually coincides with \eqref{cond}, whose
  violation leads to the emergence of two more steady states. As it will become evident
  also from the simulations, when $S^{\ast}$ loses stability because
  of \eqref{eq:sc1}, a pitchfork bifurcation occurs and two stable
  steady
  states $S^L$ and $S^H$ emerge.\\
  We now turn our attention to the roles of $\beta$ and $b$. We
    highlight that a neutral effect on a steady state is realized when it
    is locally asymptotically stable/unstable independently of the
    parameter values; a stabilizing/destabilizing effect occurs when
    the steady state is locally asymptotically stable only above/below
    a given threshold value of the considered parameter and unstable
    otherwise; a mixed effect arises when the steady state is locally
    asymptotically stable only for intermediate parameter values,
    between two stability thresholds, and unstable otherwise.

\begin{proposition}\label{th:stabbetab}
  Under assumption \eqref{a1}, increasing $b$ or $\beta$ can have a
  mixed, destabilizing or neutral effect on $S^{\ast }$.
\end{proposition}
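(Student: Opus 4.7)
The key observation is that $b$ and $\beta$ enter the stability conditions \eqref{eq:sc1}-\eqref{eq:sc4} only through the composite quantity $E=\mu\sigma(1-2b^2\beta)$, and $E$ is a continuous strictly decreasing function of $b$ (for $\beta>0$ fixed) and of $\beta$ (for $b>0$ fixed). The plan is thus to reduce the question to a one-dimensional analysis in $E$: determine the subset $\mathcal{E}$ of the real line on which all four stability conditions simultaneously hold (with the remaining parameters treated as fixed), then track how the image of $b\mapsto E(b)$ (respectively $\beta\mapsto E(\beta)$) intersects $\mathcal{E}$ as $b$ (or $\beta$) grows from $0$ to $+\infty$. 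Monotonicity of $E$ will translate each possible topology of this intersection into a different stability scenario for $S^{\ast}$.

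First I would examine the four conditions separately as functions of $E$. Condition \eqref{eq:sc1} is equivalent to $E>0$, i.e.\ it fails exactly when $2b^2\beta>1$, consistently with the pitchfork threshold of Proposition \ref{ss}. Conditions \eqref{eq:sc2} and \eqref{eq:sc4} are linear in $E$ and so each defines a half-line, while \eqref{eq:sc3} is quadratic in $E$ and, depending on the sign of its leading coefficient and of its discriminant, cuts out either a bounded interval, the complement of a bounded interval, a half-line, the whole line, or the empty set. Intersecting the resulting regions with $(0,\mu\sigma]$ yields the admissible range $\mathcal{E}$.

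Next I would classify the possible effects on $S^{\ast}$. Since $E(b)$ (respectively $E(\beta)$) is a strictly decreasing bijection from $[0,+\infty)$ onto $(-\infty,\mu\sigma]$, the set of parameter values at which $S^{\ast}$ is locally asymptotically stable is the preimage of $\mathcal{E}$. Three structures are then possible: a half-line of the form $[0,b^{\sharp})$, giving a destabilizing effect (stable for small values, unstable thereafter); a bounded interval $(b_1,b_2)$ with $0<b_1<b_2$, giving a mixed effect (unstable, then stable on an intermediate window, then unstable again); and the empty set, giving a neutral effect of permanent instability. Note that no purely stabilizing scenario can arise, since a large enough $b$ or $\beta$ always forces $E<0$ and hence violates \eqref{eq:sc1}.

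To complete the proof I would exhibit explicit admissible parameter choices realizing each of the three structures. The destabilizing and neutral cases are the easy ones: for the former, pick the remaining parameters so that \eqref{eq:sc2}-\eqref{eq:sc4} all hold at $E=\mu\sigma$; for the latter, choose them so that one of \eqref{eq:sc2}-\eqref{eq:sc4} fails for every $E\in(0,\mu\sigma]$ (for instance by enlarging $\gamma$ in \eqref{eq:sc4}). The main obstacle is the mixed case, where one must tune the coefficients so that the $E$-window carved out by \eqref{eq:sc2}-\eqref{eq:sc4} is a bounded interval lying strictly inside $(0,\mu\sigma)$; this typically requires exploiting the quadratic \eqref{eq:sc3}, whose upper root has to be engineered below $\mu\sigma$ while its lower root stays strictly positive. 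Once this construction is pinned down, the three effect types are established and the proposition follows from the monotonicity of $E$ in $b$ and $\beta$.
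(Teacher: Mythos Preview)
Your reduction to the single variable $E=\mu\sigma(1-2b^2\beta)$ and the monotonicity argument match the paper's proof exactly, and your classification of the three possible scenarios (including the observation that a purely stabilizing effect is ruled out by \eqref{eq:sc1}) is correct. The difference lies in how the individual scenarios are realized. The paper proceeds by a systematic case split on $\gamma$: for $\gamma\le 1$ it shows that \eqref{eq:sc3} is automatically satisfied for every $E>0$, so the stability window is simply $(0,E_2)$ with $E_2=\frac{2(1+c+2\gamma)}{dh\omega^2+1+c+2\gamma}$, and the destabilizing versus mixed distinction reduces to whether $\mu\sigma<E_2$ or $\mu\sigma>E_2$; the unconditionally unstable case is obtained for $\gamma>1$ with the discriminant of \eqref{eq:sc3} negative. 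Your plan instead targets \eqref{eq:sc3} for the mixed case and \eqref{eq:sc4} for the neutral case. The former does work, but only for $\gamma>1$ (for $\gamma<1$ the convex parabola in \eqref{eq:sc3} never cuts into the positive $E$-axis), whereas the paper's route via $\mu\sigma>E_2$ is simpler and available for all $\gamma$. Your suggestion of ``enlarging $\gamma$ in \eqref{eq:sc4}'' for the neutral case needs care: since $E_4=\frac{2\gamma+c-3}{dh\omega^2+c+\gamma}<2$ for every $\gamma$, this only forces failure on all of $(0,\mu\sigma]$ when $\mu\sigma<2$, so you would also need to constrain $\mu\sigma$; the paper's choice (negative discriminant in \eqref{eq:sc3} with $\gamma>1$) avoids this restriction. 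Finally, you implicitly assume that $\mathcal{E}$ is connected when listing the three structures; the paper's case analysis establishes this, but in your outline it deserves a line of justification, since \eqref{eq:sc3} could in principle carve out the complement of an interval.
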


The previous result points out that the bias and the intensity of
choice have the same qualitative effect on the stability of
$S^{\ast}.$ If the polarization and
  the relevance of the beliefs is sufficiently increased, $S^{\ast}$ necessarily loses
  stability. Indeed, from the proof of Proposition \ref{th:stabbetab} we find that $S^{\ast}$
	can be unstable for any values of $b$ and $\beta,$ but it cannot be always locally asymptotically
	stable. The unbiased steady state
  is then weakened when increasing heterogeneity of the beliefs
  as it becomes surrounded by the
  new potentially attracting polarized steady states and because it becomes unstable, although from a static viewpoint its position is not affected by $b$ and $\beta.$
However, as we will see in Section 5, the role of the
   unbiased steady state, when unstable, may still be significant.


In the following result, we describe the possible stability scenarios
 arising for $S^{\ast}$ when $\omega$ varies.

 \begin{proposition} \label{th:stabw} Under assumption \eqref{a1},
    increasing $\omega$ on $[0,1]$ can have a neutral,
     destabilizing, stabilizing or mixed effect on $S^{\ast}.$
\end{proposition}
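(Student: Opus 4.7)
My plan is to track how the four stability conditions \eqref{eq:sc1}--\eqref{eq:sc4} respond to changes in $\omega$. The first observation is that condition \eqref{eq:sc1} and the quantity $E=\mu\sigma(1-2b^2\beta)$ do not depend on $\omega$ at all, so $\omega$ enters only through the single term $\omega^2 d h E$ that appears in each of \eqref{eq:sc2}, \eqref{eq:sc3}, \eqref{eq:sc4}. Assuming \eqref{eq:sc1} holds (so $E>0$) and recalling \eqref{a1}, the map $\omega\mapsto \omega^2$ is strictly increasing on $[0,1]$, so I can read off monotonicity: the left-hand side of \eqref{eq:sc2} is strictly decreasing in $\omega$, the left-hand side of \eqref{eq:sc3} is strictly increasing in $\omega$, and the left-hand side of \eqref{eq:sc4} is strictly decreasing in $\omega$ (hence \eqref{eq:sc4} becomes easier to satisfy as $\omega$ grows).

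Next I would describe the stability region in $\omega$ as the intersection of the three subintervals of $[0,1]$ on which the conditions hold. By monotonicity, \eqref{eq:sc2} is satisfied on some set of the form $[0,\omega_2]$ (which may be empty or all of $[0,1]$), while \eqref{eq:sc3} and \eqref{eq:sc4} are satisfied on sets of the form $[\omega_3,1]$ and $[\omega_4,1]$. Setting $\underline{\omega}=\max\{\omega_3,\omega_4\}$ and $\overline{\omega}=\omega_2$, the stability region of $S^{\ast}$ is essentially $[\underline{\omega},\overline{\omega}]\cap[0,1]$ (modulo the trivial empty/full cases). This structure immediately produces the four qualitative behaviors claimed in the proposition.

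To establish existence of each effect I would exhibit parameter configurations realizing them: a \emph{neutral} regime by fixing $c,\gamma,\mu,\sigma,b,\beta,d,h$ so that all of \eqref{eq:sc2}--\eqref{eq:sc4} hold strictly at $\omega=1$ (hence for all $\omega\in[0,1]$ by the monotonicity), or so that one of them fails at $\omega=0$; a \emph{destabilizing} regime by taking \eqref{eq:sc3} and \eqref{eq:sc4} comfortably satisfied while \eqref{eq:sc2} just holds at $\omega=0$ and is violated at $\omega=1$ (easy, since the coefficient of $-\omega^2$ in \eqref{eq:sc2} can be made as large as desired through $d$, $h$, or $E$); a \emph{stabilizing} regime by the symmetric construction, picking $\gamma$ large enough to violate \eqref{eq:sc4} at $\omega=0$ while keeping \eqref{eq:sc2} comfortably above zero at $\omega=1$; and a \emph{mixed} regime by combining both mechanisms, so that \eqref{eq:sc3} or \eqref{eq:sc4} fails at $\omega=0$ but begins to hold for $\omega>\underline{\omega}$ and \eqref{eq:sc2} ceases to hold for $\omega>\overline{\omega}$, with $0<\underline{\omega}<\overline{\omega}<1$.

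The main obstacle is the \emph{mixed} case: I must simultaneously secure two thresholds $\underline{\omega}<\overline{\omega}$ inside $(0,1)$ while also preserving \eqref{a1} (i.e.\ $1-c-hd>0$) and \eqref{eq:sc1}. Since the thresholds depend continuously on all free parameters and each condition has a distinct functional form in $(c,\gamma,\mu\sigma,dh)$, a small perturbation argument starting from a configuration that places, say, \eqref{eq:sc3} marginally violated at $\omega=0$ and \eqref{eq:sc2} marginally satisfied at $\omega=1$ yields the desired configuration. Because $1-c-hd>0$ forces $\omega^2 dh<1-c$ throughout, the well-posedness of $S^{\ast}$ is automatic, so the construction is routine once the monotonicity analysis above is in place; the rest of the proof is then a matter of writing down representative numerical parameter sets (which can be verified a posteriori) to certify each of the four scenarios.
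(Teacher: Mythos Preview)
Your approach is essentially the same as the paper's: both exploit that $\omega$ enters only through the monotone term $\omega^2dhE$, deduce that each of \eqref{eq:sc2}--\eqref{eq:sc4} is satisfied on a half-interval of $[0,1]$, conclude that the stability set is a (possibly degenerate) subinterval, and then exhibit parameter configurations realizing each of the four behaviors. Two small points: the paper obtains the stabilizing scenario from condition \eqref{eq:sc3} (taking $\gamma\in[1,(3-c)/2)$, where \eqref{eq:sc4} is automatically satisfied) rather than from \eqref{eq:sc4} as you propose, and it gives explicit parameter ranges instead of a perturbation argument. Also, your remark that the coefficient of $-\omega^2$ in \eqref{eq:sc2} ``can be made as large as desired through $d$, $h$, or $E$'' is not quite right, since \eqref{a1} forces $dh<1-c$ and \eqref{eq:sc2} at $\omega=0$ forces $E<2$; the construction still goes through, but the bound has to be checked rather than waved away.
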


The results obtained in Proposition \ref{th:stabw} on the role of $\omega$ are
in agreement with those reported in \cite{naim2}, namely the possible
scenarios are the same, proving once more the ambiguous role of the
interaction degree. It is then not possible to conclude that either
encouraging or dampening the integration between the two sides of the
economy improves the stability of the markets.

To summarize the findings about the stability of $S^{\ast},$ in
Figure \ref{fig:sr} we report three different stability regions in the
$(\beta,\omega)$-plane, obtained setting $c=d=h=0.38,b=0.5$ and
$\mu=1,$ and changing $\sigma$ and $\gamma$ from time to time in order
  to highlight different stability configurations. Parameters'
combinations for which $S^{\ast}$ is locally asymptotically stable are
represented using yellow color, while cyan color is used for the
instability region. The stability thresholds corresponding to
conditions \eqref{eq:sc1}, \eqref{eq:sc2} and \eqref{eq:sc3} are
represented by solid, dashed and dotted black lines, respectively.
\begin{figure}[t]
\begin{center}
  \begin{minipage}{0.32\textwidth}
    \centering
    \includegraphics[width=\textwidth,trim=1.7cm 0cm 2.6cm 0cm, clip=true]{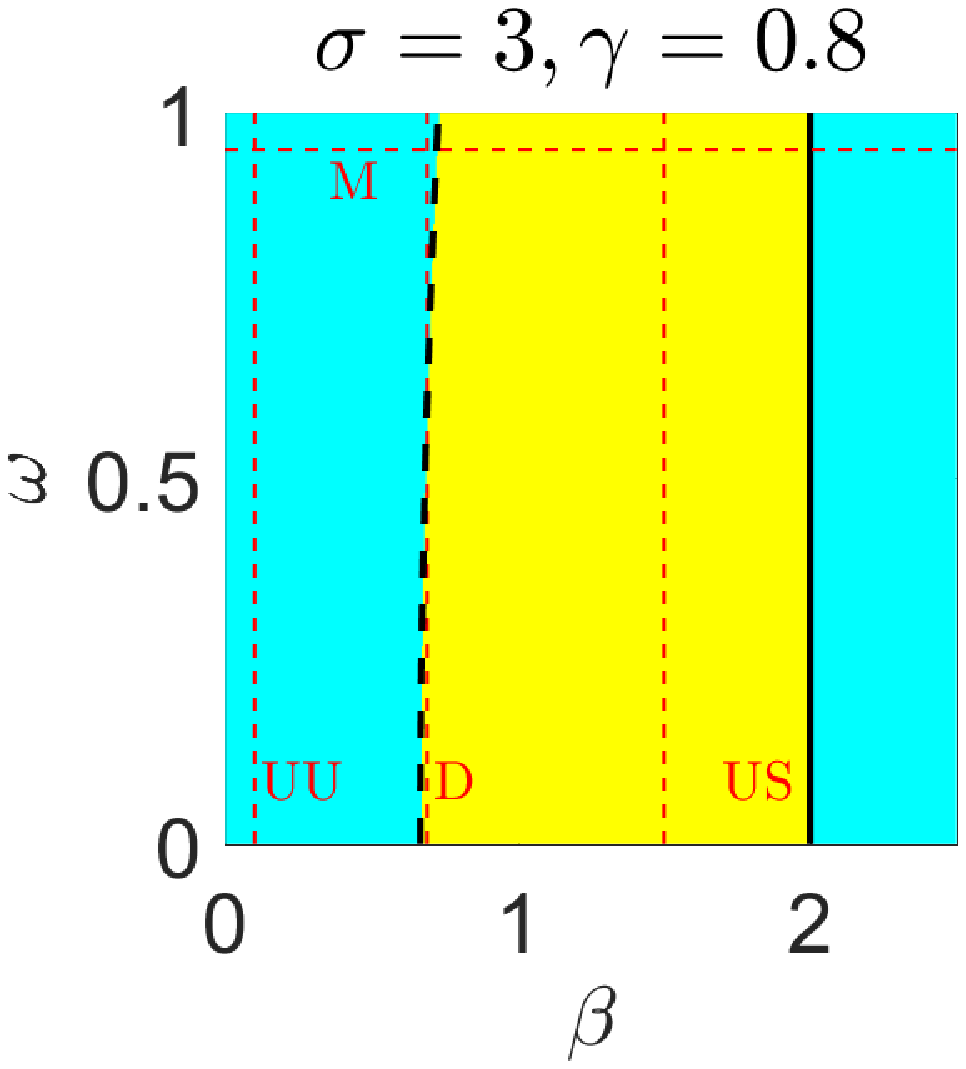}\\
    (A)
  \end{minipage}
  \begin{minipage}{0.32\textwidth}
    \centering
    \includegraphics[width=\textwidth,trim=1.7cm 0cm 2.6cm 0cm, clip=true]{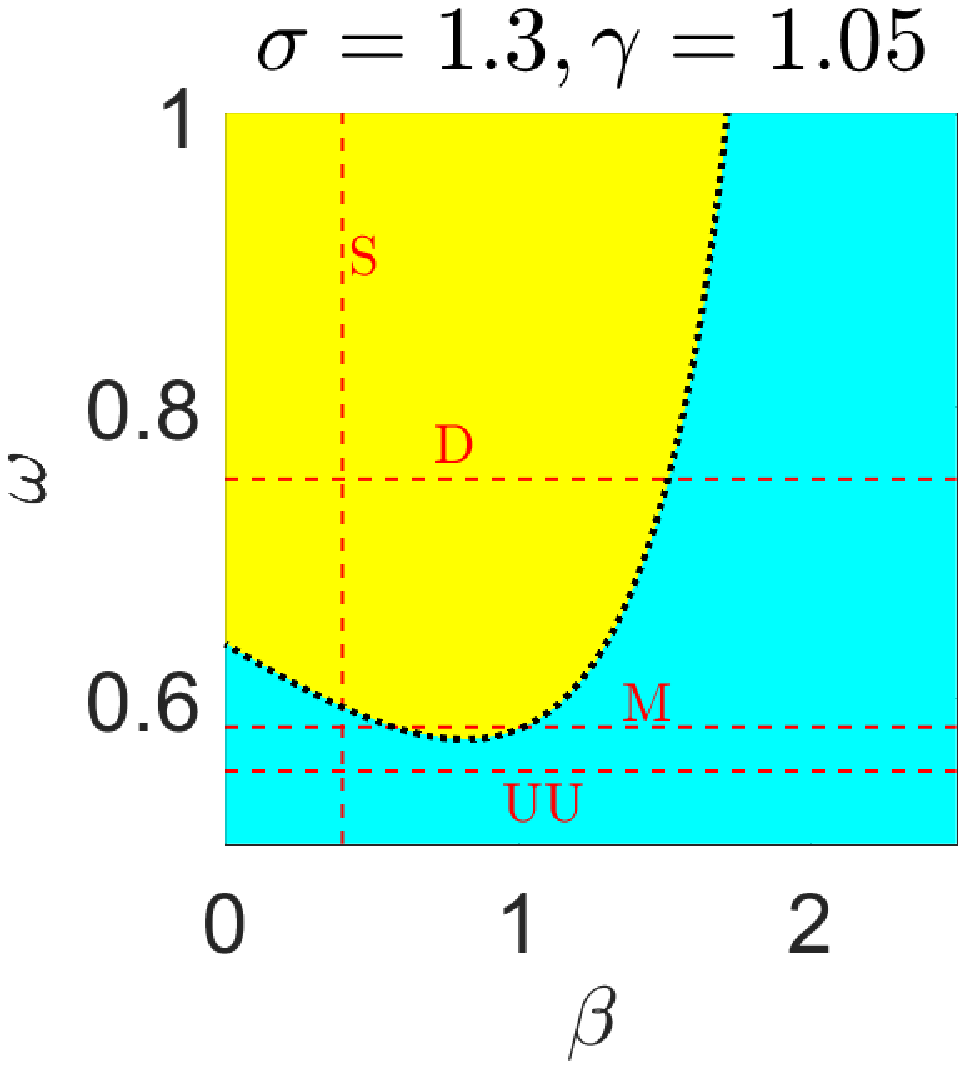}
    (B)
  \end{minipage}
  \begin{minipage}{0.32\textwidth}
    \centering
    \includegraphics[width=\textwidth,trim=1.7cm 0cm 2.6cm 0cm,
    clip=true]{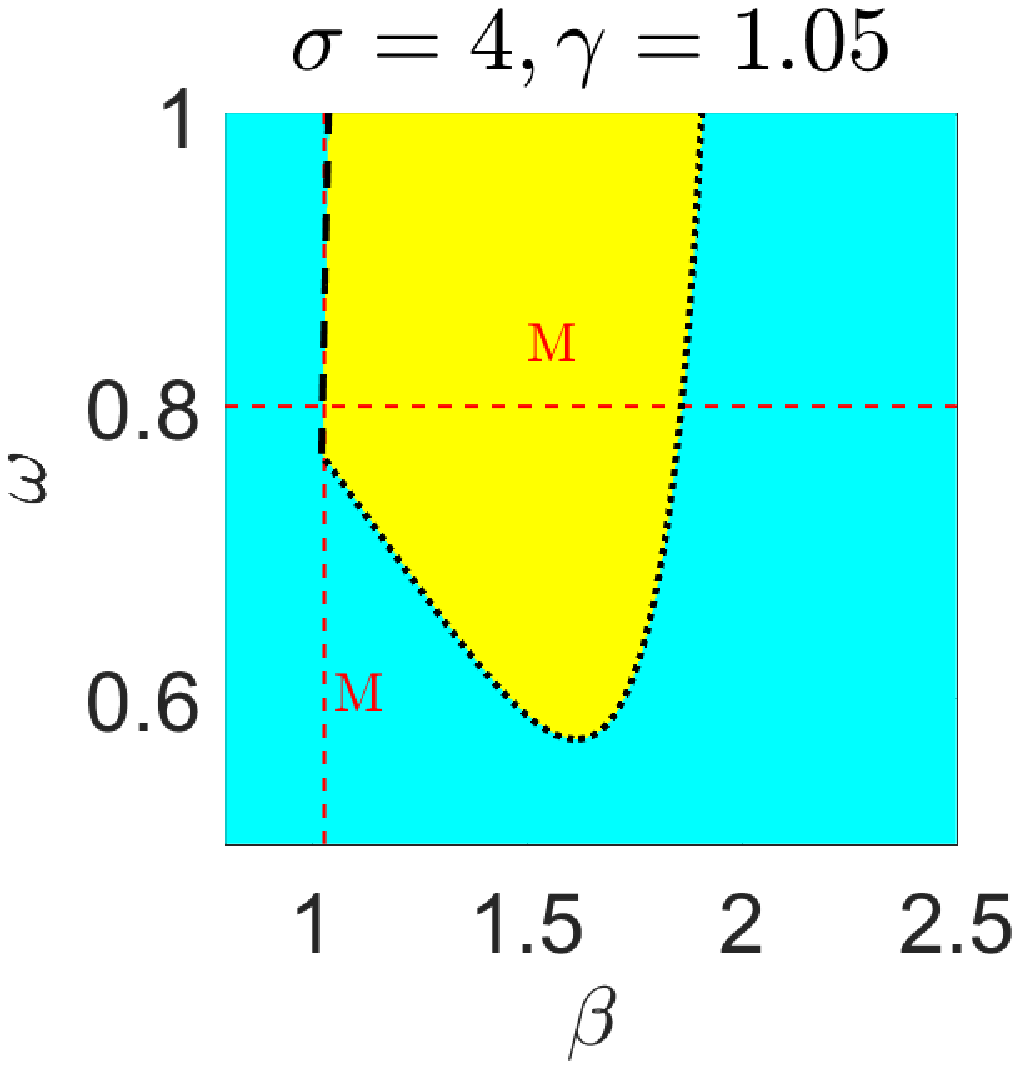}
    (C)
  \end{minipage}
\end{center}
\caption{Stability regions in the $(\beta,\omega)$-plane. Yellow and
  cyan colors respectively identify parameters' combinations for which
  $S^{\ast}$ is locally asymptotically stable and unstable. Solid,
  dashed and dotted black lines represent stability thresholds, while
  dashed red lines identify unconditionally unstable (UU),
  unconditionally stable (US), mixed (M), destabilizing (D) and
  stabilizing (S) scenarios. } \label{fig:sr}
\end{figure}
In particular, the region reported in Figure \ref{fig:sr} (A) is
obtained setting $\sigma=3,\,\gamma=0.8,$ and shows the occurrence,
on varying $\omega$, of unconditionally unstable, destabilizing and
unconditionally stable scenarios for $\beta=0.1$, $\beta=0.69$ and
$\beta=1.5,$ respectively (vertical dashed red lines). On varying
$\beta,$ we always find a mixed scenario (e.g. horizontal dashed red
line $\omega=0.95$). The stabilizing scenario (vertical dashed red
line $\beta=0.4$) on increasing $\omega$ is reported in Figure
\ref{fig:sr} (B), which is obtained setting $\sigma=1.3$ and
$\gamma=1.05.$ On varying $\beta$ we have unconditionally unstable
(e.g. when $\omega=0.55$), mixed (e.g. when $\omega=0.58$) and
destabilizing (e.g.  when $\omega=0.75$) scenarios. Finally, setting
$\sigma=4$ and $\gamma=1.05$ we obtain the stability region reported
in Figure \ref{fig:sr} (C), in which we underline the occurrence of
mixed scenarios both on varying $\omega$ (e.g. for $\beta=1.03$) and
$\beta$ (e.g. for $\omega=0.8$). The previous considerations allow
concluding that each scenario predicted by Propositions \ref{th:stabbetab} and \ref{th:stabw}
is actually possible. We
stress that the mixed scenarios on varying $\beta$ pointed out in each of
Figures \ref{fig:sr} (A)--(C) are obtained crossing different couplings
of thresholds.

In the next proposition we show the possible effects of $\omega$ and
$\beta$ on the local asymptotic stability of $S^H$ and
$S^L,$ provided that they exist and are positive.
\begin{proposition}\label{th:stabhl}
  Let assumptions \eqref{a1} and \eqref{a2} hold
  true.  If, keeping $\omega$ fixed and
  increasing $\beta,$ for $S^{\ast}$ we have\\
  $\bullet$ a destabilizing scenario, then on increasing
  $\beta\in(1/(2b^2),+\infty)$ we can either have a stabilizing or an
  unconditionally stable scenario on $S^H$
  and $S^L;$\\
  $\bullet$ a mixed scenario, then on increasing
  $\beta\in(1/(2b^2),+\infty)$ we can either have a destabilizing or a
  mixed scenario on
  $S^H$ and $S^L;$\\
  $\bullet$ an unconditionally unstable scenario, then $S^H$
  and $S^L$ are unconditionally unstable, too.\\
  Moreover, on increasing $\omega,$ there is a one to one
    correspondence between the scenarios for $S^{\ast}$ for
    $\beta<1/(2b^2)$ and those for $S^{H}$ and $S^L$ for
    $\beta>1/(2b^2).$
\end{proposition}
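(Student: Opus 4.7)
The plan is to compute the Jacobian of \eqref{map} at the biased steady states and to show that it has the same algebraic form as at $S^{\ast}$, with the quantity $E=\mu\sigma(1-2b^2\beta)$ replaced by a new parameter $\tilde E$ depending on the equilibrium value of $\alpha$. Because the steady-state arguments of both sigmoids are zero, $g_I'=g_P'=1$ there, so the only entries of the Jacobian that change with respect to the computation at $S^{\ast}$ are $\partial P_{t+1}/\partial P_t$ and $\partial P_{t+1}/\partial Y_t$: using $\partial\alpha_{t+1}/\partial P_t=4b\beta\,\alpha(1-\alpha)$ and $\partial\alpha_{t+1}/\partial Y_t=-4b\beta\,\omega d\,\alpha(1-\alpha)$ one obtains that the effect is precisely
\[
E \;\longrightarrow\; \tilde E=\mu\sigma\bigl(1-8b^2\beta\,\alpha(1-\alpha)\bigr).
\]
At $S^{\ast}$ we have $\alpha=1/2$ and recover $\tilde E=E$, while the symmetry $\alpha^L=1-\alpha^H$ imposed by the steady-state equations gives the same value of $\alpha(1-\alpha)$ at $S^H$ and $S^L$; hence the two Jacobians coincide and $S^L,S^H$ share the same local-stability properties.

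Next I would characterize $\tilde E$ at the biased steady states. Substituting the equilibrium identity $P^H-(1-\omega)F^{\ast}-\omega d Y^H=b(2\alpha^H-1)$ into the definition of $\alpha^H$ and letting $u=2\alpha^H-1$, $\kappa=2b^2\beta$, one gets the scalar fixed-point equation $u=\tanh(\kappa u)$, whose nontrivial solution $u_*(\kappa)$ exists exactly for $\kappa>1$. Two consequences are crucial: first, $\tilde E=\mu\sigma\bigl(1-\kappa(1-u_*^2)\bigr)$ depends only on $(b,\beta,\mu,\sigma)$ and is \emph{independent of} $\omega$; second, implicit differentiation of $u=\tanh(\kappa u)$ yields $du_*/d\kappa=u_*(1-u_*^2)/[1-\kappa(1-u_*^2)]$ and, after simplification, $d\tilde E/d\kappa>0$ on $(1,+\infty)$. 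Therefore $\tilde E$ is a continuous, strictly increasing function of $\beta$ on $(1/(2b^2),+\infty)$, sweeping the whole interval $(0,\mu\sigma)$, with $\tilde E\to 0$ as $\beta\to 1/(2b^2)^+$ and $\tilde E\to\mu\sigma$ as $\beta\to+\infty$. In contrast, $E(\beta)=\mu\sigma(1-2b^2\beta)$ sweeps the same interval as $\beta$ varies on $(0,1/(2b^2))$ but with opposite monotonicity.

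The stability conditions at $S^H,S^L$ are then exactly \eqref{eq:sc2}--\eqref{eq:sc4} with $E$ replaced by $\tilde E$, the analogue of \eqref{eq:sc1} reducing to $\tilde E>0$, which is automatic. Denote by $\mathcal{S}\subseteq(0,\mu\sigma)$ the subset on which the three inequalities hold (the first and the third are linear in $E$, the second a quadratic, and a short sign analysis shows the resulting set is an interval). Since the same $\mathcal{S}$ governs the stability both of $S^{\ast}$ (via $E$) and of $S^H,S^L$ (via $\tilde E$), and since $E$ and $\tilde E$ are monotone with opposite signs in $\beta$, the scenarios on $S^H,S^L$ are obtained by reading $\mathcal S$ in reverse. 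This makes the three cases in the statement direct: if $S^{\ast}$ is destabilizing, $\mathcal S$ reaches $\mu\sigma$, hence $\mathcal S=(0,\mu\sigma)$ (unconditionally stable on $S^H,S^L$) or $\mathcal S=(E_1,\mu\sigma)$ (stabilizing); if $S^{\ast}$ is mixed, $\mathcal S=(E_1,E_2)$ (mixed on $S^H,S^L$) or $\mathcal S=(0,E_2)$ (destabilizing); if $S^{\ast}$ is unconditionally unstable, $\mathcal S=\emptyset$ and so are $S^H,S^L$.

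For the $\omega$-part, the $\omega$-independence of $\tilde E$ is decisive: fixing $\beta_H>1/(2b^2)$, the value $\tilde E(\beta_H)\in(0,\mu\sigma)$ is matched by a unique $\beta_{\ast}\in(0,1/(2b^2))$ with $E(\beta_{\ast})=\tilde E(\beta_H)$, so the stability inequalities \eqref{eq:sc2}--\eqref{eq:sc4} as functions of $\omega\in[0,1]$ coincide verbatim at $S^{\ast}$ (for $\beta=\beta_{\ast}$) and at $S^H,S^L$ (for $\beta=\beta_H$), yielding the claimed bijection of scenarios. I expect the main difficulty to be the clean sign analysis needed for $d\tilde E/d\kappa>0$ and for showing that $\mathcal{S}$ really is a single subinterval of $(0,\mu\sigma)$ (so that the case enumeration is exhaustive); both points are elementary but require careful tracking of signs in the implicit-differentiation expression and in the quadratic term of \eqref{eq:sc3}.
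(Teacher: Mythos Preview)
Your proposal is correct and follows essentially the same route as the paper: both arguments show that the Jacobian at $S^{L},S^{H}$ has the form of $J^{\ast}$ with $E$ replaced by a quantity $\tilde E=E^{H}=E^{L}$ that is independent of $\omega$ and depends only on the solution of the scalar steady-state equation, and then establish a monotone bijection between $E(\beta)$ on $(0,1/(2b^2))$ and $\tilde E(\beta)$ on $(1/(2b^2),+\infty)$ so that the scenario translations follow by reading the common stability interval $\mathcal S$ in opposite directions. The only cosmetic difference is parametrization: you work with $u=2\alpha-1$ and the fixed-point equation $u=\tanh(\kappa u)$, proving $d\tilde E/d\kappa>0$ by implicit differentiation, whereas the paper works with $W=P-(1-\omega)F^{\ast}-\omega dY$ (noting $u=W/b$) and proves the equivalent monotonicity by showing the auxiliary function $h(z)=\frac{1-z^2}{4z}\ln\frac{1+z}{1-z}$ is strictly decreasing.
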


The previous proposition shows that the stability of $S^L$ and $S^H$
is strongly connected to that of $S^{\ast}.$ From the mathematical
point of view, we actually have that the stability behavior of $S^L$
and $S^H$ is a ``specular reflection'' of that of $S^{\ast}$ with
respect to $\beta=1/(2b^2).$ Decreasing $\beta$ has the same effect on
the stability of $S^{\ast}$ of increasing it on the stability of $S^L$
and $S^H.$ Thanks to Proposition \ref{th:stabhl}, we can highlight a
substantial difference with the scenarios of the economic setting
reported in \cite{naim2}: in the present case, for increasing values
of the intensity of choice, when $S^{\ast}$ loses stability (both in
the destabilizing and in the mixed scenarios) we can still have
convergence toward a steady state. Such framework occurs when $S^{\ast}$ undergoes
a pitchfork bifurcation and the new steady states $S^{L}$ and $S^{H}$
are locally stable.
This means that, in such scenario, the final
outcome is always a steady state, which, even if it changes, may be
locally stable. The most extreme situation is
realized when a destabilizing scenario for $S^{\ast}$ is followed by
an unconditionally stable one for $S^L$ and $S^H.$ Accordingly, convergence
toward a steady state may occur for \textit{any} value of $\beta.$



\section{Numerical simulations}
In this section we present the results of several numerical
investigations, collected pursuing a double aim. Firstly, we wish to
complement the analysis of Section \ref{sec:an}. In particular, we
want to understand what kinds of dynamics occur when
$S^{\ast}$ becomes unstable, as well as to investigate the possible
scenarios arising with the emergence of $S^L$ and $S^H.$ Secondly, we
want to study the qualitative properties of the time series when
exogenous non-deterministic effects are taken into account, that is,
considering a stochastically perturbed version of the model, in
agreement with \cite{cnp17, franke,naim2}.  In continuity with Section
\ref{sec:an}, we focus on the effect of the degree of
interaction $\omega$ and of the intensity of choice $\beta.$\\

In order to perform simulations, we need to specify the analytical
expressions of $g_I$ and $g_P.$ In what follows, we use the same
sigmoid function considered in \cite{CN16}, i.e.,
\begin{equation}\label{sigmoid}
  g_X (z) =\begin{cases}
    a_1^X\tanh\left(\dfrac{z}{a_1^{X}}\right) & \text{ if } z\geq 0,\\
    a_2^X\tanh\left(\dfrac{z}{a_2^{X}}\right) & \text{ if } z< 0,
  \end{cases}
\end{equation}
where $X\in\{I,P\}$ and $a_1^X,a_2^X$ are positive parameters, with
$a_1^X$ as upper bound when $z\to +\infty$ and $-a_2^X$ as lower bound
when $z\to -\infty$. A straightforward check shows that function $g_X$
belongs to $\mathcal{C}^2(\mathbb{R})$ and satisfies assumptions
\eqref{eq:sp}. In all the simulations reported in the present section
we set
$F=A=15$, $c=d=h=0.38$, $b=0.5$ and $\mu=1.$\\
Concerning the upper and lower bounds $a_i^X,\,i\in\{1,2\},$ of
functions $g_I$ and $g_P,$ we start by recalling that the steady state
values of $P$ and $Y$ significantly change on varying $\omega.$ Since
such bounds represent the maximum possible positive or negative
variation of $P_t$ and $Y_t,$ it is reasonable to assume that they are
qualitatively connected to the magnitude of prices and national
incomes. In fact, it makes economic sense to think that investments
and prices can not grow indefinitely but that, instead, their
variation is linked to the pace of economic activity, reflected in the
level of prices and national income, which are in turn connected to
the amount of resources existing in a certain country. To encompass
such fact into the simulations, we let $a_i^X$ proportionally change
depending on the value of the steady state $S^{\ast},$ setting
  $a_1^P=2P^{\ast}/F,\,a_2^P=4P^{\ast}/F,\,a_1^I=3Y^{\ast}(1-c)/A$ and
  $a_2^I=6Y^{\ast}(1-c)/A.$ In this way, for $\omega=0$ we have
$a_1^P=2,\,a_2^P=4,\,a_1^I=3,\,a_2^I=6,$ while they increase/decrease,
on varying $\omega,$ proportionally to $P^{\ast}$ and $Y^{\ast}$ (see
Proposition \ref{prop3}). We recall that we checked the robustness of
the simulations by varying all the parameters in suitable ranges,
obtaining results that are qualitatively comparable with those
reported in the next subsections. We also point out that the reported two-dimensional
  bifurcation diagrams are obtained setting the same initial datum for
  each parameters' coupling. Conversely, in order to highlight
  coexistence, one-dimensional bifurcation diagrams are all obtained
  ``following the attractor''. This means that, having a strictly
  increasing or decreasing sequence of parameters $a_i$,
  $i=1,\dots,m,$ we set the initial datum for the first simulation
  equal to $a_1,$ while the initial datum for the simulation with
  $a_{i+1}$ is set suitably close to the solution of the simulation
  obtained with $a_i$.
\subsection{Deterministic simulations}
\begin{figure}[t]
\begin{center}
  \begin{minipage}{0.4\textwidth}
    \centering
    \includegraphics[width=\textwidth,trim=0.7cm 0cm 1.9cm 0cm, clip=true]{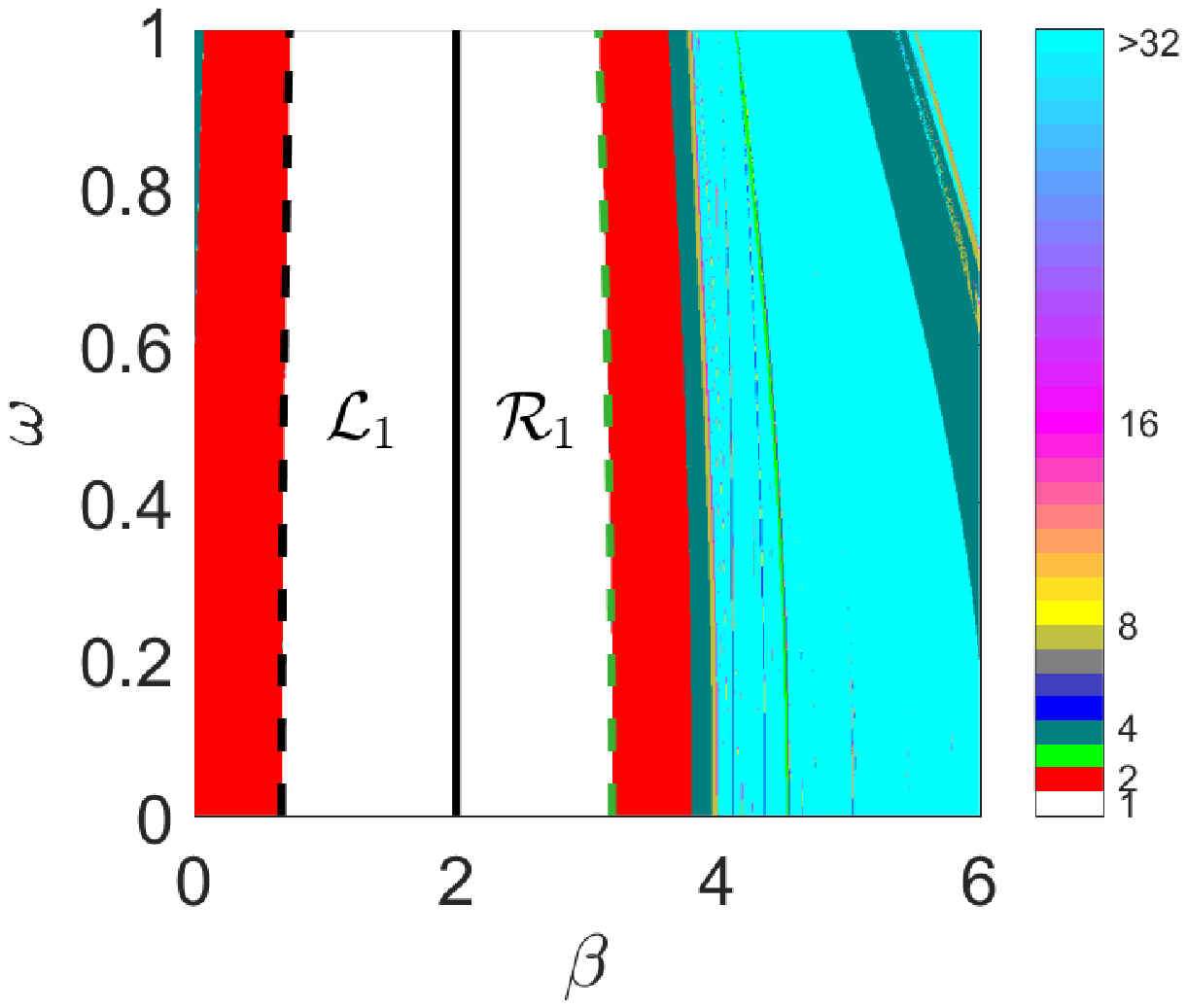}\\
    (A)
  \end{minipage}
  \begin{minipage}{0.4\textwidth}
    \centering
    \includegraphics[width=\textwidth,trim=0.8cm 0cm 2cm 0cm, clip=true]{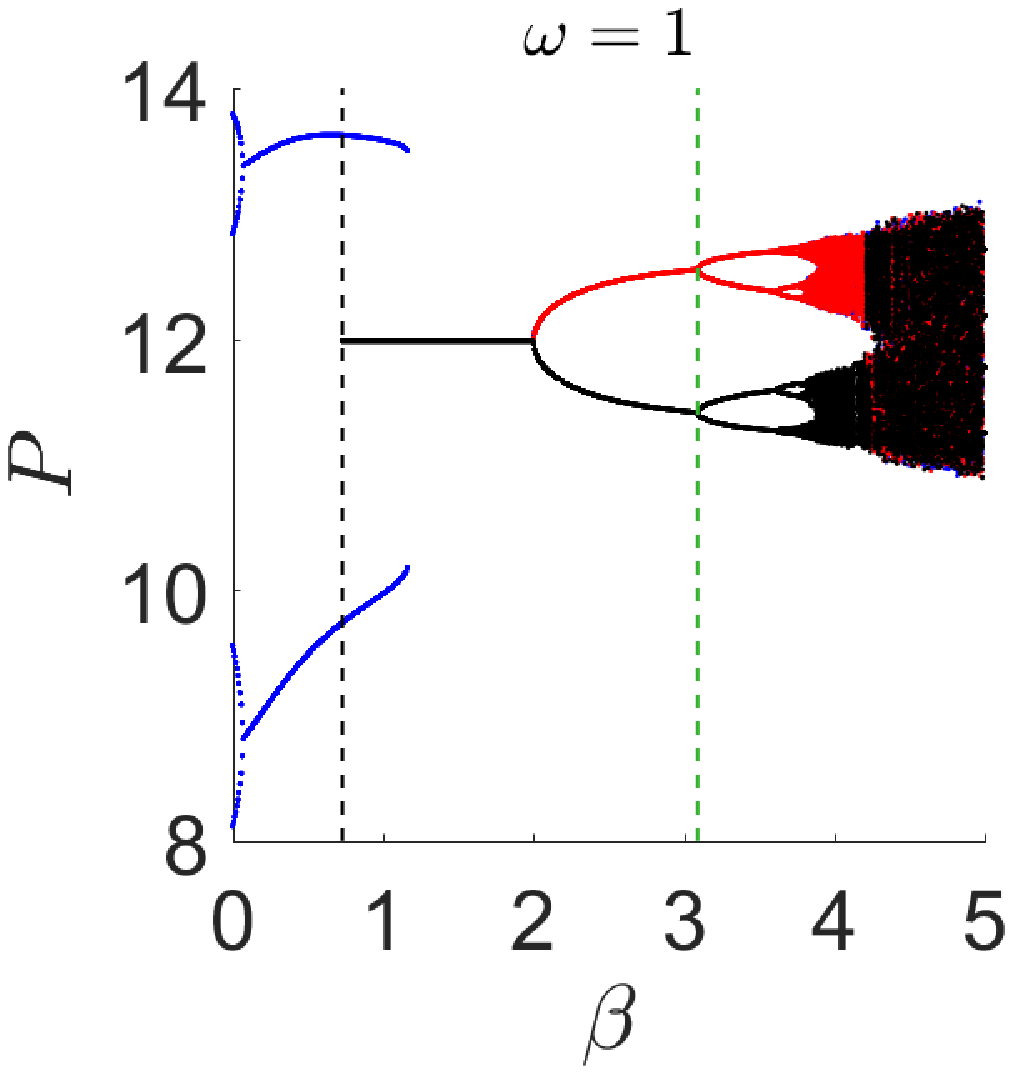}\\
    (B)
  \end{minipage}
\end{center}
\caption{\small Simulations obtained setting $\sigma=3$ and
  $\gamma=0.8.$ (A): Two-parameters bifurcation diagram. White color
  identifies parameters' combinations for which the initial datum
  converges toward a steady state, other colors are used for
  attractors consisting of more than a single point (cyan is used for
  attractors consisting of more than 32 points). Black lines denote
  the stability thresholds of $S^{\ast},$ while the green line denotes
  the stability threshold of $S^{H}.$ (B): Bifurcation diagrams for $P$
  on varying $\beta$ when $\omega=1.$ Different colors correspond to
  different initial conditions.} \label{fig:sim1}
\end{figure}
\begin{figure}[t!]
\begin{center}
\begin{minipage}{0.32\textwidth}
    \centering
    \includegraphics[width=\textwidth,trim=0.8cm 0cm 2.2cm 0cm, clip=true]{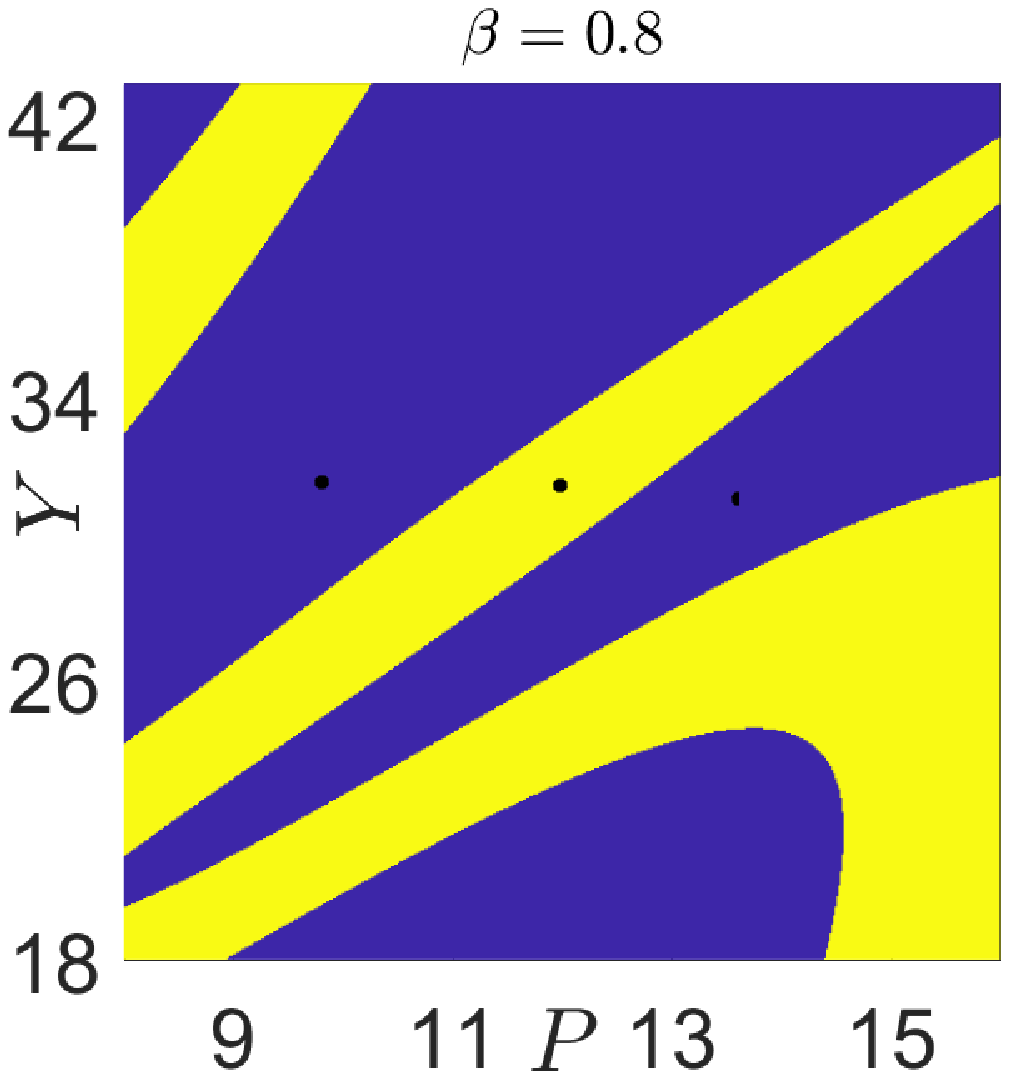}
    (A)
  \end{minipage}
  \begin{minipage}{0.32\textwidth}
    \centering
    \includegraphics[width=\textwidth,trim=1.3cm 0cm 2.2cm 0cm, clip=true]{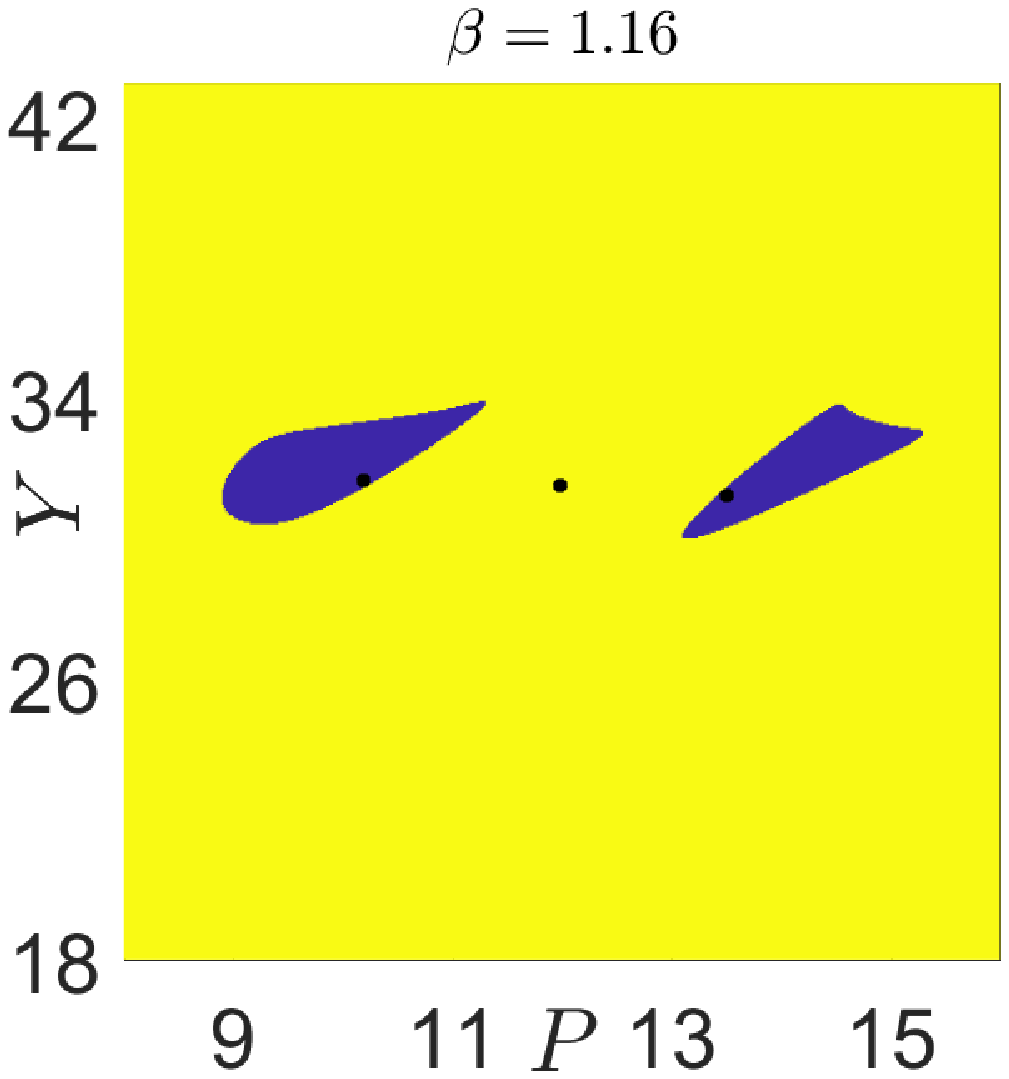}\\
    (B)
  \end{minipage}
  \begin{minipage}{0.32\textwidth}
    \centering
    \includegraphics[width=\textwidth,trim=0.8cm 0cm 2.2cm 0cm, clip=true]{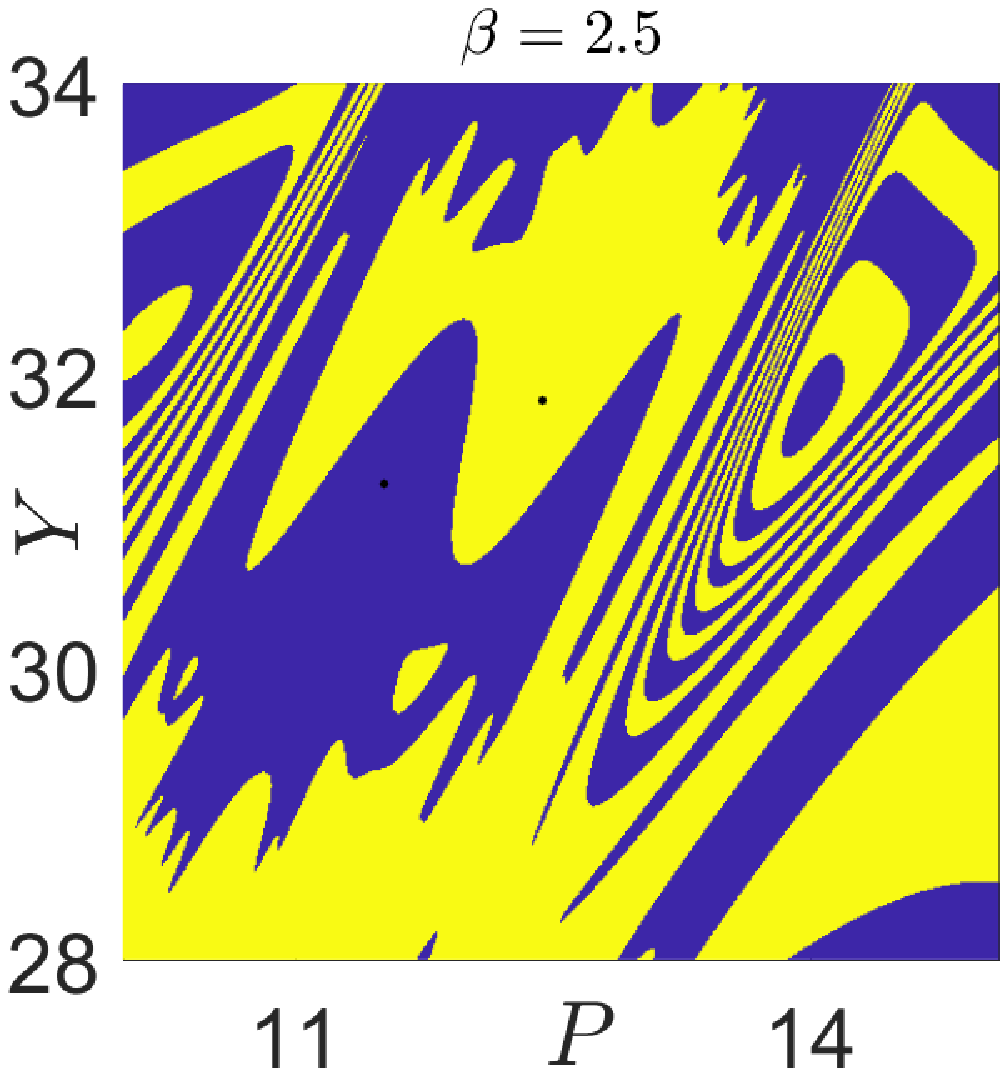}
    (C)
  \end{minipage}\\
  \begin{minipage}{0.32\textwidth}
    \centering
    \includegraphics[width=\textwidth,trim=1.3cm 0cm 2.2cm 0cm, clip=true]{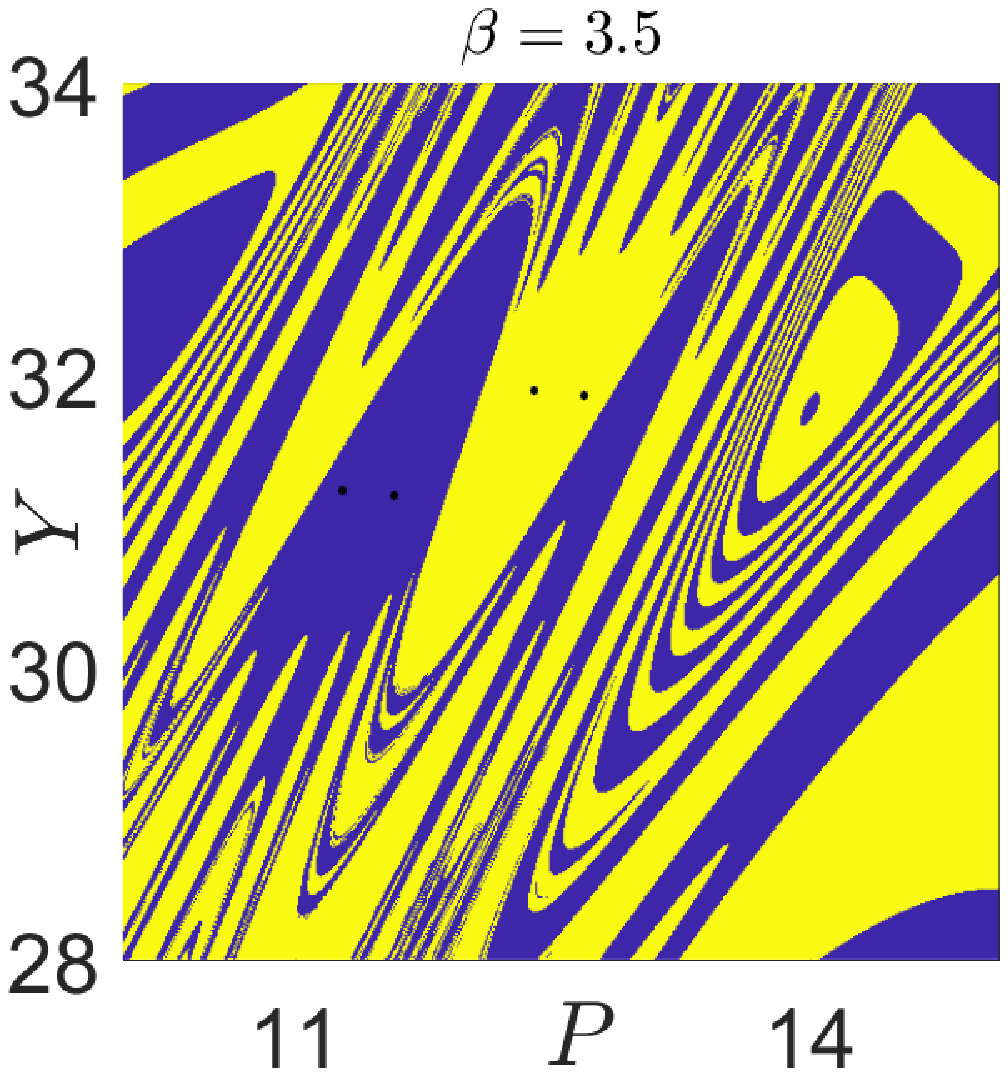}\\
    (D)
  \end{minipage}
  \begin{minipage}{0.32\textwidth}
    \centering
    \includegraphics[width=\textwidth,trim=0.8cm 0cm 2.2cm 0cm, clip=true]{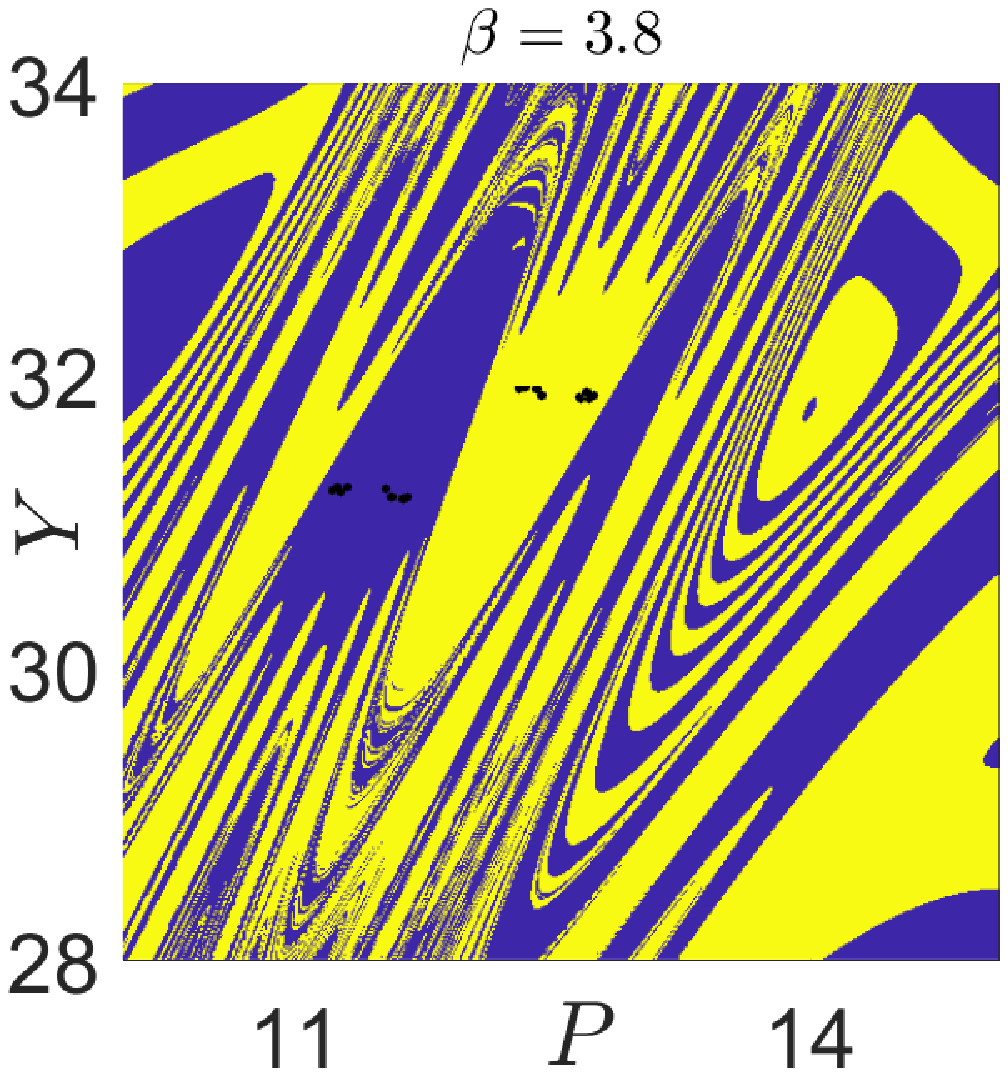}\\
    (E)
  \end{minipage}
  \begin{minipage}{0.32\textwidth}
    \centering
    \includegraphics[width=\textwidth,trim=1.3cm 0cm 2.2cm 0cm, clip=true]{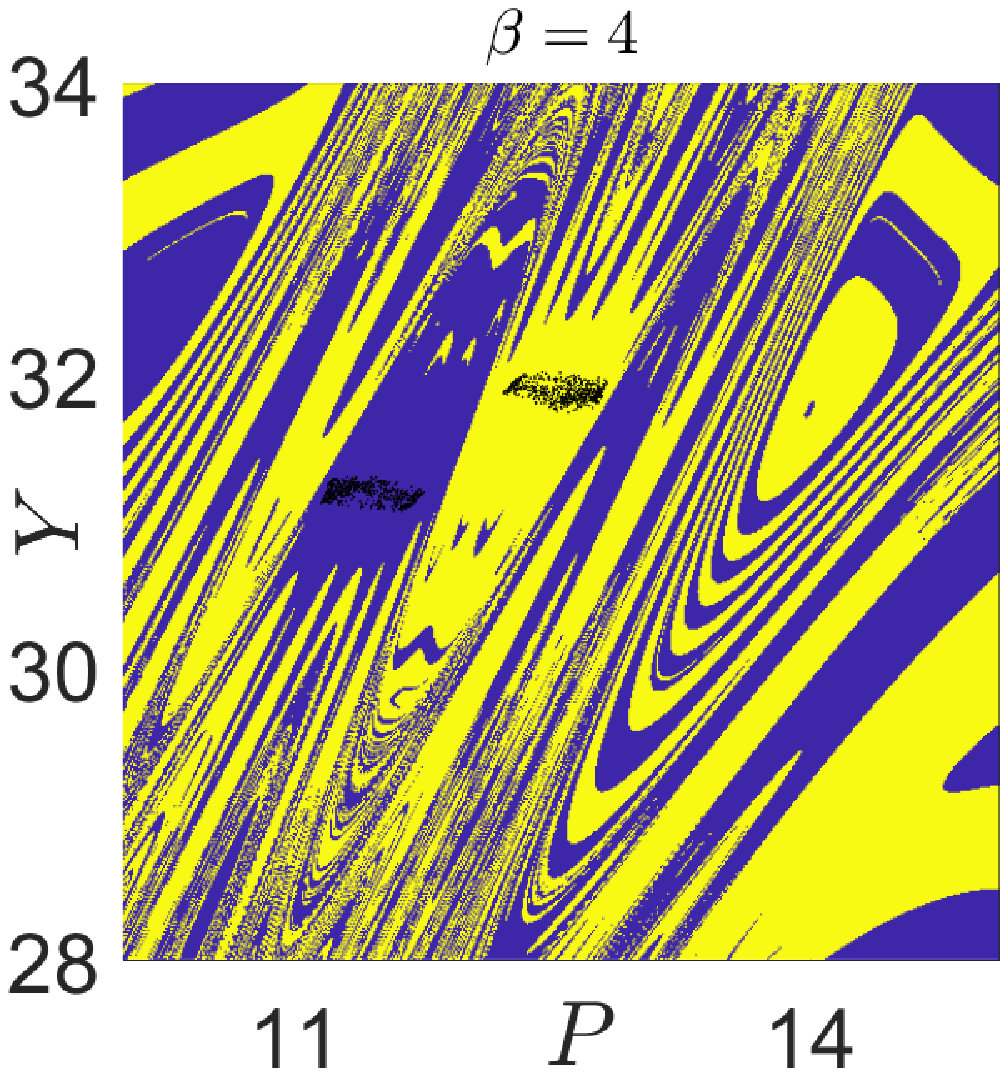}\\
    (F)
  \end{minipage}
\end{center}
\caption{\small Evolution of the basins of attraction on varying the intensity of choice $\beta$. In (A)-(B) the basins of attraction of the steady state $S^{\ast}$ is represented in yellow together with the basin of the coexisting 2-cycle, in violet. Panel (C) refers to the basins of the biased steady states $S^L$ and $S^H$ whose birth is due to the occurrence of the pitchfork bifurcation which makes the unbiased steady state $S^{\ast}$ unstable. In panel (D) the basins of the two 2-cycles arising at the destabilization of $S^L$ and $S^H$ are depicted, which then turn into chaotic attractors due to a cascade of further period-doubling bifurcations for increasing values of $\beta$, as panels (E)-(F) show.}
\label{fig:bas1}
\end{figure}
We shall now explain the functioning of the model by considering
different sets of simulations, which provide a portrait of the
possible dynamics that may occur in our model economy, emphasizing how
the relevant parameters affect the evolution of the economic
variables. In particular, we aim to account for the richness of the
possibly complex dynamical behaviors arising in unstable regimes,
which however keep evident track of the simple distinctive
characterization of the
beliefs, in terms of optimistic/pessimistic bias.\\
In the first set of simulations we set $\sigma=3$ and $\gamma=0.8,$
and the resulting stability region for $S^{\ast}$ is that reported in
Figure \ref{fig:sr} (A).  In Figure \ref{fig:sim1} (A) we show the
two-dimensional bifurcation diagram in the $(\beta,\omega)$-plane,
where each point is represented using a different color depending on
the number of points of the attractor that was approached starting by
the initial datum
$(Y_0,P_0,Z_0)=(Y^{\ast}+0.001,P^{\ast}+0.001,Z^{\ast}+0.001)$ after
$T=10000$ iterations. In particular, white color denotes parameter
combinations for which convergence is toward a one-point attractor,
i.e., a steady state, red color is used for period-2 cycles, cyan
color for attractors consisting of more than 32 points (namely, high
period cycles, chaotic attractors or closed invariant curves), and the
remaining colors for cycles with periods between 3 and 32. Black solid
and dashed lines respectively represent the stability thresholds
corresponding to \eqref{eq:sc1} and \eqref{eq:sc2}, delimiting the
region within which (see also the yellow set in Figure \ref{fig:sr}
(A)) the steady state $S^{\ast}$ is stable and dynamics converge
toward it (white region $\mathcal{L}_1$). If we start from any
$(\beta,\omega)\in\mathcal{L}_1$ and we increase $\beta,$ we exit the
stability region $\mathcal{L}_1$ as $S^{\ast}$ crosses the black solid
line, but in the white region $\mathcal{R}_1$ dynamics again converge
toward a steady state. In fact, according to Propositions \ref{ss} and
\ref{stabcond}, such black line, whose equation is given by $\beta=2$
since $b=0.5$, represents both the bound for the stability condition
\eqref{eq:sc1} for $S^{\ast}$ and the threshold to the left/right of
which we have a unique/three coexisting steady states. As a
consequence, when the solid line $\beta=2$ is crossed on increasing
$\beta$, the steady state $S^{\ast}$ becomes unstable by means of a
pitchfork bifurcation and the two stable steady states $S^L$ and $S^H$
arise. We stress that, with the present choice of the initial datum,
dynamics converge toward $S^H$ for all
$(\beta,\omega)\in\mathcal{R}_1.$

Region $\mathcal{R}_1$ is the specular of $\mathcal{L}_1$ with respect
to $\beta=2.$ In particular, the green dashed line is obtained starting
from the black dashed threshold line and numerically solving the implicit
equations used in the proof of Proposition \ref{th:stabhl}. This,
in agreement with Proposition \ref{th:stabhl}, confirms that region
$\mathcal{R}_1$ is the region of local asymptotic stability for
$S^{H}.$

As remarked in the comments to Figure \ref{fig:sr} (A), on increasing
$\beta$ we have mixed scenarios for $S^{\ast}$ for any $\omega.$ From
Proposition \ref{th:stabhl}, since the right stability threshold for
$S^{\ast}$ is described by the ``pitchfork'' condition \eqref{eq:sc1},
this is the only situation in which a mixed scenario for $S^{\ast}$
``translates'' into a destabilizing scenario for $S^{H}.$ As described
after Proposition \ref{th:stabhl}, this is simply due to the fact
that, crossing $\beta=2$ from right to left, $S^H$ disappears.  In
fact, starting from any $(\beta,\omega)\in\mathcal{R}_1,$ for
increasing values of $\beta$ we cross the green stability threshold,
while decreasing $\beta$ we leave $\mathcal{R}_1$ entering
$\mathcal{L}_1.$ To be more precise about the possible dynamics
occurring when steady states lose stability, we can see that when we
leave both $\mathcal{L}_1$ and $\mathcal{R}_1$ crossing a dashed
stability threshold, we enter one of the red regions, meaning that
dynamics, previously converging to a steady state, now converge to a
period-2 cycle. \\
In Figure \ref{fig:sim1} (B) we report three bifurcation diagrams
obtained for $\omega=1$\footnote{We checked that for
  $\omega<1$ we obtain bifurcation diagrams that are qualitatively the
  same of that reported in Figure \ref{fig:sim1} (B).}. The blue
bifurcation diagram is obtained setting
$(Y_0,P_0,Z_0)=(Y^{\ast}+0.001,P^{\ast}+0.001,Z^{\ast}+0.001)$ and on
varying $\beta$ between $0$ and $1.17,$ while the red and the black
bifurcation diagrams are obtained respectively setting
$(Y_0,P_0,Z_0)=(Y^{\ast}+0.001,P^{\ast}+0.001,Z^{\ast}+0.001)$ and
$(Y_0,P_0,Z_0)=(Y^{\ast}-0.001,P^{\ast}-0.001,Z^{\ast}-0.001),$ and on
varying $\beta$ between $0.73$ and $5.$

We can see that as long as $\beta< 0.728$ (black dashed line), but close to such value,
dynamics converge toward a period-2 cycle
which 
attracts almost all the trajectories\footnote{More precisely, when
  $(\beta,\omega)$ belongs to the dashed lines in Figure
  \ref{fig:sim1} (A) a flip bifurcation occurs. In fact, for the
  present parameters' configuration, on such dashed line, condition
  \eqref{eq:sc2} becomes an equality, while the remaining stability
  conditions in Proposition \ref{stabcond} are satisfied. Condition
  \eqref{eq:sc2} is violated when $p(-1)=0,$ being $p$ the
  characteristic polynomial of the Jacobian matrix $J^{\ast}$
  evaluated at $S^{\ast}.$ Since the largest eigenvalue modulus of
  $J^{\ast}$ is equal to $-1,$ a flip bifurcation occurs.}. Such
period-2 cycle coexists with the stable steady state $S^{\ast}$ on an
interval of values of $\beta,$ and then the former disappears when its
basin of attraction shrinks and is absorbed by that of the steady
state. This coexistence is also reported in Figures \ref{fig:bas1}
(A)-(B) where the basin of attraction\footnote{We remark that Figure
  \ref{fig:bas1} reports the slice on plane $Y=Z$ of the
  three-dimensional basins of attraction for some increasing
  values of $\beta,$ together with the projection on such plane of the attractors toward which the initial data converged.} of $S^{\ast}$ is represented in yellow while the basin of the 2-cycle is depicted in violet. The two panels show that, as $\beta$ grows, the basin of the 2-cycle reduces in size and for further increases of the intensity of choice parameter there is a contact between the periodic points and their basin boundary leading to the disappearance of the 2-cycle.\\
From the red and the black bifurcation diagrams of Figure
\ref{fig:sim1} (B), it is evident the pitchfork bifurcation occurring
at $\beta=2$ with the red and the black lines springing from
$S^{\ast}$ that respectively represent $S^H$ and $S^L$ (see Figure
\ref{fig:bas1} (C), which reports the basins of attraction of the
steady states born via the pitchfork bifurcation). These are the
biased steady states, whose birth via pitchfork bifurcation can be
read as the effect of agents' decision when the intensity of choice is
high enough. In this case almost all agents adopt the optimistic or
pessimistic predictor, which is the best performing predictor in terms
of forecast error, leading the dynamics to converge to an equilibrium
different from $S^{\ast}.$ We stress that, according to Proposition
\ref{scbeta}, all components of $S^H$ and $S^L$ are respectively
increasing and decreasing with respect to $\beta.$ If $\beta$ is
further increased, they both lose stability through a flip bifurcation
and then a cascade of period-doubling bifurcations leads to chaos. In
fact, from Figure \ref{fig:sim1} (D) on we observe the basins of
attraction of the couple of 2-cycles born when $S^L$ and $S^H$ lose
stability. It is also worth remarking that Proposition \ref{th:stabhl}
guarantees that the first period-doubling bifurcation occurs for the
same $\beta$ value both for $S^L$ and $S^H,$ and Figure \ref{fig:sim1}
(B) suggests that the subsequent period-doubling bifurcations are
simultaneous, too. Still increasing $\beta$ has the effect of making
these 2-cycles unstable, leading to the emergence of chaotic
attractors (see Figures \ref{fig:sim1} (E)-(F)) associated with
erratic dynamics in the course of price and national income. In this
scenario, on increasing $\beta,$ the intermediate unbiased steady
state loses its relevance when becomes unstable, as almost all
trajectories converge toward the optimistic or the pessimistic
attractor.

We now focus on the role of the degree of interaction $\omega$,
considering vertical sections of the bifurcation diagram in Figure
\ref{fig:sim1} (A).
The destabilizing scenario for both $S^{\ast}$ and
$S^H$ only leads to the emergence of a period-2 cycle (e.g. for
$\beta=0.7$). It is worth noticing that the possible effects on
$S^H$ of increasing the interaction degree when
$\beta>2$ are again specular with respect to those on
$S^{\ast}$ when $\beta<2.$ In fact, we can
identify neutrally stable (e.g. for $\beta=2.5$) or unstable
(e.g. for $\beta=4$) scenarios, as well as destabilizing scenarios
(e.g. for $\beta=3.2$).

The second family of simulations we consider is obtained setting
$\sigma=1.3$ and $\gamma=1.05.$ The corresponding stability region for
$S^{\ast}$ is depicted in Figure \ref{fig:sr} (B) and the
corresponding two-dimensional bifurcation diagram is reported in
Figure \ref{fig:sim2} (A), in which the dotted black line represents
the stability threshold \eqref{eq:sc3}. Differently from the first set
of simulations, condition $\beta=2$ (represented using blue color)
acts no more as a stability threshold, but it still shapes regions in
which we have either one or three steady states. The initial datum is
always set equal to
$(Y_0,P_0,Z_0)=(Y^{\ast}+0.001,P^{\ast}+0.001,Z^{\ast}+0.001)$ and in
region $\mathcal{R}_2$ the convergence is toward $S^H,$ while in
region $\mathcal{L}_2$ convergence is toward
$S^{\ast}.$\\
In this case, we can have either convergence toward a steady state (in
regions denoted by $\mathcal{L}_2$ and $\mathcal{R}_2$) or toward an
attractor consisting of more than $32$ points (cyan regions). Both
couple of regions are specular with respect to the vertical blue line
$\beta=2,$ in agreement with Proposition \ref{th:stabhl}, and,
depending on the value of the interaction degree, unconditionally
unstable, mixed and destabilizing scenarios for
$S^{\ast}$ respectively correspond to
unconditionally unstable, mixed and stabilizing scenarios for
$S^H.$ This means that the emergence of new steady
states makes possible to have convergence toward a stable steady state
as $\beta\rightarrow+\infty.$

If we start from $(\beta,\omega)\in\mathcal{L}_2$ and we increase (or,
depending on $\omega,$ also if we decrease) $\beta$, the crossing of
the black dotted line makes $S^{\ast}$ unstable and immediately
trajectories converge toward an attractor consisting of more than 32
points (represented by cyan color). The same happens in relation to
$S^H$ if we start from $(\beta,\omega)\in\mathcal{R}_2$ and we
decrease (or, depending on $\omega,$ also if we increase) $\beta,$
crossing the green dotted line (which, like in Figure \ref{fig:sim1}
(A), is obtained starting from the black dotted threshold and
numerically solving the implicit equations used to prove Proposition
\ref{th:stabhl}). This stability loss for $S^{\ast}$ and $S^H$ is
associated with the occurrence of Neimark-Sacker bifurcations, as it
looks evident evident also from Figure \ref{fig:sim2} (B), where we
report three bifurcation diagrams on varying $\beta$ for
$\omega=0.575,$ which highlight the presence of a mixed scenario for
both $S^{\ast}$ and $S^H$. We stress that we checked that
qualitatively similar results are obtained for all the values of
$\omega$ for which a mixed scenario arises.
\begin{figure}[t]
\begin{center}
  \begin{minipage}{0.4\textwidth}
    \centering
    \includegraphics[width=\textwidth,trim=0.8cm 0cm 2.2cm 0cm, clip=true]{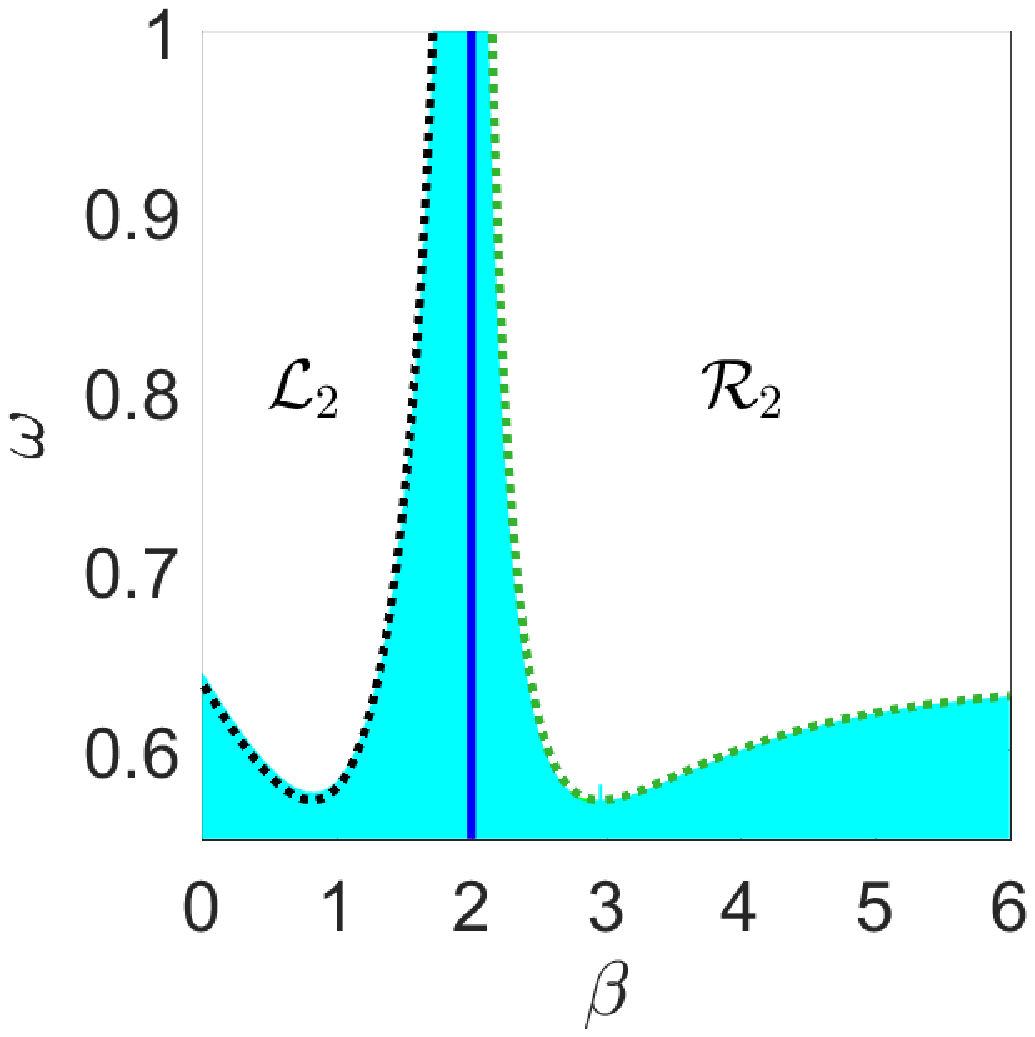}\\
    (A)
  \end{minipage}
  \begin{minipage}{0.4\textwidth}
    \centering
    \includegraphics[width=\textwidth,trim=0.8cm 0cm 2.2cm 0cm, clip=true]{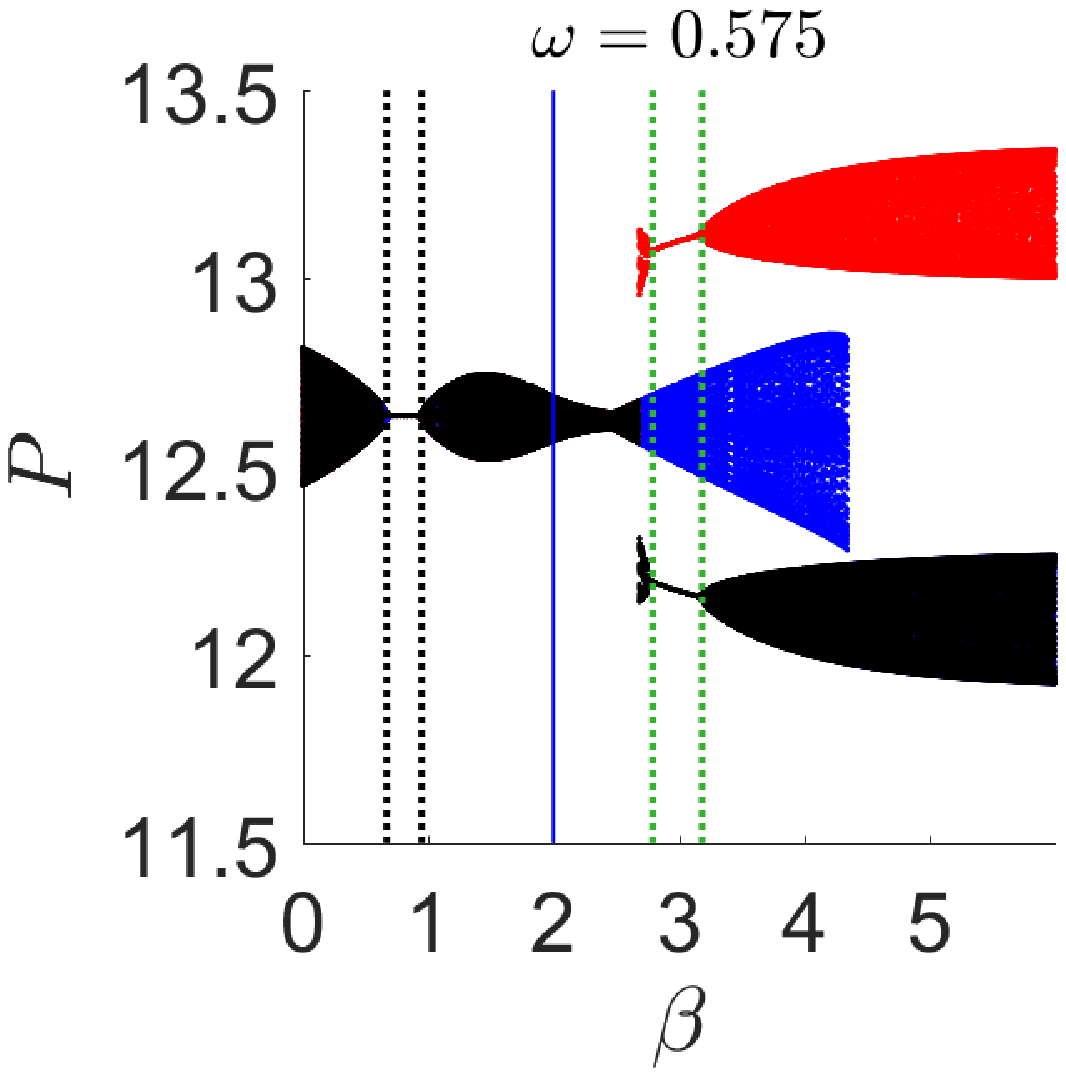}\\
    (B)
  \end{minipage}
\end{center}
\caption{\small Simulations obtained setting $\sigma=1.3$ and
  $\gamma=1.05.$ (A): Two-parameters bifurcation diagram. White color
  is used to identify parameters' combinations for which the initial
  datum converges toward a steady state, cyan is used for attractors
  consisting of more than 32 points. The black dotted line represents
  the stability threshold of $S^{\ast}$ while the green dotted line
  refers to the stability threshold of $S^H.$ (B): Bifurcation
  diagrams for $P$ on varying $\beta$ when $\omega=0.575.$ Different
  colors correspond to different initial conditions.} \label{fig:sim2}
\end{figure}
The blue bifurcation diagram, obtained setting
$(Y_0,P_0,Z_0)=(Y^{\ast}+0.001,P^{\ast}+0.001,Z^{\ast}+0.001)$ and increasing $\beta$ from $0$ to $6,$ lies below the red one,
obtained setting
$(Y_0,P_0,Z_0)=(Y^{\ast}+0.001,P^{\ast}+0.001,Z^{\ast}+0.001)$ and decreasing $\beta$ from $6$ to $0,$ which in turns lies above the
black one, obtained setting
$(Y_0,P_0,Z_0)=(Y^{\ast}-0.001,P^{\ast}-0.001,Z^{\ast}-0.001)$ and decreasing again $\beta$ from $6$ to $0$.\\
We numerically checked that the closed invariant curve,
arising from the stability loss of $S^{\ast}$,
attracts almost any trajectory also for intensity of choice values
$\beta>2$ suitably close to $2$, when steady states
$S^H$ and $S^L$ already exist but are
unstable (black bifurcation diagram, just to the right of the blue
dotted line). Such closed invariant curve coexists, as $\beta$
increases, first with another couple of closed invariant curves
(blue, red and black bifurcation diagrams just to the left of the
first green dotted line), then with $S^H$
and $S^L$ (blue, red and black bifurcation diagrams
between the two green dotted lines) and finally again with another
couple of closed invariant curves (blue, red and black bifurcation
diagrams just to the right of the second green dotted line).
\\Similar considerations hold also when the destabilizing scenario
for $S^{\ast}$ is followed by the stabilizing one
for $S^H$ and $S^L.$
Once more, the common aspect is the coexistence of attractors
   encompassing optimistic, pessimistic or unbiased levels in the
   economic variables, with the addition of the magenta attractor in
   which each piece embodies large or small levels too.
On varying $\omega,$ from vertical sections of Figure \ref{fig:sim2}
(A) we infer that we can have either stabilizing scenarios for both
$S^{\ast}$ (e.g. for $\beta=1$) and
$S^H$ (e.g. for $\beta=3$) or unconditionally unstable
scenarios for both $S^{\ast}$ and
$S^H,$ when $\beta$ is sufficiently close
to $2.$

\begin{figure}[t!]
\begin{center}
  \begin{minipage}{0.4\textwidth}
    \centering
    \includegraphics[width=\textwidth,trim=0.7cm 0cm 1.9cm 0cm, clip=true]{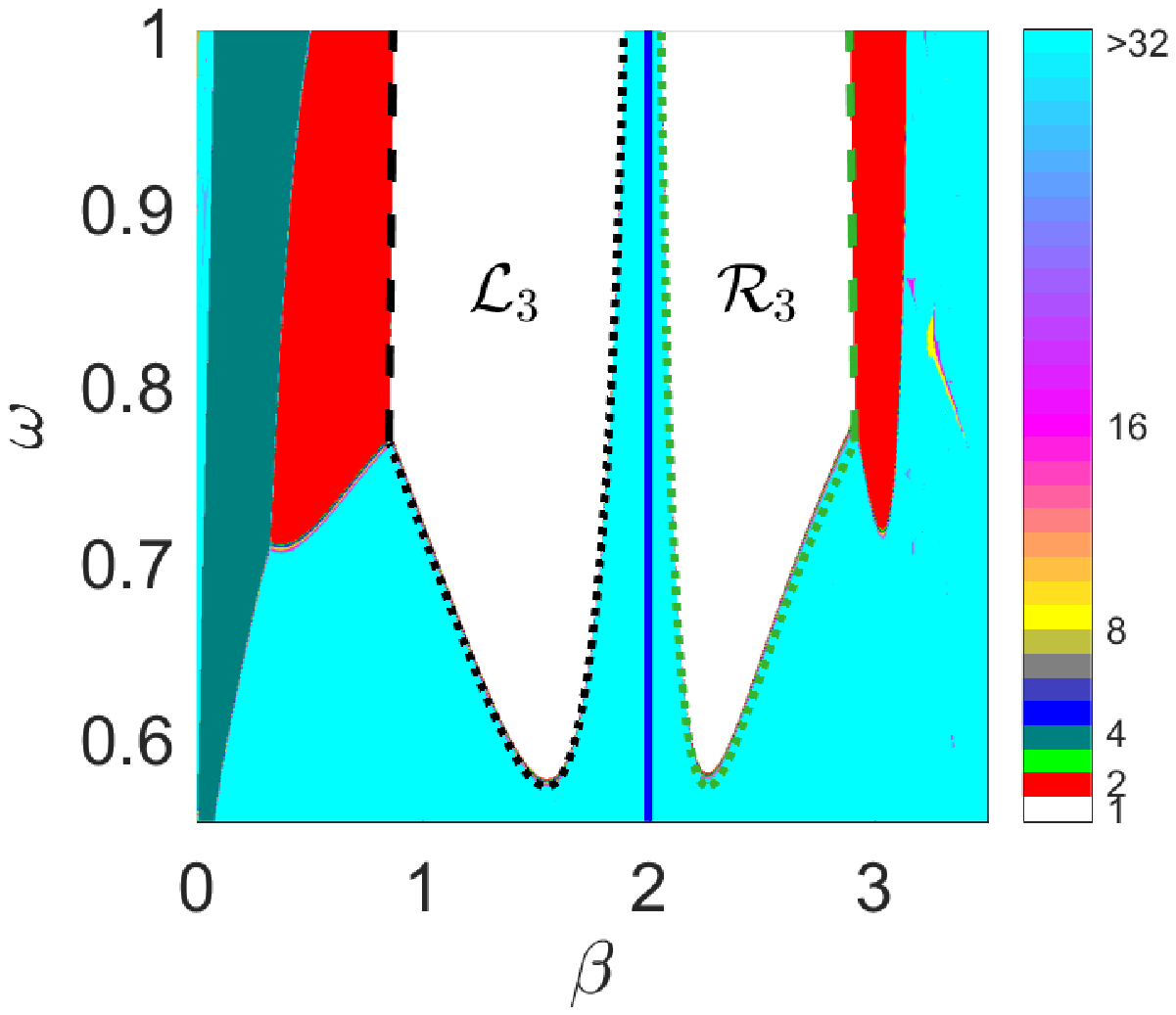}\\
    (A)
  \end{minipage}
  \begin{minipage}{0.4\textwidth}
    \centering
    \includegraphics[width=\textwidth,trim=0.8cm 0cm 2cm 0cm, clip=true]{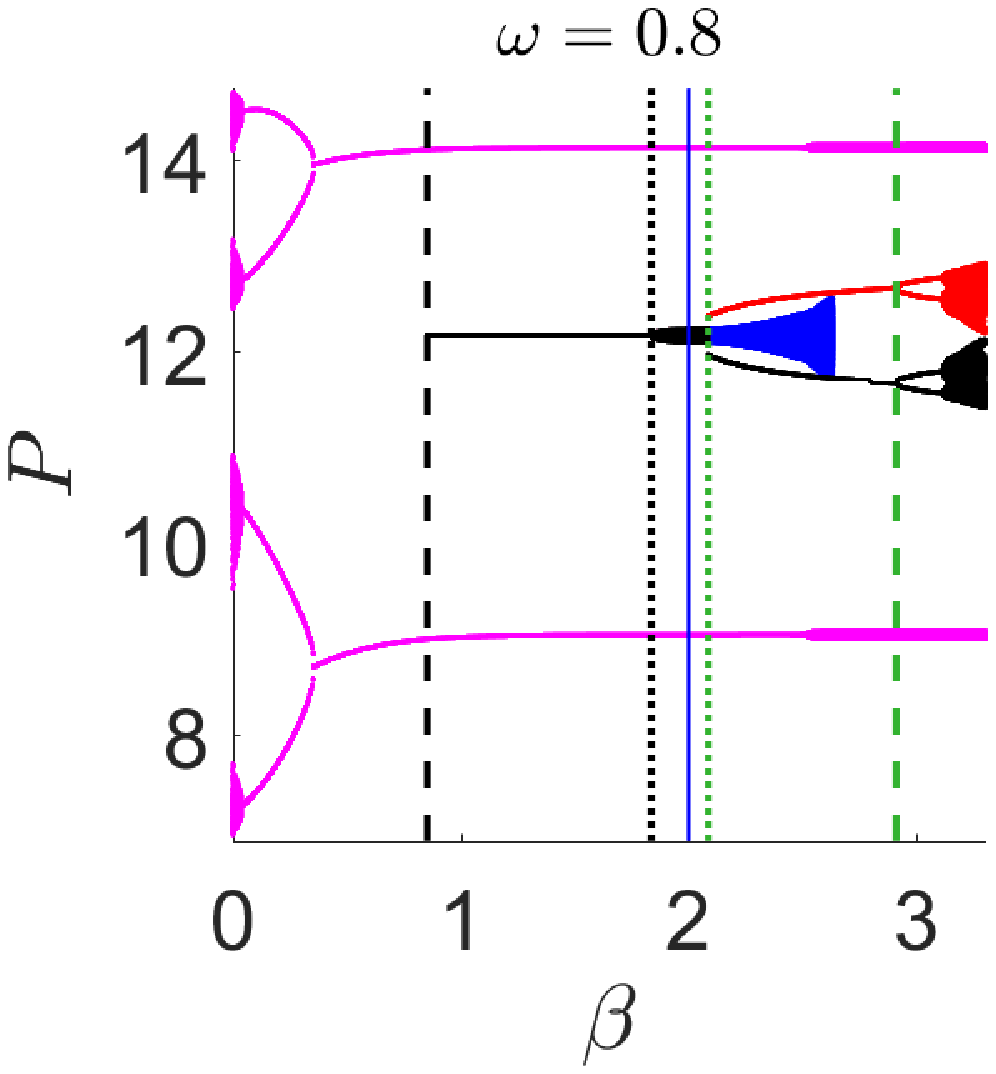}\\
    (B)
  \end{minipage}
  \begin{minipage}{0.4\textwidth}
    \centering
    \includegraphics[width=\textwidth,trim=0.8cm 0cm 2cm 0cm, clip=true]{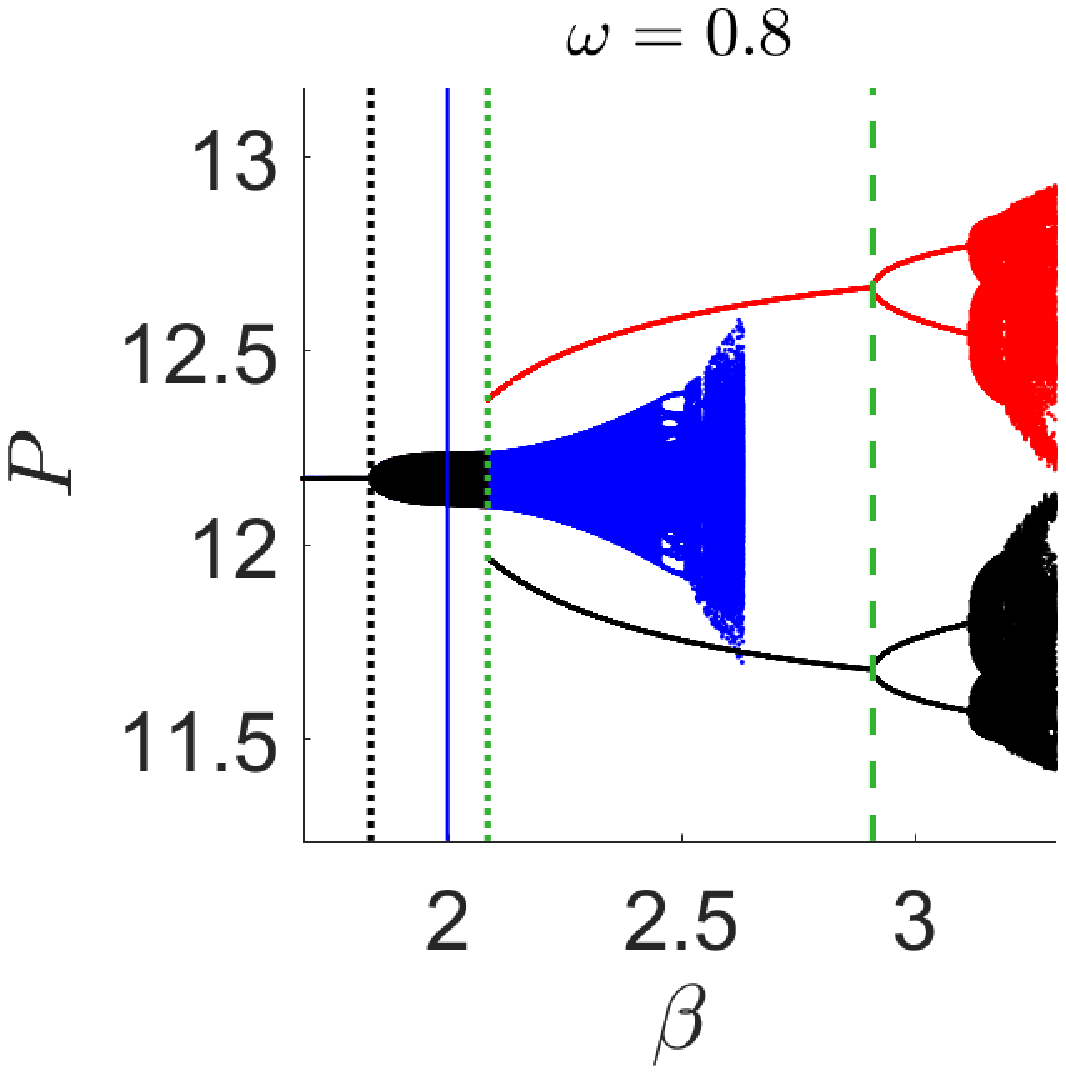}\\
    (B)
  \end{minipage}
\end{center}
\caption{\small Simulations obtained setting $\sigma=4$ and
  $\gamma=1.05.$ (A): Two-parameters bifurcation diagram. White color
  is used to identify parameters' combinations for which the initial
  datum converges toward a steady state, other colors are used for
  attractors consisting of more than a single point (cyan is used for
  attractors consisting of more than 32 points). Black lines represent
  stability thresholds of $S^{\ast},$ green lines stability thresholds
  of $S^H.$ (B): Bifurcation diagrams for $P$ on varying $\beta$ when
  $\omega=0.8.$ Different colors correspond to different initial
  conditions. (C): blow up of plot (B).} \label{fig:sim3}
\end{figure}

The last family of simulations, whose two-dimensional bifurcation
diagram is reported in Figure \ref{fig:sim3} (A), is obtained setting
$\sigma=4$ and $\gamma=1.05,$ and the corresponding stability region
is that reported in Figure \ref{fig:sr} (C). The initial datum is the
same as in the first two sets of simulations and, when dynamics
converge for $\beta>2,$ we have convergence toward $S^H.$ All the
previous considerations about the specular stability/instability
regions and stability thresholds for $S^{\ast}$ and $S^H$ with respect
to $\beta=2$ still hold. On varying $\beta$, we can have either
unconditionally unstable or mixed scenarios for both $S^{\ast}$ and
$S^H$ but, differently from the first two families of simulations, the
convergence toward the steady states may be replaced, when the
aforementioned steady states become unstable, by convergence toward
different kinds of attractors for different values of the interaction
degree parameter $\omega$. This is perceivable looking at the
stability thresholds for $S^{\ast}$ reported in Figure \ref{fig:sim3}
(A). The black dotted curve corresponds to condition \eqref{eq:sc3},
crossing which we enter the cyan region, in which attractors consist
of more then 32 points (actually, it is a closed invariant curve and
the stability loss of $S^{\ast}$ occurs through a Neimark-Sacker
bifurcation), while the black dashed line corresponds to condition
\eqref{eq:sc2}, crossing which we enter the red region, in which the
attractor is a period-2 cycle. The same holds for the green stability
thresholds for $S^H,$ too, which also in this case are obtained from
the black thresholds and numerically solving the implicit equations
used in the proof of Proposition \ref{th:stabhl}.

In the four bifurcation diagrams reported in Figures \ref{fig:sim3}
(B)-(C) we study the occurring dynamics when $\omega=0.8.$ The blue
bifurcation diagram is obtained setting
$(Y_0,P_0,Z_0)=(Y^{\ast}+0.001,P^{\ast}+0.001,Z^{\ast}+0.001)$ and
increasing $\beta$ from $1.5$ to $2.65;$ the red one is obtained
setting $(Y_0,P_0,Z_0)=(Y^{\ast}+0.001,P^{\ast}+0.001,Z^{\ast}+0.001)$
and decreasing $\beta$ from $3.3$ to $0.855;$ the black one is
obtained setting
$(Y_0,P_0,Z_0)=(Y^{\ast}-0.001,P^{\ast}-0.001,Z^{\ast}-0.001)$ and
decreasing $\beta$ from $3.3$ to $0.855;$ the magenta one is obtained
setting $(Y_0,P_0,Z_0)=(Y^{\ast}+0.001,P^{\ast}+0.001,Z^{\ast}+0.001)$
and increasing $\beta$ from $0$ to $3.3.$ As $S^{\ast}$ turns unstable
on decreasing $\beta$ (when the black dashed line is crossed), the
steady state is replaced by a 2-cycle, similarly to what happens in
the first family of simulations. However, as $S^{\ast}$ becomes
unstable on increasing $\beta$ (when there is a crossing of the black
dotted line), the steady state is replaced by a closed-invariant
curve. The attractor arising from the stability loss of $S^{\ast}$
then coexists with the stable steady states $S^L$ and $S^H.$ Further increasing $\beta,$ when such two steady states
  become unstable (green dashed line), a flip bifurcation
  occurs\footnote{Conversely, if $\beta$ is decreased, we
  numerically checked that when $S^L$ and $S^H$ become unstable a closed
  invariant curve emerges, which coexists with the blue attractor and
  quickly disappears.}. The subsequent
period-2 cycles lose stability through a Neimark-Sacker bifurcation.
\\The magenta attractor exists for the whole range of values of $\beta$
we consider. If we decrease $\beta$ from $0.854$ to $0,$ such
attractor firstly evolves toward a period-4 cycle and then a
Neimark-Sacker bifurcation occurs. If we increase $\beta$ starting
from $0.854,$ each of the two points composing the attractor is
replaced by a closed invariant curve. As $\beta$ increases, we have
that the attractor originated by the loss of stability of $S^{\ast}$
(blue bifurcation diagram) coexists with steady states $S^L$ and
$S^H,$ with a closed invariant curve, and with a period-2 cycle
(magenta bifurcation diagram), giving rise to a quite articulated
multistability situation made by qualitatively different attractors.
Once more, the common aspect is the coexistence of attractors
   encompassing optimistic, pessimistic or unbiased levels in the
   economic variables, with the addition of the magenta attractor in
   which each piece embodies large or small levels, too.

The previous considerations confirm and enrich the analytical results
about local stability. 
From a qualitative point of view, we can not speak about a unique type of mixed scenario, as those arising from the
simulations above are significantly different. The first relevant difference
can be highlighted in correspondence of the loss of stability of
$S^{\ast}$ for increasing values of $\beta$, namely when the
rightmost stability threshold is violated. In the example reported in
Figure \ref{fig:sim1}, the stable steady state
$S^{\ast}$ loses stability through a pitchfork
bifurcation, and hence the new attractor has the same complexity as
the previous one, even if it has a different interpretation from the
economic point of view. In the example reported in Figures
\ref{fig:sim2} and \ref{fig:sim3}, dynamics converging to
$S^{\ast}$ are replaced by quasi-periodic dynamics,
that is, by an attractor of different complexity, even if arisen from
$S^{\ast}.$ The second difference can be
highlighted in correspondence to the loss of stability of
$S^{\ast}$ as $\beta$ decreases, when the leftmost
stability threshold is violated. In this case, the unique steady state
is again $S^{\ast},$ but different dynamics are
possible, as the stability can be lost through a ``simple'' period-2 cycle
or through more complex quasi-periodic dynamics. We stress that, for
the same parameter values describing the stock and the real economy,
different dynamics can arise on changing the degree of interaction. 
However, even in
   the presence of a very articulated range of complex situations, all the
   dynamic scenarios can be still traced back to the static setting
   outlined in Proposition \ref{ss}, and hence easily read in terms of
   the effects of beliefs.

\subsection{Stochastic simulations}
The deterministic analysis has revealed that, starting from a situation in which the fundamental steady state is locally stable, an increase in the parameter $\beta$ generates the onset of endogenous fluctuations as well as the rise of multistability phenomena. The goal of this part is to show that a stochastic version of our model is able to generate varied and realistic dynamics, as observed in real financial markets, such as bubbles and crashes for stock prices and fat tails and excess volatility in the distributions of returns. In the presence of exogenous noise (on investors' demands) and, accordingly, of fundamental shocks to the price dynamics, periods of high volatility in the price course may alternate with periods in which prices do not depart too much from the fundamental value. Such behavior may arise when the parameter setting is located near the pitchfork bifurcation boundary and exogenous noise can occasionally spark long-lasting endogenous fluctuations around the new steady states.\\
Therefore the model is modified along the following lines. The demand placed by the two groups of agents is set as
\begin{equation}
  D_t^X=\mu(F_t^X-P_t)+\varepsilon_t^X,\, X\in\{O,P\},
\label{dem_shock}
\end{equation}
where $\varepsilon_t^X$ are sequences of independent, identically
distributed random normal variables, with zero mean and variance
$s_X^2$. These random disturbances may reflect errors in investors'
decision making processes, heterogeneity within the same group, or they may
capture the idea that it is quite difficult for investors to determine
the fundamentals (in fact fundamental values may change over time due
to real shocks), as already argued by \cite{keynes}.\\
Plugging the
perturbed demands into the price equation \eqref{mm} and acting as in
Section 2, we end up with
\begin{equation}
 P_{t}=P_{t-1}+\sigma g_P\left(\mu \left( \left( 1-\omega \right) F^{\ast }+\omega dY_{t-1}-P_{t-1}+b\left( \frac{2}{1+e^{-4b\beta \left( P_{t-1}-\left(  1-\omega\right) F^{\ast }-\omega dY_{t-1}\right) }}-1\right) \right)+\varepsilon_t\right),
\label{price_shock}
\end{equation}
where $\varepsilon_t$ is is a sequence of independent, identically
distributed random normal variables, with zero mean and variance
$s^2=(\varepsilon_t^O+\varepsilon_t^P)^2/(2\mu)^2$.\\
In analyzing the stochastic version of our model, we aim at showing
that several stylized facts of financial markets may be retrieved
that, in turn, can possibly affect the behavior of the real
sector. Thanks to their apparent explanatory power, models that have
been being developed with interacting agents and interconnected
sectors are increasingly used as tools for economic policy
recommendations (see e.g. \cite{dieci,franke,
  schmitt}).\\
In what follows, we consider the price returns, defined as
\[
R_{t}=\log (P_{t})-\log (P_{t-1}),
\]
focusing, at first, on the deviation from normality of their distribution. In all the simulations reported in this subsection we employ the same parameter setting adopted for Figure \ref{fig:sim1}.
\begin{figure}[t!]
  \centering
  \includegraphics[width=0.5\textwidth]{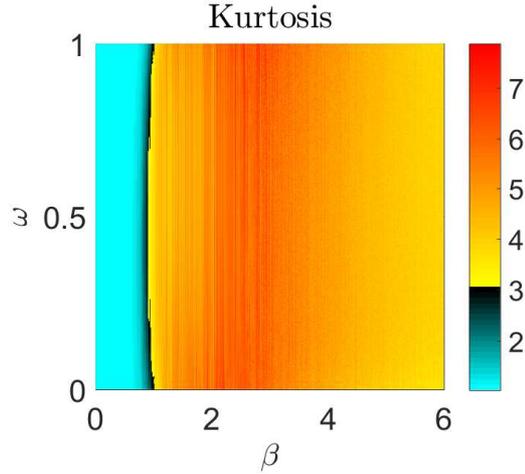}
  \caption{\small Kurtosis of returns distribution in the $(\beta, \omega)$-plane. Different colors are associated with different kurtosis values, according to the colors bar. The distributions of returns exhibits the fat-tail property (with more probability mass in the tails than warranted by a normal
distribution with identical mean and standard deviation) with the highest kurtosis levels obtained for values of the intensity of choice $\beta$ roughly between $2$ and $4$. 
}
\label{kurt}
\end{figure}
Figure \ref{kurt} displays the values of the kurtosis in the distributions of the log-returns as the parameters $\beta$ and $\omega$ jointly vary. We observe that, independently of the degree of interaction between the two markets, as the intensity of choice increases from 2 to 4, the kurtosis increases as well, reaching values that reveal a non-normal distribution of the log-returns. In fact, kurtosis measures how fat the tails of the distributions are compared to the ones of a normal distribution. In this respect, to observe fat tails it is essential to have trajectories in which prices move frequently far from their average and, in turn, this is related to the behavior of the intensity of choice $\beta$. For example, looking at Figure \ref{fig:sim1}, we observe that, as $\beta$ increases, coexisting attractors appear, due to the occurrence of the pitchfork bifurcation, and this reflects into a large kurtosis, e.g. for $2<\beta<4$ roughly. As $\beta$ increases further, the bifurcation diagram of Figure \ref{fig:sim1} shows that the dynamics turn into chaotic and this translates into slightly reduced values for the kurtosis in the log-returns of the stochastic version of the model. Such results are confirmed also by the top and bottom left panels of the Figure \ref{stoch_sim}, obtained for $\omega=0$ and $\omega=1,$ respectively. As the degree of market interaction grows, the switching between the optimistic and pessimistic strategies towards the best performing rule is able to increase the kurtosis in the distributions of the log-returns.\\ Another peculiar stylized fact of returns' times series is the volatility clustering, namely, the occurrence of several consecutive periods characterized by high volatility alternated with others characterized by low volatility. This is graphically evident from the central panels of Figure \ref{stoch_sim}, in which a typical example of time series of returns, obtained for the parameter setting used for Figure \ref{fig:sim1} and $s=0.15\,P^{\ast}/F$, is reported. Finally, volatility clustering is highlighted by the typical strongly positive, slowly decreasing autocorrelation coefficients of absolute returns, reported in the right top and bottom panels of Figure \ref{stoch_sim}.\\
Overall, we may come up with the conclusion that the stochastic
version of our model with interacting real and financial sectors is
able to replicate key empirical regularities of actual stock markets.
In particular, the functioning of the stochastic version of the model
descends from the functioning of its deterministic counterpart. In the
deterministic setup, endogenous dynamics arise when a model parameter
crosses the Neimark-Sacker bifurcation boundary, as well as coexisting
attractors appear when the pitchfork bifurcation takes place. In the
stochastic version, when the two non-fundamental steady states appear
and are still locally stable, the interplay of nonlinear elements and
random shocks leads to realistic dynamics.

\begin{figure}[t!]
  \begin{center}
    \begin{minipage}{0.32\textwidth}
      \centering
      \includegraphics[width=\textwidth,trim=1.5cm 0 1.5cm 0]{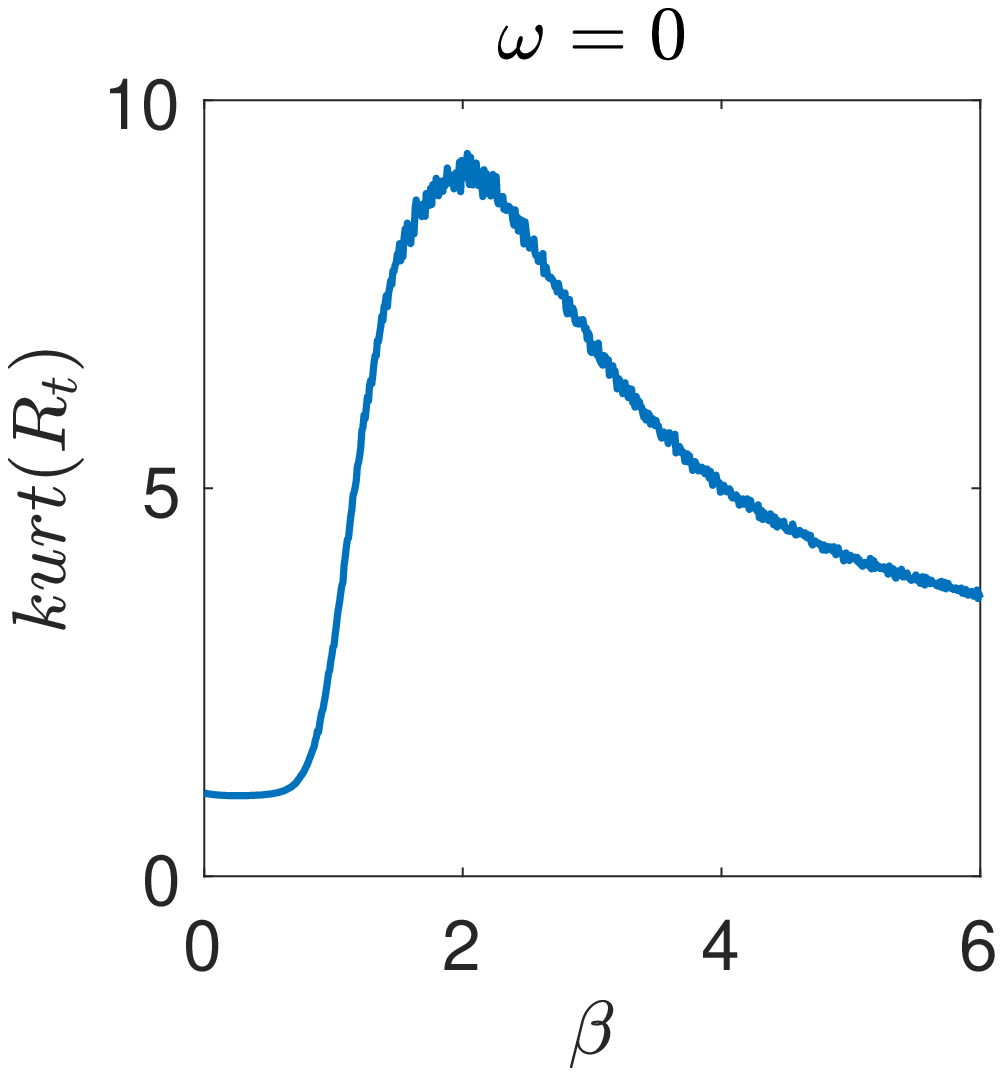}\\
      (A)
    \end{minipage}
    \begin{minipage}{0.32\textwidth}
      \centering
      \includegraphics[width=\textwidth,trim=1.5cm 0 1.5cm 0]{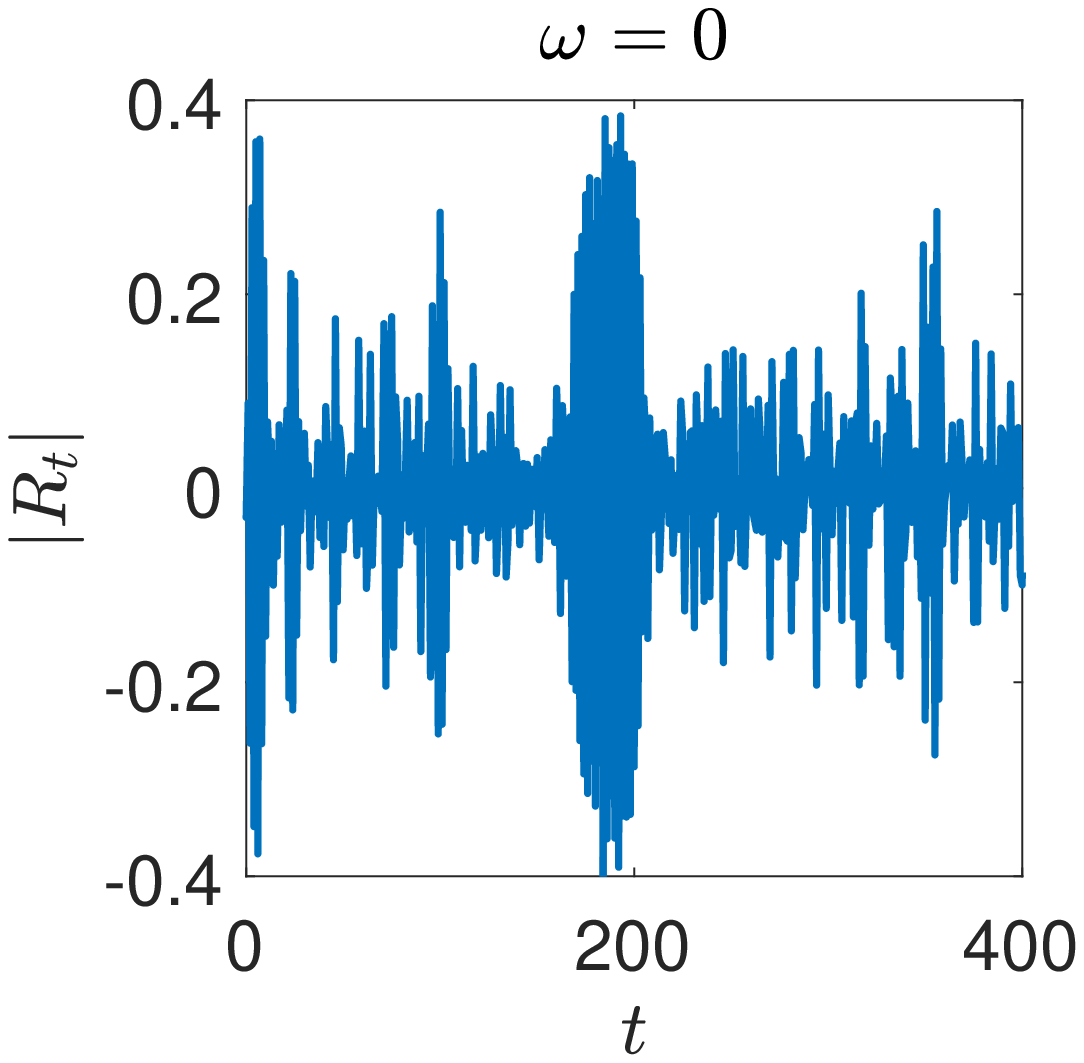}\\
      (B)
    \end{minipage}
    \begin{minipage}{0.32\textwidth}
      \centering
      \includegraphics[width=\textwidth,trim=1.5cm 0 1.5cm 0]{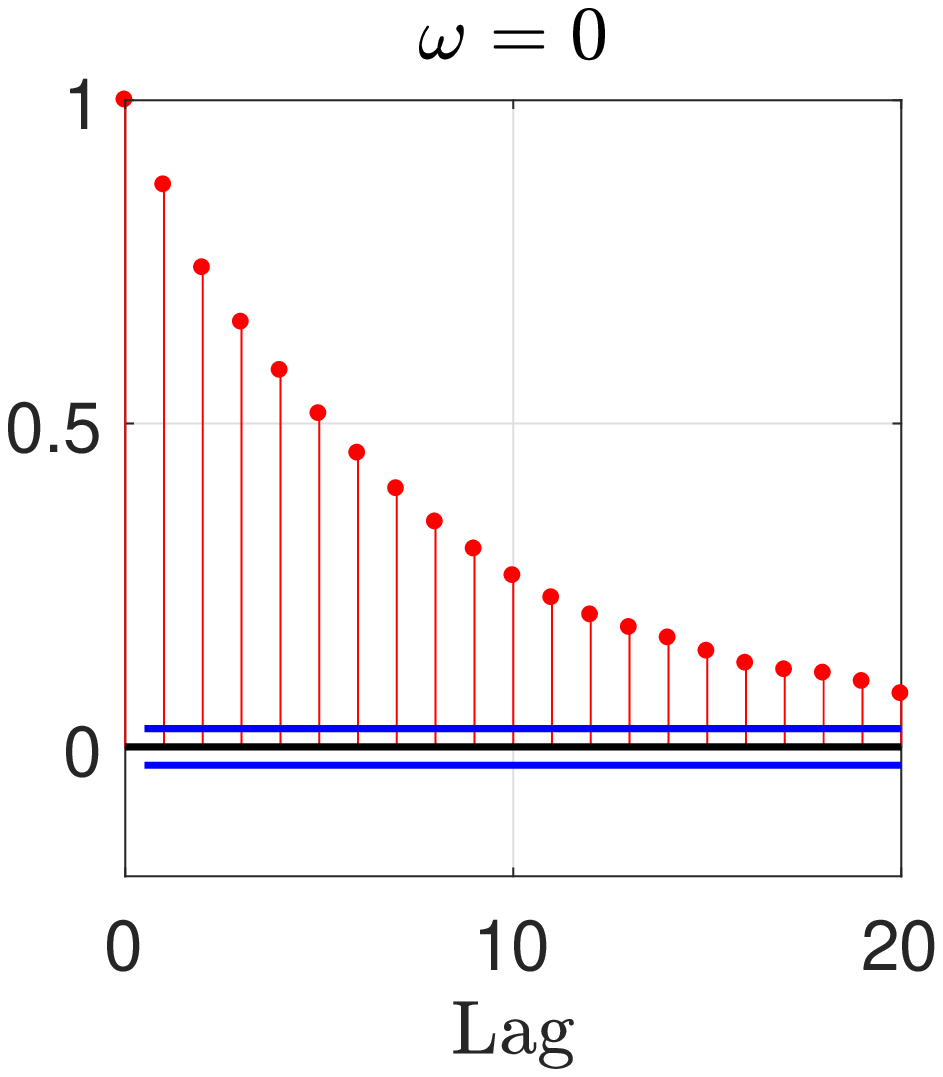}\\
      (C)
    \end{minipage}
    \begin{minipage}{0.32\textwidth}
      \centering
      \includegraphics[width=\textwidth,trim=1.5cm 0 1.5cm 0]{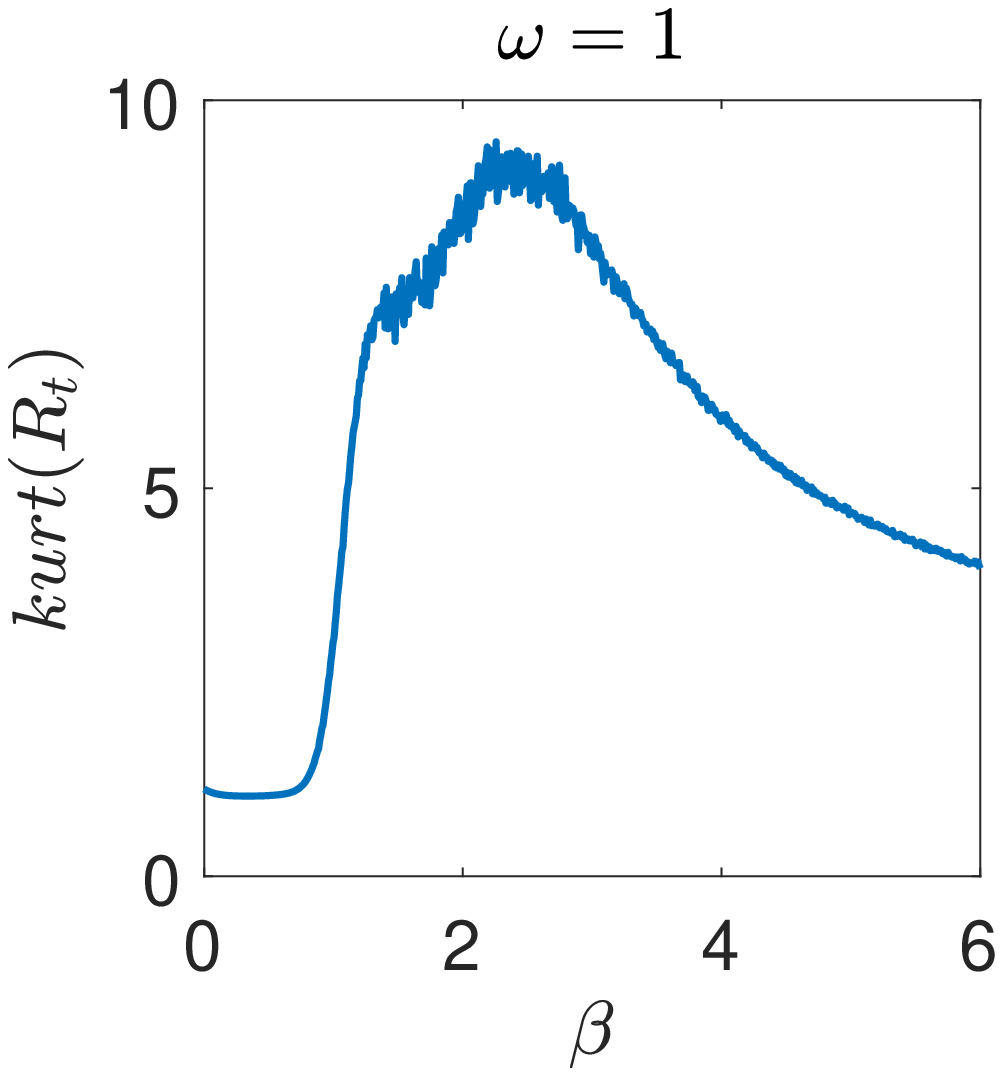}\\
      (D)
    \end{minipage}
    \begin{minipage}{0.32\textwidth}
      \centering
      \includegraphics[width=\textwidth,trim=1.5cm 0 1.5cm 0]{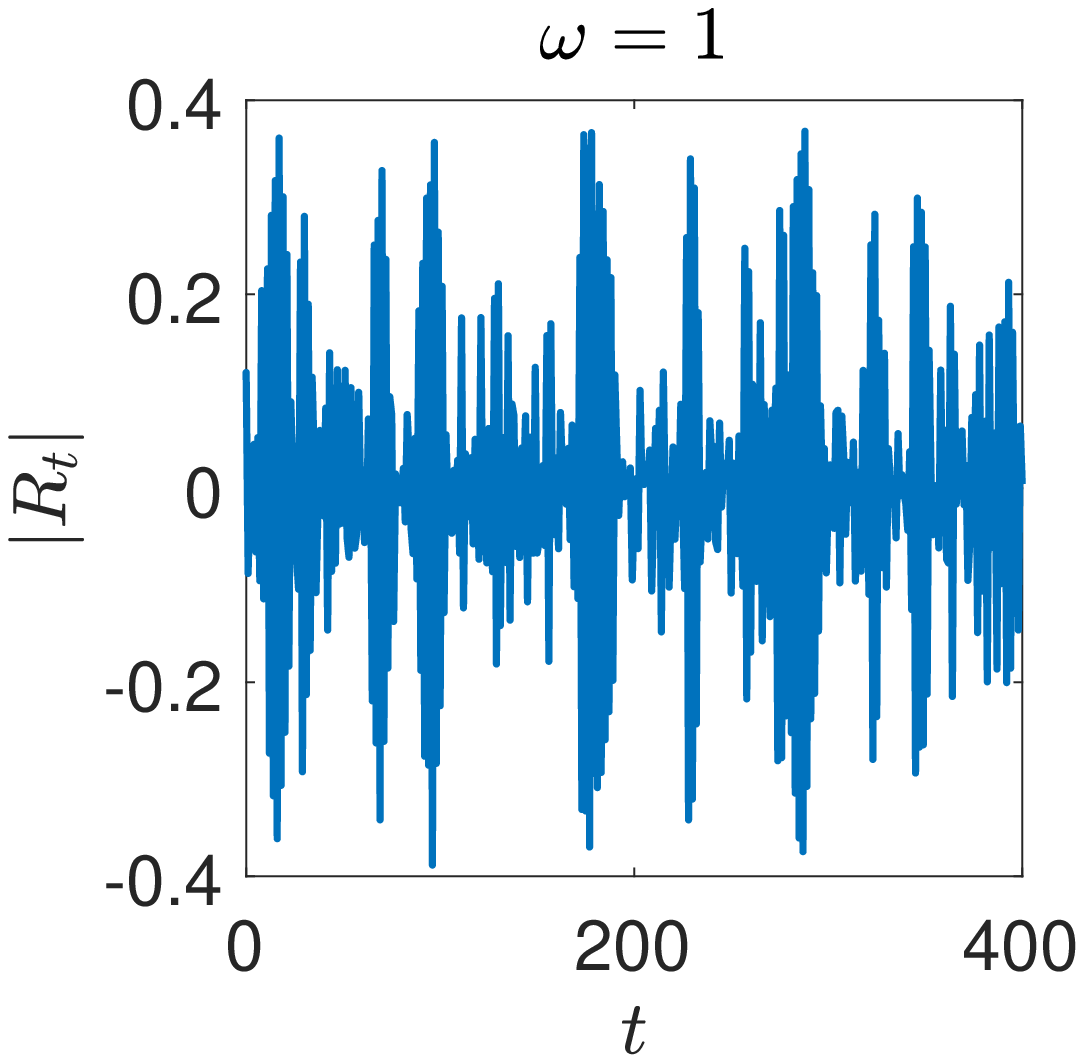}\\
      (E)
    \end{minipage}
    \begin{minipage}{0.32\textwidth}
      \centering
      \includegraphics[width=\textwidth,trim=1.5cm 0 1.5cm 0]{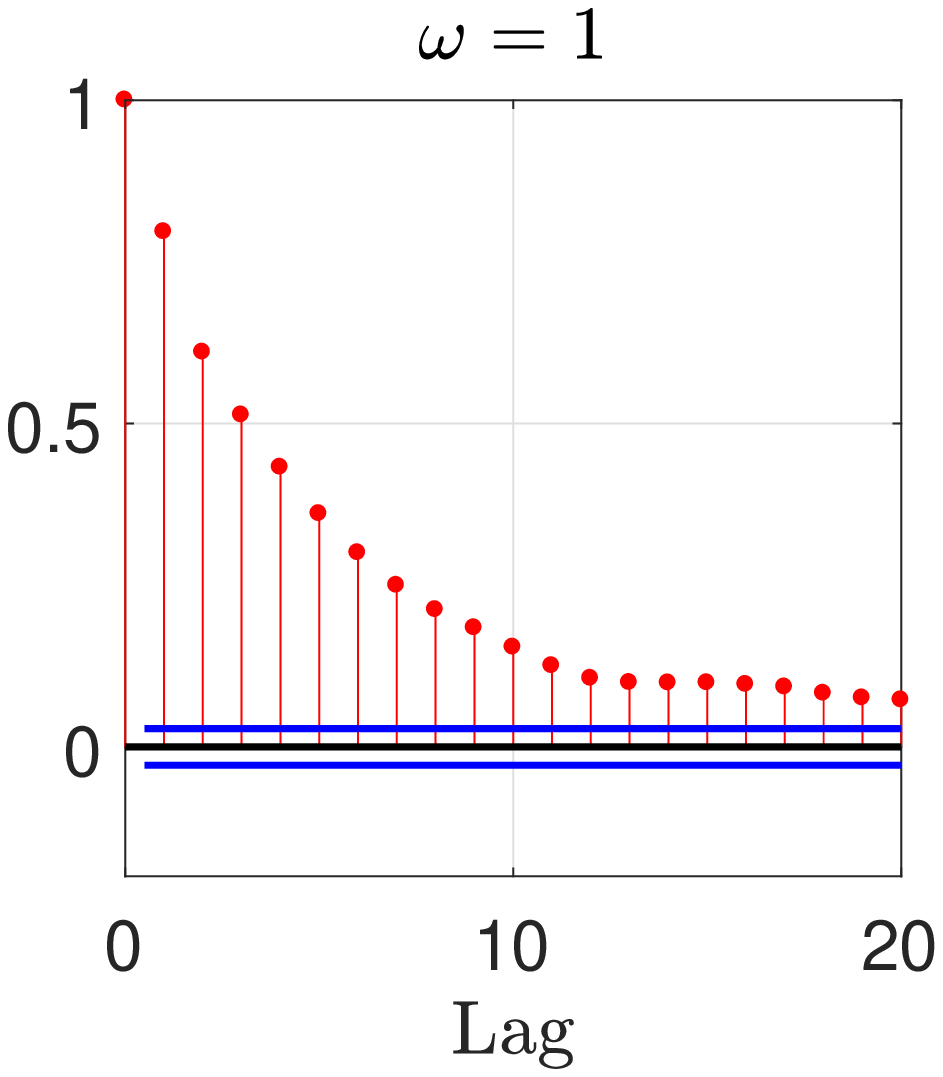}\\
      (F)
    \end{minipage}
  \end{center}
  \caption{\small The left column represents the evolution
of kurtosis when $\omega=0$ (A) and $\omega=1$ (D), the central column
plots the time series of absolute
returns exhibiting phases of high volatility alternating
with phases of low volatility either for $\omega=0$ (B) or
$\omega=1$ (E),
the right column shows the autocorrelation function for absolute
returns for $\omega=0$ (C) and $\omega=1$ (F) obtained for $\beta=6$, revealing
evidence for volatility clustering.
}
\label{stoch_sim}
\end{figure}

\section{Conclusions}
The proposed model shows that beliefs can influence the possible final
outcomes of an economy not only in terms of increasing the complexity
in the possible endogenous dynamics of the economic variables, but
also in a more constitutive way. From the static point of view,
strongly heterogeneous expectations can alter the set of steady
states, originating strongly polarized multiple steady states that in
turn reflect the optimism/pessimism of the beliefs. This can then
dynamically evolve into complex attractors still characterized by
optimism/pessimism of the beliefs, provided that such bias is strong
and relevant enough.  To sum up, it is certainly evident the role
played by the occurrence of the pitchfork bifurcation, with the
optimistic/pessimistic steady states common to the various sets of
simulations. But this is not the sole effect. From a broader
perspective, the occurrence of such biased steady states coexisting
with the unbiased one is the result of the repeated strategies'
evaluation made by the agents, which may lead the economy to converge
to high, intermediate or low, steady or endogenously fluctuating,
levels of national income (and prices as well), with the consequent
richness of dynamical behaviors related to all equilibria and their
eventual destabilization.\\ The investigation, performed with a mix of
analytical, numerical and empirical tools, has a twofold aim: besides
understanding the influence of the beliefs on the course of the
economy in terms of possible final outcomes and stability of the
desired equilibrium level of national income and prices, we are
interested in the role of market integration and in its possible
beneficial effect on the overall economy in terms of increasing level
of national income.  We showed that an increase of market interaction
may be either beneficial, as it may increase the national income of
the economy, or it may generate a contraction in its level, as no real
benefits to the society deriving from the activities based on an
increased integration may be appraised. The potential beneficial
effect of the degree of interaction seems to emerge also from the
local stability analysis, with the found stabilizing effect of the
interaction degree for suitable values of the intensity of choice
parameter, which measures agents' reactivity toward the best
performing strategy. This interplay between interaction, imitation and
beliefs is crucial also in view of understanding and interpreting the
stylized facts that our model is able to reproduce when buffeted with
stochastic noise, like the recurrent boom-bust phenomena observed in
the real financial markets together with several stylized facts
regarding stock price returns, such as positive autocorrelation,
volatility clustering and non-normal distribution characterized by a
high kurtosis and fat
tails.\\
The proposed model belongs to a class of models that, we believe, may
be useful in order to interpret and understand the recent developments
in the increasingly interconnected economy, which constitutes a true
challenge for regulatory measures which seek to appease abrupt market
fluctuations. We hope that our work will stimulate further
investigations in such direction, especially as concerns the deepening
of the understanding of the role of the agents' beliefs. The financial
crisis at the end of the last decade has not only made evident that
our comprehension of the dynamics of real and financial markets is
still incomplete, but it has also revealed how important is to
intensify the research activity in this field.

%
%
%

\newpage
\appendix
\section{Proofs of Propositions}

\begin{proof}[Proof of Proposition \ref{ss}]
  In order to find the steady states, we set $P_{t+1}=P_t=P,\,Y_{t+1}=Y_t=Y$
  and $Z_{t+1}=Z_t=Z$ in \eqref{map}. From \eqref{mapZ} we have $Z=Y.$
  Inserting it in \eqref{mapY} and \eqref{mapP} since, recalling
  \eqref{eq:spa}, we have that $g_I(0)=0$ and that $g_P(z)=0$ is
  solved by $z=0$ only, we obtain
  \begin{subequations}
    \begin{empheq}[left=\empheqlbrace]{align}
      &Y=A+cY+\omega hP, \label{eqY}\\
      &\left( 1-\omega \right) F^{\ast }+\omega dY-P+b\left(\frac{2}{%
          1+e^{-4b\beta \left( P-\left( 1-\omega \right) F^{\ast }-\omega dY\right) }}%
        -1\right) =0. \label{eqP}
    \end{empheq}%
    \label{eq}
  \end{subequations}

  The steady state existing independently of $\beta$ and $b$ is
  found setting $P=\left( 1-\omega \right) F^{\ast }+\omega dY,$ which
  immediately provides $(Y,P,Z)=(Y^{\ast}, P^{\ast}, Z^{\ast})=S^{\ast},$
whose components, by \eqref{a1}, are positive independently of
$\omega.$

In order to verify the existence of additional steady states, we
notice that, from \eqref{eqY}, national income $Y$ can be rewritten as
a function of the asset price $P$ as
\begin{equation}\label{yp}
Y=\dfrac{A+\omega hP}{1-c}.
\end{equation}
Inserting such expression in \eqref{eqP}, we find the following
equation in $P,$ whose solutions are the steady state values of
price:
\begin{equation}\label{im}
  \left( 1-\frac{%
      \omega ^{2}dh}{1-c}\right) P-\left( 1-\omega \right) F^{\ast }-\frac{%
    \omega dA}{1-c}+b-\frac{2b}{1+e^{-4b\beta \left( \left( 1-\frac{%
            \omega ^{2}dh}{1-c}\right) P-\left( 1-\omega \right) F^{\ast }-\frac{%
          \omega dA}{1-c}\right)} }=0.
\end{equation}
\noindent
Defining
\begin{equation}\label{f1}
  f_{1}(P)\equiv \left( 1-\frac{%
      \omega ^{2}dh}{1-c}\right) P-\left( 1-\omega \right) F^{\ast }-\frac{%
    \omega dA}{1-c}+b
\end{equation}
and
\begin{equation}\label{f2}
  f_{2}(P)\equiv \frac{2b}{1+e^{-4b\beta \left( \left( 1-\frac{%
            \omega ^{2}dh}{1-c}\right) P-\left( 1-\omega \right) F^{\ast }-\frac{%
          \omega dA}{1-c}\right)} }\,,
\end{equation}
we observe that $f_{1}$ is a straight line that crosses the horizontal
axis of the Cartesian coordinate plane at the point
$P_{\ell}=\frac{\left( 1-c\right) \left( 1-\omega \right) F^{\ast
  }+\omega dA-\left( 1-c\right) b}{1-c-\omega ^{2}dh}$
and the vertical axis at
$f_1(0)=b-\left( 1-\omega \right) F^{\ast }-\frac{\omega dA}{1-c}.$ We
stress that such points can be either positive or negative and
that the function $f_{1}$ is increasing in $P$. In fact, thanks to the
positivity condition for the steady state $P^{\ast },$ we have
$1-\frac{\omega ^{2}dh}{1-c} >0.$\\
Simple computations allow to conclude that $f_{2}$ is an increasing
function in $P$ and it has an inflection point at $P=P^{\ast
}$, 
it is convex for $
P\in \left( -\infty,\,P^{\ast }\right) $ and concave for $P\in \left(
  P^{\ast },\,+\infty \right).$ Moreover,
$f_{2}(P)$ crosses the vertical axis at the point $f_2(0)=\frac{%
  2b}{1+e^{4b\beta \left( \left( 1-\omega \right) F^{\ast
      }+\frac{\omega dA}{%
        1-c}\right)} }$.  We always find $f_{1}\left(P^{\ast }\right)
=f_{2}\left(P^{\ast }\right)
$ and, according to the values assumed by the model parameters,
$f_1$ and $f_2$ can also meet at some points $%
P^L$ and $P^H,$ with $P^L\in \left( P_{\ell},\,P^{\ast }\right)
$ and $P^H\in \left( P^{\ast },\,P_{h}\right),$ where
$P_{h}$
is the intersection point between the horizontal line $g(P)=2b$
and $f_{1}(P).$
Recalling the expression of $P^{\ast},$
it is easy to see that $P_{\ell}$
and $P_h$
respectively correspond to the left and right bounds for $P$
in \eqref{bounds}. In order to determine the bounds on $Y,$
  it is possible to proceed as done for $P,$
  obtaining an equation like \eqref{im} from \eqref{eq}, this
  time in terms of $Y.$\\
More precisely, since $f_{1}$
is an increasing straight line and $f_{2}$
is an increasing function in $P$,
with an inflection point at $P=P^{\ast
},$ being convex for $ P\in \left( -\infty,\,P^{\ast
  }\right)$ and concave for $P\in \left( P^{\ast },\,+\infty
\right),$ then if $f_1'(P^{\ast }) \ge f_2'(P^{\ast
})$ we have exactly one solution to \eqref{im}, that is, $P=P^{\ast
},$ while if $f_1'(P^{\ast })<f_2'(P^{\ast
})$ we have exactly three distinct solutions, $P^L,\,P^{\ast
}$ and $P^H,$ to \eqref{im}. We notice that $f_1'(P^{\ast })<f_2'(P^{\ast })$ corresponds to
\[
 1-\frac{%
\omega ^{2}dh}{1-c} < 2b^2\beta\left( 1-\frac{%
\omega ^{2}dh}{1-c}\right),
\]
which, recalling \eqref{a1}, reduces to \eqref{cond}.


To guarantee the economic meaningfulness of $P^H$
and $P^L,$
we require their positivity. A sufficient condition to ensure this
consists in setting $P_{\ell}>0,$
which is verified for each $\omega\in[0,1]$
when $b(1-c)<\min\{dA,F^{\ast
}(1-c)\}.$ Since $f_2(P)$ is strictly positive and
$f_1(P)$ is strictly increasing, their intersection
is realized for $P>0.$

Finally, it is easy to show that $f_1$
and $f_2$
are symmetric w.r.t. $P=P^{\ast
},$ i.e., that $f_i(P^{\ast }+\varepsilon)-f_i(P^{\ast })=f_i(P^{\ast
})-f_i(P^{\ast }-\varepsilon),$ for every
$\varepsilon>0$
and $i\in\{1,2\}.$
For $f_1$
that is true because it is a straight line, while for $f_2$
a direct computation shows that $f_2(P^{\ast
}+\varepsilon)-f_2(P^{\ast })=f_2(P^{\ast })-f_2(P^{\ast
}-\varepsilon)=b\bigl(e^{4b\beta\varepsilon ( 1-\frac{%
    \omega ^{2}dh}{1-c})}-1\bigr)/\bigl(e^{4b\beta\varepsilon (
  1-\frac{%
    \omega ^{2}dh}{1-c})}+1\bigr).$

We then have that, when \eqref{cond} holds, \eqref{im} is solved not
only by $P^{\ast },$ but also by $P^L$ and $P^H,$ symmetric values
w.r.t. $P^{\ast }.$ From \eqref{yp} and since in the steady states we
have $Z=Y,$ it follows that national income positively depends on the
asset price and thus to $P^L$ and $P^H$ correspond $Y^L$ and $Y^H,$
with $Y^L<Y^{\ast }<Y^H,$ as well as $Z^L$ and $Z^H,$ with
$Z^L<Z^{\ast }<Z^H$.
Moreover, if $P^H$
and $P^L$
are positive, then, from \eqref{yp} 
, $Y^H,Y^L,Z^H$
and $Z^L$
are positive, too.
Finally, by the linearity of \eqref{yp} and by
the symmetry of $P^L$ and $P^H$ w.r.t. $P^{\ast },$ it follows that
$Y^L$ and $Y^H$ are symmetric w.r.t. $Y^{\ast },$ and, since in the
steady states we have $Z=Y,$ then $Z^L$ and $Z^H$ are symmetric
w.r.t. $Z^{\ast }.$ This concludes the proof.
\end{proof}

\begin{proof}[Proof of Proposition \ref{scbeta}]
  We start noting that, from \eqref{yp}, $Y$ positively depends on
  $P$, so it is sufficient to study the monotonicity for $P$ on
  varying either $\beta$ or $b$. Setting
  \begin{equation}\label{defs}
    W=\left( 1-\frac{\omega ^{2}dh}{1-c}\right) P-\left( 1-\omega \right) F^{\ast }-\frac{\omega dA}{1-c},
  \end{equation}
  we may rewrite \eqref{im} as
  \begin{equation}\label{imp}
    W+b =\frac{2b}{1+e^{-4b\beta W}},
\end{equation}
from which it follows that
\begin{equation}\label{s}
e^{- 4 b \beta W} = \frac{b - W}{b + W}.
\end{equation}
Moreover, it holds that
\begin{equation}\label{s2}
  W = \frac{1 - c - \omega^2 dh}{1 - c}(P - P^{\ast}).
\end{equation}

For $W \in (- b, b)$ on both sides of equation \eqref{s} there are positive,
strictly decreasing functions of $W$ and the rhs of \eqref{s}
is also independent of $\beta.$ In what follows, we make
reference to some explanatory plots reported in Figure
\ref{fig:scproof}.

Firstly we consider $P^H (\beta).$ From (\ref{bounds}) we have that
$W^H (\beta),$ defined by (\ref{s2}), belongs to $(0, b).$ Setting
$W \in (0, b),$ the lhs of (\ref{s}) is decreasing with respect to
$\beta$, so that if $\beta_1 < \beta_2$ we find
$e^{- 4 b \beta_1 W} > e^{- 4 b \beta_2 W}.$ An illustrative example
is reported in Figure \ref{fig:scproof} (A), from which it is evident
that the last inequality guarantees that the unique solution
$W^H (\beta)$ of \eqref{s} on $(0, b)$ increases with $\beta$, and
hence, recalling (\ref{s2}), $P^H (\beta)$ increases, too. We proceed
in a similar way for $P^L$. From (\ref{bounds}), we have that
$W^L (\beta),$ defined by (\ref{s2}), belongs to $(- b, 0) .$ Setting
$W \in (- b, 0)$, the lhs of \ (\ref{s}) is increasing with respect to
$\beta,$ so that if $\beta_1 < \beta_2$ we find
$e^{- 4 b \beta_1 W} < e^{- 4 b \beta_2 W}$. An illustrative example
is reported in Figure \ref{fig:scproof} (B), from which it is evident
that the last inequality guarantees that the unique solution $W^L (\beta)$
of \eqref{s} on $(- b, 0)$ decreases with $\beta$, and hence,
recalling (\ref{s2}), $P^L (\beta)$ decreases, too. The two limits of
$P^j,\,j\in\{H,L\},$ as $\beta\rightarrow+\infty$ can be easily
obtained by noting that the lhs of \eqref{s} pointwise approaches $0$
for each $W\in(0,b),$ so that $W^H(\beta)$ tends to $b,$ and the lhs of \eqref{s}
pointwise approaches $+\infty$ for each $W\in(-b,0),$ so that $W^L(\beta)$ tends to $-b.$
Using \eqref{s2} allows concluding. For the limits of $Y^j,\,j\in\{H,L\},$
it is sufficient to compute the limit as $\beta\rightarrow+\infty$ of
the rhs of \eqref{yp}, in which we replace $P$ with either $P^H$ or
$P^L$ and we use the just computed limits.

To study the monotonicity of $P^L (b)$ and $P^H (b)$ with respect to $b$, we
rewrite \eqref{s} as
\begin{equation}
  e^{- 4 \beta W} = \left( \frac{b - W}{b + W} \right)^{1 / b}, \label{s3}
\end{equation}
which is well defined on $W \in (- b, b)$ and shares the solutions with
(\ref{s2}). The lhs is represented by a positive, strictly decreasing function
of $W.$ Let us consider $b_1 < b_2$ and the corresponding steady states $P^H
(b_1)$ and $P^H (b_2)$. From (\ref{bounds}) we have that $W^H (b_1)$ and $W^H
(b_2),$ defined by (\ref{s2}), respectively belong to $(0, b_1)$ and $(0, b_2)
.$ For $W \in (0, b_1)$, noting that $\frac{b_1 - W}{b_1 + W} < \frac{b_2 -
W}{b_2 + W}$, we find $\left( \frac{b_1 - W}{b_1 + W} \right)^{1 / b_1} < \left( \frac{b_2
  - W}{b_2 + W} \right)^{1 / b_2} .$
An illustrative example is reported in Figure \ref{fig:scproof} (C),
from which it is evident that the last inequality allows concluding
that if $W^H (b_2) \in (0, b_1) $ it holds that
$W^H (b_1) < W^H (b_2)$ (indeed, also if
$W^H (b_2) \in [b_1, b_2),$ since we have
$W^H (b_1) \in (0, b_1)$). This implies that $W^H (b)$ increases with
$b$, and hence, recalling (\ref{s2}), $P^H (b)$ increases, too.

Now we turn our attention to $P^L (b_1)$ and $P^L (b_2),$ still
assuming $b_1 < b_2$. From (\ref{bounds}) we have that $W^L (b_1)$ and
$W^L (b_2),$ defined by (\ref{s2}), respectively belong to
$(- b_1, 0)$ and $(- b_2, 0) .$ For $W \in (- b_1, 0)$, noting that
$\frac{b_1 - W}{b_1 + W} > \frac{b_2 - W}{b_2 + W}$, we find
$\left( \frac{b_1 - W}{b_1 + W} \right)^{1 / b_1} > \left( \frac{b_2 -
    W}{b_2 + W} \right)^{1 / b_2}.$
An illustrative example is reported in Figure \ref{fig:scproof} (D),
from which it is evident that the last inequality allows concluding
that if $W^L (b_2) \in (- b_1, 0)$ it holds that $W^L (b_1) > W^L (b_2)$
(indeed, also if $W^L (b_2) \in (- b_2, - b_1]$, since we have
$W^L (b_1) \in (- b_1, 0)$). This implies that $W^L (b)$ decreases
with $b$, and hence, recalling (\ref{s2}), $P^L (b)$ decreases, too.
\begin{figure}[t!]
\begin{center}
  \begin{minipage}{0.24\textwidth}
    \centering
    \includegraphics[width=\textwidth,trim=1.7cm 0cm 2.3cm 0cm, clip=true]{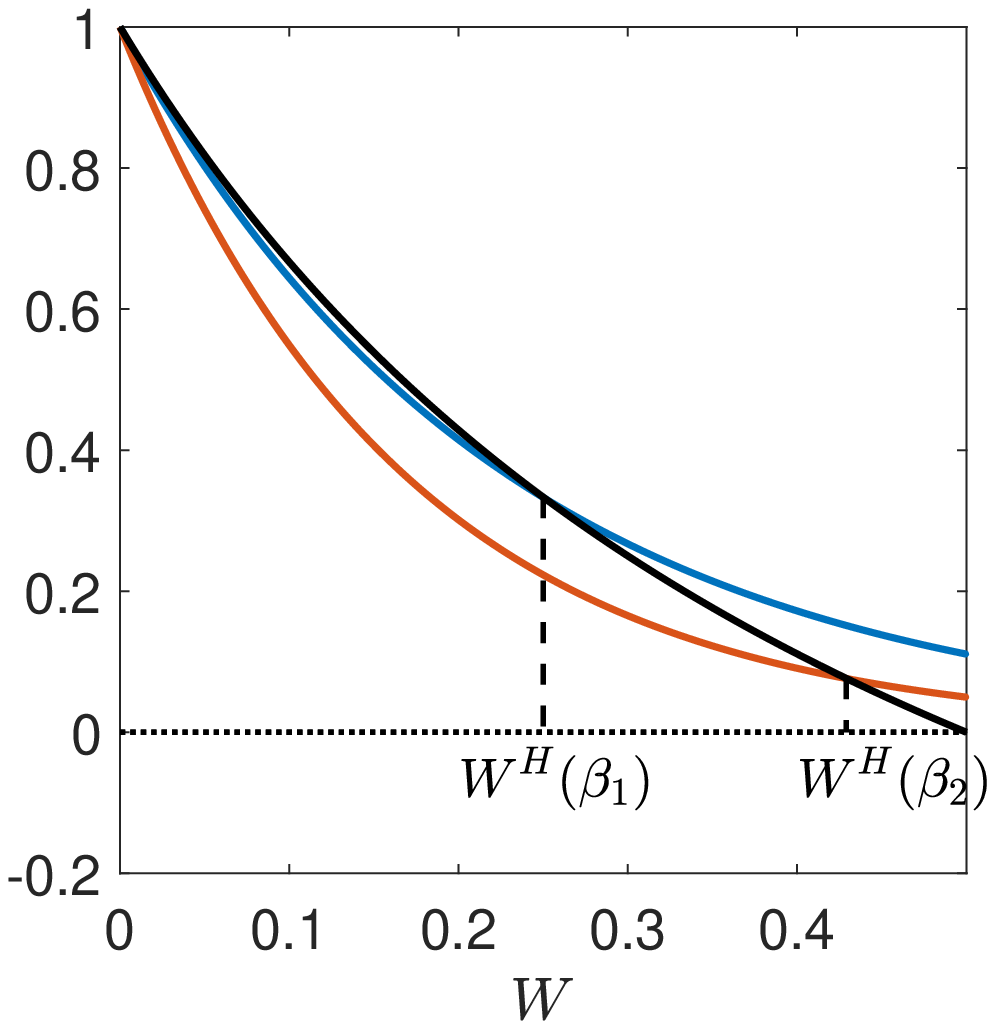}\\
    (A)
  \end{minipage}
  \begin{minipage}{0.24\textwidth}
    \centering
    \includegraphics[width=\textwidth,trim=1.7cm 0cm 2.3cm 0cm, clip=true]{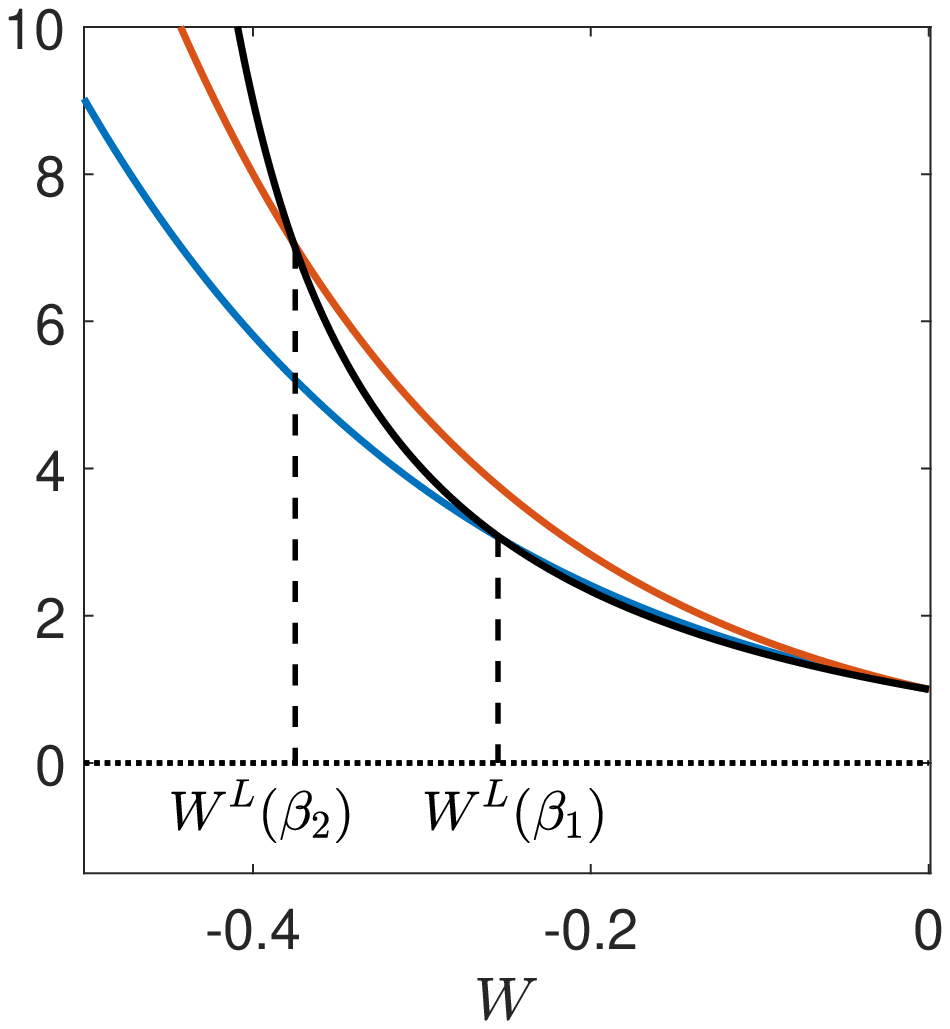}
    (B)
  \end{minipage}
  \begin{minipage}{0.24\textwidth}
    \centering
    \includegraphics[width=\textwidth,trim=1.7cm 0cm 2.3cm 0cm,
    clip=true]{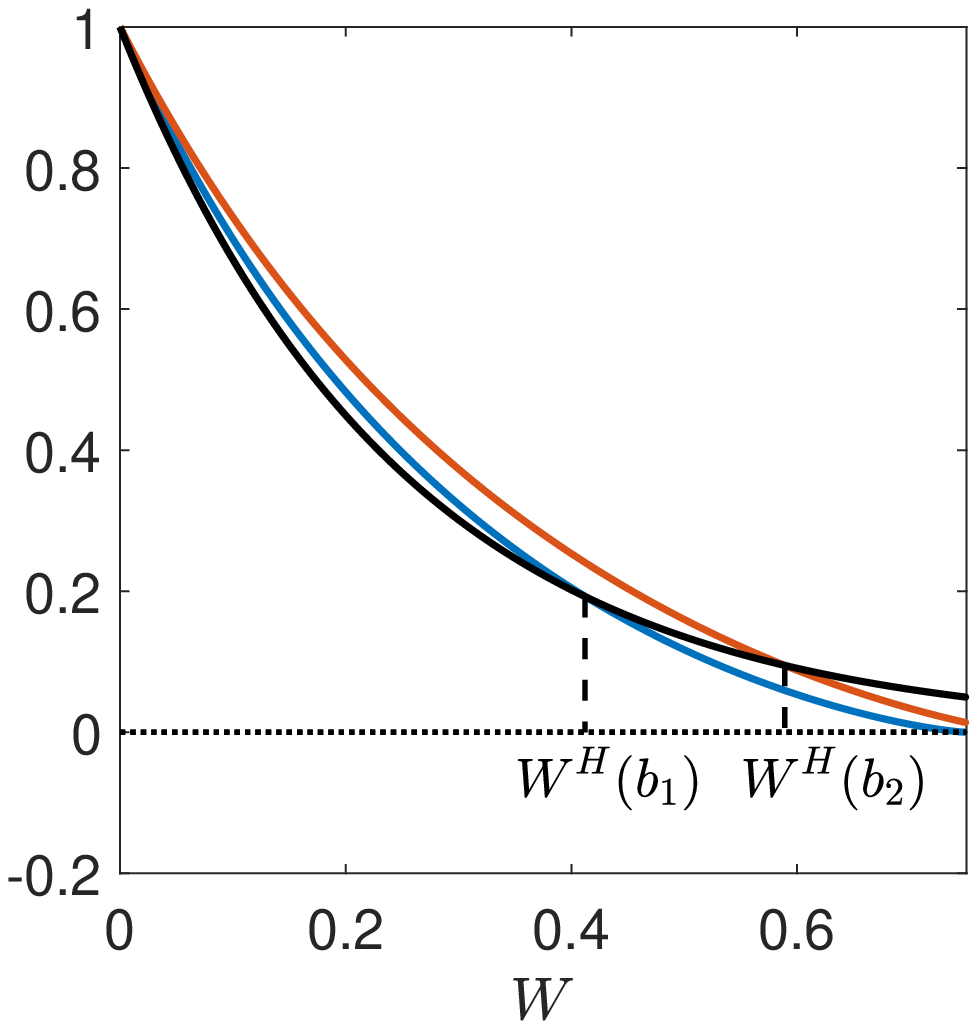}
    (C)
  \end{minipage}
    \begin{minipage}{0.24\textwidth}
    \centering
    \includegraphics[width=\textwidth,trim=1.7cm 0cm 2.3cm 0cm,
    clip=true]{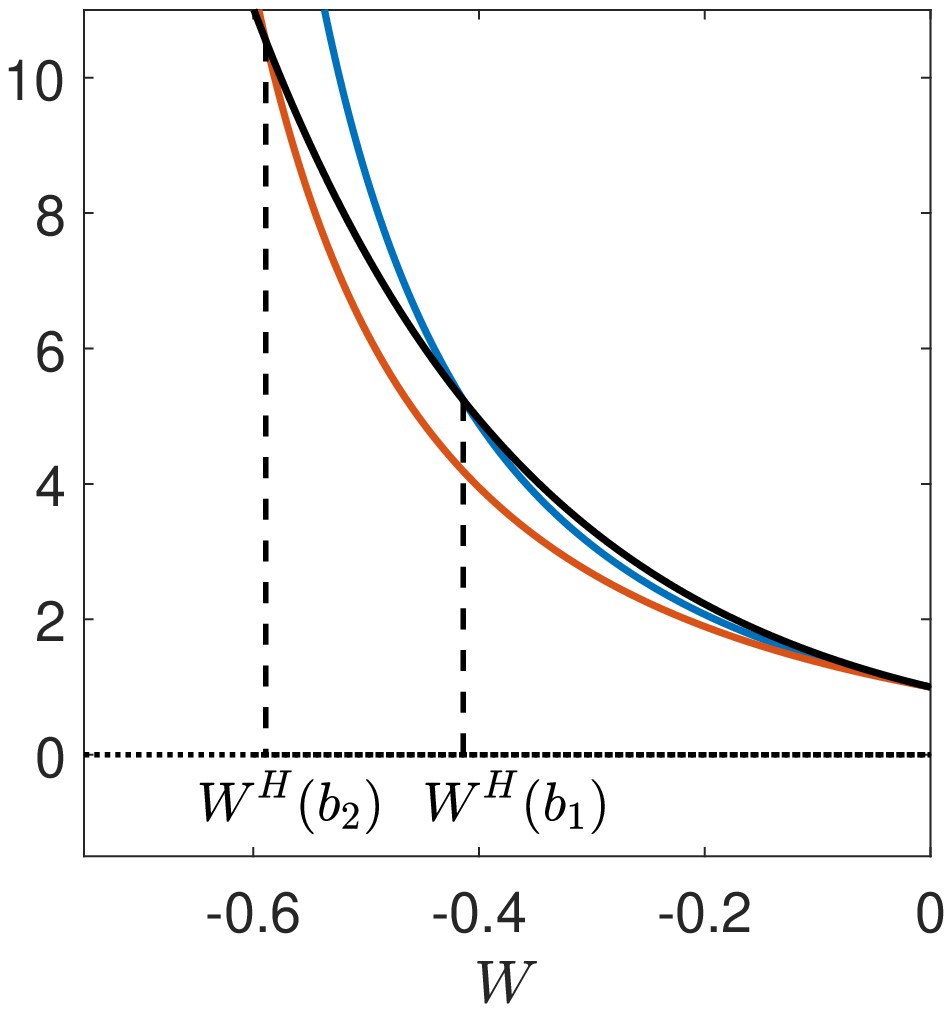}
    (D)
  \end{minipage}
\end{center}
\caption{(A), (B): Intersections between the function representing the
  rhs of \eqref{s} (black) and two functions representing the lhs of
  \eqref{s} respectively for $\beta=\beta_1$ (blue) and
  $\beta=\beta_2>\beta_1$ (red), when $W\in(0,b)$ (plot (A)) and
  $W\in(-b,0)$ (plot (B)). (C),(D): Intersections between the function representing the
  lhs of \eqref{s3} (black) and two functions representing the rhs of
  \eqref{s3} respectively for $b=b_1$ (blue) and
  $b=b_2>b_1$ (red), when $W\in(0,b_1)$ (plot (C)) and
  $W\in(-b_1,0)$ (plot (D)).} \label{fig:scproof}
\end{figure}
\end{proof}

\begin{proof}[Proof of Proposition \ref{prop3}]
Let us start focusing on the effects of $\omega$ on $P^{\ast}.$
A direct computation shows that
$$\frac{\partial P^{\ast}}{\partial\omega}=\frac{dh(dA-F^{\ast}(1-c))\omega^2+2dh F^{\ast}(1-c)\omega+(1-c)(dA-F^{\ast}(1-c))}{(1-c-\omega^2 dh)^2}.$$
Such derivative is non-negative for each $\omega\in[0,1]$ when
$dA-F^{\ast}(1-c)\ge 0$ and in this case $P^{\ast}$ increases with
$\omega,$ for every $\omega.$ When instead $dA-F^{\ast}(1-c)<0,$ the
numerator of ${\partial P^{\ast}}/{\partial\omega}$ may become
negative. More precisely, ${\partial P^{\ast}}/{\partial\omega}>0$ is
equivalent to
$dh(F^{\ast}(1-c)-dA)\omega^2-2dh
F^{\ast}(1-c)\omega+(1-c)(F^{\ast}(1-c)-dA)<0.$ When the discriminant
of
$dh(F^{\ast}(1-c)-dA)\omega^2-2dh
F^{\ast}(1-c)\omega+(1-c)(F^{\ast}(1-c)-dA)=0$ is non-positive, we
have that $P^{\ast}$ decreases with $\omega,$ for each
$\omega\in[0,1].$ When instead the discriminant is positive, it is
easy to show that there are two positive solutions $\omega_{1,2}$ to
such equation, and the larger between them always exceeds $1.$ Hence,
if the smallest solution $\omega_1$
is larger than 1, then $P^{\ast}$ decreases with $\omega,$ for each $\omega\in[0,1].$ If instead $\omega_1$ lies in $(0,1),$ then $P^{\ast}$ decreases with $\omega$ on $[0,\omega_1)$ and increases with $\omega$ on $(\omega_1,1].$ Setting ${\omega}_{P^{\ast}}=\omega_1,$ we obtain the desired result on $P^{\ast}.$\\
As concerns $Y^{\ast},$ a direct computation shows that
$$\frac{\partial Y^{\ast}}{\partial\omega}=\frac{h(\omega^2 dh F^{\ast}+2\omega(dA-F^{\ast}(1-c))+F^{\ast}(1-c))}{(1-c-\omega^2 dh)^2}.$$
Such derivative is positive for each $\omega\in[0,1]$ when
$dA-F^{\ast}(1-c)\ge 0$ and in this case $Y^{\ast}$ increases with
$\omega,$ for every $\omega.$ When instead $dA-F^{\ast}(1-c)<0,$ the
numerator of ${\partial Y^{\ast}}/{\partial\omega}$ may become
negative. More precisely, when the discriminant of
$\omega^2 dh F^{\ast}+2\omega(dA-F^{\ast}(1-c))+F^{\ast}(1-c)=0$ is
non-positive, we have that $Y^{\ast}$ increases with $\omega,$ for
each $\omega\in[0,1].$ When instead the discriminant is positive, it
is easy to show that there are two positive solutions $\omega_{3,4}$
to such equation, and the largest between them always
exceeds $1.$ Hence, if the smallest solution
$\omega_3$ is larger than 1, then $Y^{\ast}$ increases with $\omega,$
for each $\omega\in[0,1].$ If instead $\omega_3$ lies in $(0,1),$ then
$Y^{\ast}$ increases with $\omega$ on $[0,\omega_3)$ and decreases
with $\omega$ on $(\omega_3,1].$ Setting
${\omega}_{Y^{\ast}}=\omega_3,$ we obtain the desired result on
$Y^{\ast}.$

Let us now assume ${1}/{\sqrt{2\beta}}<b<\min\{dA/(1-c),F^{\ast }\},$
namely \eqref{cond} is violated and \eqref{a2} holds true, so that the two
additional steady states $0<P^L(\omega)<P^H(\omega)$ exist
for every $\omega\in[0,1].$ In order to analyze their behavior with
respect to increasing values of $\omega,$ we use the implicit function
theorem. Let
$f : (0,+\infty)\times(0, 1) \rightarrow
\mathbb{R},\,(P,\omega)\mapsto f(P,\omega),$ be
defined by the lhs of \eqref{im}
and let $P: (0, 1) \rightarrow (0,+\infty),\,\omega\mapsto P(\omega),$
be the function which associates to $\omega\in (0, 1)$ one of the
solutions implicitly defined by $f (P, \omega) = 0$ for
$b > 1 / \sqrt{2 \beta}$. Recalling the definition of $W$ in
\eqref{defs}, we then have
\[ \frac{\partial f}{\partial P} = \frac{(1 - c - dh \omega^2) (1 + 2
    (1 - 4 b^2 \beta) e^{- 4 b \beta W} + e^{- 8 b \beta W})}{(1 - c)
    (e^{- 4 b \beta W} + 1)^2} \] and
\[ \frac{\partial f}{\partial \omega} = \frac{(- Ad + F - Fc - 2 Pdh
    \omega) (1 + 2 (1 - 4 b^2 \beta) e^{- 4 b \beta W} + e^{- 8 b
      \beta W})}{(1 - c) (e^{- 4 b \beta W} + 1)^2}. \]

We can apply the implicit function theorem to investigate how the
steady state values for $P$ depend on $\omega$ provided that
\begin{equation}\label{neq}
  1 + 2 (1 - 4 b^2 \beta) e^{- 4 b \beta W} + e^{- 8 b \beta W} \neq 0.
\end{equation}
Setting $Z=e^{- 4 b \beta W} = (b - W)/(b + W),$ condition \eqref{neq}
can be rewritten as $1 + 2 (1 - 4 b^2 \beta) Z + Z^2 \neq 0,$ or
equivalently
\[ \frac{4 b^2 (2 \beta W^2 - 2 \beta b^2 + 1)}{(W + b)^2} \neq 0, \]
which holds true if and only if
$2 \beta W^2 - 2 \beta b^2 + 1 \neq 0.$ Recalling \eqref{s2},
the last inequality can be rewritten as
\begin{equation}\label{ne}
  P \neq P^{\ast} \pm \frac{1-c}{1 - c - {\omega}^2 d h} \sqrt{b^2 - \frac{1}{2
      \beta}}.
\end{equation}
By the symmetry of $P^{L}$ and $P^H$ with respect to
$P^{\ast},$ \eqref{ne} is satisfied if \eqref{ine} holds true
for each $\omega\in (0,1).$ 
Under such assumption we have
that, since $P (\omega)$ is well-defined for
$\omega \in (0, 1),$ then, by the implicit function theorem,
it is continuously differentiable for each $\omega \in (0, 1)$
and it holds that
\begin{equation}\label{po}
  P^{\prime} (\omega) = - \frac{\frac{\partial f}{\partial \omega}}{\frac{\partial
      f}{\partial P}} = \frac{Ad - F (1 - c) + 2 P (\omega) dh \omega}{1 - c - dh
    \omega^2}.
\end{equation}
From now on, to fix ideas, we focus on $P^L (\omega):$ analogous
arguments hold for $P^H
(\omega),$ as well. \\
If $Ad- F (1 - c) > 0$, then $\left(P^L\right)^{\prime}
(\omega) > 0$ on $(0, 1)$ and $P_L (\omega)$ is strictly increasing
for $\omega\in (0,1)$.\\
Conversely, if $Ad- F (1 - c) < 0$, then $\left(P^L\right)^{\prime}(0) < 0:$
if $\left(P^L\right)^{\prime} (\omega) < 0$ for each $\omega \in (0, 1)$, then
$P_L (\omega)$ is strictly decreasing on $(0, 1)$, otherwise there
exists ${\omega}_{P^L} \in (0, 1)$ such that
\begin{equation}\label{pl}
P^L ({\omega}_{P^L}) = \frac{F (1 - c) - Ad}{2 dh {\omega}_{P^L}}.
\end{equation}
If such ${\omega}_{P^L} \in (0, 1)$ exists, $P^L$ is decreasing on $(0,
{\omega}_{P^L})$ and $\left(P^L\right)^{\prime} ({\omega}_{P^L}) = 0$. Since
\[ \left(P^L\right)^{\prime\prime} (\omega) = \frac{2 dh \omega (Ad -F (1 - c) + 2 dh \omega P_L
   (\omega))}{(1 - c - dh \omega^2)^2} + \frac{2 dh \omega P_L^{\prime} (\omega) + 2
   dh P_L (\omega)}{1 - c - dh \omega^2}, \]
we have
\[ \left(P^L\right)^{\prime\prime} ({\omega}_{P^L}) = \frac{2 dh
  P^L({\omega}_{P^L})}{1 - c - dh {\omega}_{P^L}^2} > 0 \]
and thus ${\omega}_{P^L}$ is a minimum point. As
$P^{\prime}(\omega)$ positively depends on $P(\omega)\omega,$ if
$\omega$ increases and $P(\omega)$ is increasing, by the continuity of
$P^{\prime}(\omega)$ there cannot exist
$\widehat{\omega}_L\in({\omega}_{P^L},1)$ with
$P^{\prime}(\widehat{\omega}_L)\le 0.$ 
This allows us to conclude that $P^L (\omega)$ is strictly
increasing on $({\omega}_{P^L}, 1).$
We remark that
  ${\omega}_{P^L},\, {\omega}_{P^H}$ and ${\omega}_{P^{\ast}}$ do not
  necessarily exist for the same parameter values. Indeed, from
  \eqref{pl} it follows that ${\omega}_{P^j}$ is inversely
  proportional to $P^j ({\omega}_{P^j}),$ for $j\in\{L,H,\ast\}.$
  Since for each $\omega\in [0,1]$ we have
  $P^L (\omega)<P^{\ast} (\omega)<P^H (\omega),$ it holds that
  ${\omega}_{P^L}>{\omega}_{P^{\ast}}>{\omega}_{P^H}.$ Hence, the
  first to (possibly) enter from above the interval $(0,1)$ is
  ${\omega}_{P^H},$ (possibly) followed by ${\omega}_{P^{\ast}},$
  (possibly) followed by ${\omega}_{P^{L}}.$

As concerns $Y,$  to fix ideas, we will just focus on $Y^{L},$ as analogous arguments hold for $Y^H,$ as well.\\
Recalling \eqref{yp} and \eqref{po}, we have that
\begin{equation}\label{yld}
\left(Y^L\right)'(\omega)=\frac{h}{1-c}\left(P^L(\omega)+\omega \left(P^L\right)'(\omega)\right)=
\frac{h\left(P^L(\omega)dh\omega^2+\omega(dA-F^{\ast}(1 - c))+P^L(\omega)(1-c)\right)}{(1-c)(1 - c - dh \omega^2)}.
\end{equation}
Such derivative is positive for each $\omega\in(0,1)$ when
$dA-F^{\ast}(1-c)\ge 0$ and in this case $Y^L(\omega)$ increases with
$\omega,$ for every $\omega.$ When instead $dA-F^{\ast}(1-c)<0,$ the
numerator of $\left(Y^L\right)'(\omega)$ may become negative. More
precisely, $\left(Y^L\right)'(0)>0.$ If $\left(Y^L\right)'(\omega)>0$
for every $\omega\in (0,1),$ then $Y^L$ is strictly increasing on
$(0,1).$ Otherwise there exists ${\omega}_{Y^L}\in(0,1)$ such that
\begin{equation}\label{lt}
P^L ({\omega}_{Y^L}) = \frac{{\omega}_{Y^L}(F (1 - c) - Ad)}{1 - c + dh {\omega}_{Y^L}^2}.
\end{equation}
If such ${\omega}_{Y^L} \in (0, 1)$ exists, then $Y^L$ is increasing on $(0,
{\omega}_{Y^L})$ and $\left(Y^L\right)' ({\omega}_{Y^L}) = 0$. Since
\[ \left(Y^L\right)'' ({\omega}_{Y^L}) = \frac{h\left((1 - c + dh
   {\omega}_{Y^L}^2)\left(P^L\right)'(\omega_{Y^L})+2hd\omega P^L(\omega_{Y^L})-F (1 - c) + Ad\right) }{(1-c)(1 - c - dh
   {\omega}_{Y^L}^2)}, \]
	recalling \eqref{po} and \eqref{lt} we find
	\[ \left(Y^L\right)'' ({\omega}_{Y^L})=\frac{2h(Ad-F (1 - c))}{(1 - c + dh {\omega}_{Y^L}^2)(1 - c - dh {\omega}_{Y^L}^2)}<0.\]
	Hence, ${\omega}_{Y^L}$ is a local maximum point. Actually, it is a global maximum point, i.e., $Y^L$ increases on $(0,{\omega}_{Y^L})$ and decreases on $({\omega}_{Y^L},1).$ Indeed, if for some $\omega_0\in({\omega}_{Y^L},1)$ we have $\left(Y^L\right)'(\omega_0)<0,$ then, by \eqref{yld}, it follows that
$$P^L (\omega_0) < \frac{\omega_0(F (1 - c) - Ad)}{1 - c + dh \omega_0^2}.$$
If by contradiction there exists $\omega_2>\omega_0$ where $Y^L$ is
increasing, then $\left(Y^L\right)'(\omega_2)>0.$ By the continuity of
$\left(Y^L\right)'$ there must exists at least an
$\omega_1\in(\omega_0,\omega_2)$ in which $Y'(\omega_1)=0.$ Let us
assume that $\omega_1$ is the smallest value of $\omega$ for which
this happens, so that $\left(Y^L\right)'(\omega)<0$ on
$[\omega_0,\omega_1).$ We notice that, under the maintained assumption
that $F (1 - c) - Ad>0,$ the map
$$\phi:(0,1)\to\mathbb R,\quad\omega\mapsto\frac{\omega(F (1 - c) - Ad)}{1 - c + dh \omega^2}$$
is strictly increasing with $\omega.$ Then, since
${\omega}_{Y^L}<\omega_1,$ we have
\begin{equation}\label{yl1}
  P^L ({\omega}_{Y^L}) = \frac{{\omega}_{Y^L}(F (1 - c) - Ad)}{1 - c + dh {\omega}_{Y^L}^2}<P^L (\omega_1)=\frac{\omega_1(F (1 - c) - Ad)}{1 - c + dh \omega_1^2}.
\end{equation}
Recalling that, when $F (1 - c) - Ad>0,$ $P^L$ can be either
decreasing for every $\omega\in(0,1)$ or there exists
${\omega}_{P^{L}}\in(0,1)$ such that $P^L$ decreases on
$[0,{\omega}_{P^{L}})$ and increases on $({\omega}_{P^{L}},1),$ it
follows that \eqref{yl1} excludes the first possibility.  Hence, $P^L$
increases on $({\omega}_{P^{L}},1)$ and it holds that
${\omega}_{P^{L}}<\omega_1$ as $P^{L}$ is increasing at
$\omega=\omega_1$. Since by \eqref{yp}, when $P^L$ increases,
$Y^L$ increases, too, we have that $Y^L$ starts increasing in a left
neighborhood of $\omega_1,$ against the assumption on $\omega_1.$
Hence, $Y^L$ increases on $(0,{\omega}_{Y^L})$ and decreases on
$({\omega}_{Y^L},1),$ as desired.
\end{proof}

\begin{proof}[Proof of Proposition \ref{stabcond}]
  In order to use the conditions in \cite{farebrother} to prove
  \eqref{eq:sc} we need to compute the Jacobian matrix $J^{\ast }$ for
  the map $G$ in \eqref{map} in correspondence to $S^{\ast}.$ We have
  that the Jacobian matrix of $G$ is
\begin{equation}\label{J}
  J = \left( \begin{array}{ccc}
     \gamma g_I^{\prime}  (Y - Z) + c & h \omega & - \gamma g_I^{\prime}  (Y - Z)\\
     j_{21} & j_{22} & 0\\
     1 & 0 & 0
   \end{array} \right)
\end{equation}
where
\begin{subequations}\label{js}
  \begin{equation}\label{j21}
    \begin{split}
      j_{21} = & \mu \sigma g_P^{\prime}  \left( - \mu \left( P + F (\omega - 1) - b \left(
            \frac{2}{e^{- 4 b \beta (P + F (\omega - 1) - Yd \omega)} + 1} - 1 \right)
          - Yd \omega \right) \right)\\
      &\cdot\left( d \omega - \frac{8 b^2 \beta d \omega
          e^{- 4 b \beta (P + F (\omega - 1) - Yd \omega)}}{(e^{- 4 b \beta (P + F
            (\omega - 1) - Yd \omega)} + 1)^2} \right)
    \end{split}
  \end{equation}
  and
  \begin{equation}\label{j22}
    \begin{split}
      j_{22} = &\mu \sigma \left( \frac{8 b^2 \beta e^{- 4 b \beta (P + F (\omega
            - 1) - Yd \omega)}}{(e^{- 4 b \beta (P + F (\omega - 1) - Yd \omega)} +
          1)^2} - 1 \right)\\
      & \cdot g_P^{\prime}  \left( - \mu \left( P + F (\omega - 1) - b \left(
            \frac{2}{e^{- 4 b \beta (P + F (\omega - 1) - Yd \omega)} + 1} - 1 \right)
          - Yd \omega \right) \right) + 1.
    \end{split}
  \end{equation}
\end{subequations}
Recalling that from \eqref{eq:spb} we have
$g_I^{\prime}(0)=g_P^{\prime}(0)=1,$ $J^{\ast }$ reads as
  \begin{equation}
    J^{\ast }=\left(
      \begin{array}{ccc}
        c+\gamma & \omega h & -\gamma \\
        \omega dE & 1-E & 0 \\
        1 & 0 & 0%
      \end{array}%
    \right) \label{jac}
  \end{equation}
  The Farebrother conditions are the following:
  \begin{enumerate}
  \item[$i)$] $1+C_{1}+C_{2}+C_{3}>0$
  \item[$ii)$] $1-C_{1}+C_{2}-C_{3}>0$
  \item[$iii)$] $1-C_{2}+C_{1}C_{3}-(C_{3})^2>0$
  \item[$iv)$] $3-C_{2}<0$
  \end{enumerate}
  where $C_{i}$, $i\in\{1,2,3\},$ are the coefficients of the characteristic polynomial
  $\lambda^3+C_{1}\lambda^2+C_{2}\lambda+C_{3}=0$.
  \noindent
  From the Jacobian matrix (\ref{jac}), we find that the characteristic equation is given by
  $$\lambda ^{3}+\left( -c-\gamma-1+E\right) \lambda
  ^{2}+\left( 2\gamma+c-cE-\gamma E-\omega ^{2}dhE\right) \lambda
  +\left( -\gamma+E\gamma\right)=0$$
  which immediately provides $C_1,C_2$ and $C_3.$ Replacing their expressions in
  $i)-iv)$ allows concluding the proof after some simple algebraic
  manipulations. We just notice that $i)$ provides
  $E\left( 1-c-\omega ^{2}dh\right) >0$ which, recalling assumption
  \eqref{a1}, can be simplified as $E>0,$ i.e., \eqref{eq:sc1}.
\end{proof}

\begin{proof}[Proof of Proposition \ref{th:stabbetab}]
  We describe in detail only the scenarios arising on varying $\beta,$ since,
recalling that $E = \sigma \mu (1 - 2 b^2 \beta)$, the role of $b$ is
equivalent to that of $\beta$.

Collecting $E$ in \eqref{eq:sc} we find the following system:
\begin{equation}
  \label{eq:condE} \left\{ \begin{array}{l}
    E  > 0\\
    - E (dh \omega^2 + c + 2 \gamma + 1) + 2 c + 4 \gamma + 2
    > 0\\
    \gamma  (1 - \gamma) E^2 + ((c - \gamma) (1 -
    \gamma) + dh \omega^2) E + (1 - c)  (1 - \gamma) > 0\\
    E (dh \omega^2 + c + \gamma) - 2 \gamma - c + 3 > 0
  \end{array} \right.
\end{equation}
We start solving System \eqref{eq:condE} with respect to
$E \in \mathbb{R}$ under assumption \eqref{a1}. Then we will rewrite
the solutions in terms of the positive parameter $\beta$.

Recalling \eqref{a1}, the first inequality in \eqref{eq:condE}
requires $E > 0,$ while the second inequality can be rewritten as
\[ E < \frac{2 c + 4 \gamma + 2}{dh \omega^2 + c + 2 \gamma + 1} =
  E_2 \]
and the fourth inequality is equivalent to
\[ E > \frac{2 \gamma + c - 3}{dh \omega^2 + c + \gamma} =
   E_4. \]
Combining these three conditions we obtain
\begin{equation}
  \max \{ 0, E_4 \} < E < E_2\,. \label{eq:cond124}
\end{equation}
Concerning the third condition in \eqref{eq:condE}, we
distinguish three cases:

a) $\gamma = 1.$ We start noting that we have $E_4 < 0,$ so that condition
\eqref{eq:cond124} reduces to
\begin{equation}
  0 < E < E_2\,. \label{eq:cond124r}
\end{equation}
The third inequality in \eqref{eq:condE} reduces to $Edh \omega^2 > 0,$ i.e. $E > 0,$ so that
System \eqref{eq:condE} is solved for $0<E<E_2,$ i.e.,
\begin{equation}\label{eq:E2}
  \frac{1}{2 b^2} \left( 1 - \frac{E_2}{\mu \sigma} \right) < \beta <
   \frac{1}{2 b^2},
\end{equation}
which means that if $\mu\sigma<E_2$ we are in the destabilizing
scenario, since the left inequality in \eqref{eq:E2} is always
fulfilled, while if $\mu\sigma>E_2$ we obtain the mixed scenario.

b) $\gamma < 1.$ We again have $E_4 < 0,$ so that condition
\eqref{eq:cond124} reduces to \eqref{eq:cond124r}. The lhs
of the third inequality in \eqref{eq:condE} represents a convex
parabola, intersecting the vertical axis at $(1 - c)(1 - \gamma) > 0$.
If the discriminant is negative, the inequality is fulfilled by any
$E$, so that System \eqref{eq:condE} is solved when
\eqref{eq:cond124r} holds, from which we obtain the same possible
scenarios considered in case a).  If the discriminant is positive, in
principle, we may have two situations, i.e.,
$(c - \gamma) (1 - \gamma) + dh \omega^2>0$ and
$(c - \gamma) (1 - \gamma) + dh \omega^2\le 0.$ If
$(c - \gamma) (1 - \gamma) + dh \omega^2,$ representing the slope of
the parabola at its intersection point with the vertical axis, is
positive then the third inequality is solved by
$E < E_3^a \vee E > E_3^b$, where $E^{a, b}_3$ are the (negative)
roots of the lhs of the third inequality in \eqref{eq:condE}. This means that
simultaneously considering $E < E_3^a \vee E > E_3^b$ and
\eqref{eq:cond124r}, System \eqref{eq:condE} is again solved when
\eqref{eq:cond124r} holds true and we obtain the same scenarios as before.

If $(c - \gamma) (1 - \gamma) + dh \omega^2 \le 0$ the discriminant of
the lhs of the third inequality of \eqref{eq:condE} can not
be positive. In fact we have
\[
  \gamma  (1 - \gamma) E^2 + ((c - \gamma) (1 -
   \gamma) + dh \omega^2) E + (1 - c)  (1 - \gamma) >
   (1 - \gamma)\left(\gamma   E^2 + (c - \gamma)  E + (1 - c)\right)
\]
and the rhs is strictly positive. To show this, it is
enough to note that the
discriminant of $\gamma E^2 + (c - \gamma) E + (1 - c)$, given by
\[ c^2 + 2 c \gamma + \gamma ^2 - 4 \gamma = c^2 - \gamma + 2 \gamma
  (c - 1) + \gamma (\gamma - 1), \] is negative since
$c < 1, \gamma < 1$ and, in order to have
$(c - \gamma) (1 - \gamma) + dh \omega^2 \le 0$, we necessarily need
$c \le \gamma$, which, together with $\gamma < 1,$ guarantees
$c^2 < \gamma$.

If the discriminant of the lhs of the third inequality in
\eqref{eq:condE} is negative or null, by the same arguments used above, it follows
that the third inequality in \eqref{eq:condE} is fulfilled by any
value of $E>0$ and we obtain again the scenarios arising in case a).

c) $\gamma > 1.$ In this case the lhs of the third
inequality in \eqref{eq:condE} represents a concave parabola,
intersecting the vertical axis at $(1 - c) (1 - \gamma) < 0$. If the
discriminant is negative, the inequality is never fulfilled, and we
have the unconditionally unstable scenario. Conversely, if the
discriminant is positive, the third inequality is solved by
$E_3^a < E < E_3^b$, which combined with \eqref{eq:cond124} provides
$\max \{ 0, E_4, E_3^a \} < E < \min \{ E_2, E_3^b \},$ that is
qualitatively identical to \eqref{eq:cond124}. This allows concluding
that no other scenarios can arise.
\end{proof}

\begin{proof}[Proof of Proposition \ref{th:stabw}]
    We start making $\omega$ explicit in System \eqref{eq:sc}, which can then be
  rewritten as follows:
\[
\left\{
\begin{array}{l}
E >0 \\
\omega ^{2}<\frac{\left( 2-E\right) \left( 1+c+2\gamma\right) }{dhE} \\
\omega ^{2}>\frac{c-cE+\gamma-1+E\gamma-\gamma c-E\gamma^{2}-E^{2}\gamma+E\gamma c+E^{2}\gamma^2}{dhE} \\
\omega ^{2}>\frac{2\gamma+c-cE-\gamma E-3}{dhE}%
\end{array}%
\right.
\]
Accordingly, the second inequality is never fulfilled if $E\ge 2,$
while it is fulfilled for
\begin{equation}
0\le\omega<\sqrt{\frac{\left( 2-E\right) \left( 1+c+2\gamma\right)
  }{dhE}}\equiv\omega_2\label{w2}
\end{equation}
if $0<E<2.$
As concerns the third and the fourth inequalities, they are always fulfilled if the numerators on the rhs are negative, while they are respectively satisfied for
$$1\ge\omega>\sqrt{\frac{(\gamma - 1)(\gamma E^2-(\gamma-c) E +
      1-c)}{dhE}}\equiv\omega_3$$
and
$$1\ge\omega>\sqrt{\frac{2\gamma+c-cE-\gamma E-3}{dhE}}\equiv\omega_4$$
when both numerators are non-negative.  In particular, we have that
the numerator of $\omega_4$ is non-negative, i.e.,
$2\gamma+c-cE-\gamma E-3\ge 0,$ when $\gamma>(3-c)/2$ and
$0<E\le\frac{2\gamma+c-3}{c+\gamma}.$ On the other hand, it holds that
the numerator of $\omega_3$ is non-negative, i.e.,
\begin{equation}\label{ce}
(\gamma - 1)(\gamma E^2-(\gamma-c) E +
      1-c)\ge 0,
\end{equation}
when $\gamma=1,$ when
$\gamma E^2-(\gamma-c)E+1-c\ge 0$ if
$\gamma>1,$ and when
$\gamma E^2-(\gamma-c)E+1-c\le 0$ if
$\gamma<1.$ We notice that real solutions to
$\gamma E^2-(\gamma-c)E+1-c=0$ exist only
if
\begin{equation}\label{delta}
  \Delta=\gamma^2-2(2-c)\gamma+c^2\ge 0\; \Leftrightarrow \;
\gamma\le\gamma_1\, \vee \,
\gamma\ge\gamma_2,
\end{equation}
with $\gamma_1\equiv 2-c-2\sqrt{1-c}$ and
$\gamma_2\equiv 2-c+2\sqrt{1-c}$ that fulfill
\begin{equation}\label{ineq}
  0<\gamma_1<c<1<\gamma_2.
\end{equation}
Condition \eqref{ce} is then satisfied by any $E>0$ when $\gamma=1$ or
when $\gamma>1$ and $\Delta<0,$ which, recalling \eqref{delta} and
\eqref{ineq}, jointly provide $1<\gamma<\gamma_2.$ Hence, we can
conclude that condition \eqref{ce} is fulfilled by any $E>0$ when
$\gamma\in[1,\gamma_2).$\\
If $\gamma>1$ and $\Delta\geq0$ (which jointly require
$\gamma\ge\gamma_2$) then \eqref{ce} is fulfilled for
$0<E\le E_1\equiv\frac{\gamma-c-\sqrt{\Delta}}{2\gamma}$ or for $E\ge
E_2\equiv\frac{\gamma-c+\sqrt{\Delta}}{2\gamma}.$\\
Conversely, if $\gamma\ge\gamma_2$ but $E\in(E_1,E_2),$ \eqref{ce}
does not hold true, and thus iii) is fulfilled for any $\omega.$ If
$\gamma<1$ and $\Delta>0$ (namely, if $\gamma<\gamma_1$), we have that
\eqref{ce} holds true for $E\in(E_1,E_2)\subset(-\infty,0)$ since
$\gamma E^2-(\gamma-c) E + 1-c$ is a convex parabola, increasing at
the positive intersection $1-c$ with the vertical axis, since,
recalling \eqref{ineq}, $-(\gamma-c)>0.$ Hence \eqref{ce} is never
satisfied for $E>0$ when $\gamma<\gamma_1$ and iii) is fulfilled again
for any $\omega.$ The same conclusion holds true if $\gamma<1$ and
$\Delta\leq0$ (namely, if $\gamma_1\le\gamma<1$).


Summarizing, we have that\footnote{We stress that, in order to
  simplify the description of the conditions under which ii)--iv) are
  fulfilled, we report just
  the cases with $E>0,$ so that also i) is satisfied.}:
\medskip

i) is fulfilled for $E>0;$

ii) is fulfilled for $0\le\omega<\omega_2$ and $0<E<2;$

iii) is fulfilled for any value of $\omega\in[0,1],$ if $\gamma\in(0,1),$ or if $\gamma\ge\gamma_2$ and $E\in(E_1,E_2);$

iii) is fulfilled for $1\ge\omega>\omega_3,$ if $\gamma\in[1,\gamma_2),$ or if $\gamma\ge\gamma_2$
and $E\in(0,E_1]\cup[E_2,+\infty);$

iv) is fulfilled for any value of $\omega\in[0,1],$ if $E>\frac{2\gamma+c-3}{\gamma+c};$

iv) is fulfilled for $1\ge\omega>\omega_4,$ if
$\gamma>(3-c)/2$ and $0<E\le\frac{2\gamma+c-3}{c+\gamma}.$

\medskip

Hence, each condition in \eqref{eq:sc} is fulfilled on an
  interval for $\omega.$  Since the intersection of intervals is
  an (empty or nonempty) interval, the set on
  which $S^\ast$ is locally asymptotically stable is connected. Thus, the only scenarios that may be possibly detected on increasing $\omega$ are the unconditionally unstable, unconditionally stable, destabilizing, stabilizing and mixed (meaning unstable-stable-unstable) scenarios.\\
In particular, the unconditionally unstable scenario occurs, e.g., when $E\equiv\mu \sigma \left(1- 2b^{2}\beta\right)<0,$
that is, for $b>{1}/{\sqrt{2 \beta}},$ as in this case \eqref{eq:sc1} is
violated and $S^{\ast}$ is unstable for each value of $\omega.$

Recalling assumption \eqref{a1}, the unconditionally stable scenario
  occurs, for instance, when
  $\gamma\in (0,1)$ and $E\in\bigl(0,\frac{2(1+c+2\gamma)}{d
    h+1+c+2\gamma}\bigr).$ In fact, under such conditions,
  recalling the definition of
    $\omega_2$ in \eqref{w2} and that
  $c\in(0,1),$ we have
  $2>\frac{2(1+c+2\gamma)}{d
    h+1+c+2\gamma}>E>0>\frac{c-1}{1+c}>\frac{2\gamma+c-3}{\gamma+c}$
  and $\omega_2>1.$\\
  Hence, the dynamical system is locally asymptotically stable at $S^{\ast}$ for each $\omega\in [0,1].$\\
 Recalling assumption \eqref{a1}, the destabilizing scenario occurs,
  for instance, when $\gamma\in (0,1)$ and
  $E\in\bigl(\frac{2(1+c+2\gamma)}{d
    h+1+c+2\gamma},2\bigr).$
  In fact, under such conditions, we have
  $2>E>\frac{2(1+c+2\gamma)}{d
    h+1+c+2\gamma}>0>\frac{c-1}{1+c}>\frac{2\gamma+c-3}{\gamma+c}$ and
  $0<\omega_2<1.$
  Hence, the system is locally asymptotically stable at $S^{\ast}$ for $\omega\in [0,\omega_2)$ and unstable for $\omega\in(\omega_2,1].$\\
  Still under assumption \eqref{a1}, the stabilizing scenario occurs,
  for instance, when $\gamma\in[1,(3-c)/2),\,E\in(0,2)$ and
  $dh\in(\max\{0,L\},\min\{R,1-c\}),$ where\footnote{We notice that
    the conditions on the parameters are compatible, e.g., when $E=1.$
    Indeed, in such case we obtain $dh\in(\gamma-1,1-c),$ and
    $\gamma-1<1-c$ is fulfilled since $\gamma<(3-c)/2<2-c,$ as
    $c\in(0,1).$}
		\begin{equation}\label{lr}
L=(\gamma-1)\left[\frac{1-c}{E}+E\gamma-\gamma+c\right],\quad R=\frac{(2-E)(1+c+2\gamma)}{E}.
\end{equation}
Namely, under such conditions, we have
$2>E>0>\frac{2\gamma+c-3}{\gamma+c},\,\gamma_2>(3-c)/2>\gamma\ge 1$
and $0<\omega_3<1<\omega_2,$ since $dh>L$ is equivalent to
$\omega_3<1$ while $dh<R$ is equivalent to $\omega_2>1.$\\
Hence, the system is locally asymptotically stable at
$S^{\ast}$ for $\omega\in (\omega_3,1]$ and
unstable for $\omega\in[0,\omega_3).$\\
Recalling \eqref{a1} and the definition of $R$ in \eqref{lr}, the
mixed scenario occurs, for instance, when
$\gamma\in[1,(3-c)/2),\,E\in(0,2),\,dh\in(R,1-c)$ and
$(E^2-E)\gamma^2+(3E-c-3-E^2+Ec)\gamma+E-3-c<0.$\,\footnote{We notice
  that the conditions on the parameters are consistent. Indeed, they
  are fulfilled, e.g., for $c=0.5,\,\gamma=1.1,dh=0.2$ and $E=1.9.$ In
  such case we find $\omega_3=0.936,\,\omega_2=0.986,$ and
  \eqref{a1} is satisfied.} Namely, under such conditions, we have
$2>E>0>\frac{2\gamma+c-3}{\gamma+c},\,\gamma_2>(3-c)/2>\gamma\ge
1,\,0<\omega_3<\omega_2<1.$ Hence, the system is locally
asymptotically stable at $S^{\ast}$ for
$\omega\in (\omega_3,\omega_2)$ and unstable for
$\omega\in[0,\omega_3)\cup(\omega_2,1].$
\end{proof}

\begin{proof}[Proof of Proposition \ref{th:stabhl}]
  We start showing that evaluating the Jacobian matrix of $G$ in \eqref{map} at
  either $S^L$ or $S^H$ we obtain the Jacobian matrices $J^L$ and
  $J^H$ whose structure is very similar to that of $J^{\ast}$
  in \eqref{jac}. Indeed, the first and the third rows of both $J^L$ and
  $J^H$ coincide with those of $J^{\ast}.$ Now we evaluate $j_{21}$ and
  $j_{22},$ respectively defined in \eqref{j21} and \eqref{j22}, at
  $S^L$ or at $S^H.$

  We start noting that, recalling (\ref{yp}) and (\ref{imp}), we can
  rewrite $j_{21}$ and $j_{22}$ as
  \[ j_{21} = \omega d\mu \sigma   \left( 1 - \frac{8 b^2 \beta e^{- 4
          Wb \beta}}{(e^{- 4 Wb \beta} + 1)^2} \right) g_P^{\prime}
    \left( - \mu \left( W + b - \frac{2 b}{e^{- 4 Wb \beta} + 1}
      \right) \right) \]
  and
  \[ j_{22} = \mu \sigma \left( \frac{8 b^2 \beta e^{- 4 Wb
          \beta}}{(e^{- 4 Wb \beta} + 1)^2} - 1 \right) g_P^{\prime}
    \left( - \mu \left( W + b - \frac{2 b}{e^{- 4 Wb \beta} + 1} \right)
    \right) + 1. \]
  From (\ref{imp}) we have
  \[ g_P^{\prime} \left( - \mu \left( W + b - \frac{2 b}{e^{- 4 Wb
            \beta} + 1} \right) \right) = g_P^{\prime} (0) = 1,
  \]
  while using (\ref{s}) we can write
  \[ \frac{8 b^2 \beta e^{- 4 Wb \beta}}{(e^{- 4 Wb \beta} + 1)^2} = 2
    \beta (b^2 - W^2), \] so that the Jacobian matrix evaluated at either
  $S^L$ or $S^H$
  can be written as
  \[ \left(
      \begin{array}{ccc}
        c+\gamma & \omega h  & - \gamma\\
        \omega d\mu \sigma  (1 + 2 \beta (W^2 - b^2)) & 1 - \mu \sigma (1 + 2
                                                        \beta (W^2 - b^2)) & 0\\
        1 & 0 & 0
      \end{array} \right) \]
  Recalling (\ref{s2}), at $S^{\ast}$ we have $W
  = 0,$ which provides \eqref{jac}, while at both \ $S^L$ and $S^H$ we have $W^2 > 0$. Introducing
  \[ E^i = \mu \sigma (1 + 2 \beta ((W^i)^2 - b^2)), \; i \in \{ L, H
    \}, \] where $W^L$ and $W^H$ are defined using $P^L$ and $P^H$ into
  \eqref{s2}, respectively, we have that the Jacobian matrices
  evaluated at the steady states are
\begin{equation}\label{jhl}
  J^i = \left( \begin{array}{ccc}
     c+\gamma  & \omega h & - \gamma\\
      \omega d E^i & 1 - E^i & 0\\
     1 & 0 & 0
   \end{array} \right), \quad i \in \{ \ast, L, H \},
\end{equation}
which are formally the same as $J^{\ast}$ in \eqref{jac} provided
that we replace $E$ with $E^i.$


Thanks to the symmetry of $S^L$ and $S^H$ w.r.t. $S^{\ast}$ proved in Proposition
\ref{ss} and by \eqref{s2}, for the same parameters' configuration we have $W^H=- W^L$, and
hence $E^H = E^L$. From now on we then only consider
$S^H.$ We fix all parameters but $\beta$ and, for clarity's sake, we make explicit the dependence of some quantities on
$\beta.$

We want to show that for each $\beta \in (0, 1 / (2 b^2))$ there is a
unique $\beta^H \in (1 / (2 b^2), + \infty),$ strictly decreasing as
$\beta$ increases, such that $J^{\ast} (\beta)$ and $J^H ( \beta^H)$
have the same eigenvalues. To this end, it suffices that
$E(\beta) = E^H (\beta^H)$, where $E$ is defined in Proposition
\ref{stabcond}, namely that
\begin{equation}
  \beta b^2 = \beta^H (b^2 - W^H (\beta^H)^2). \label{imp3}
\end{equation}
By Propositions \ref{ss} and \ref{scbeta}, we know that
$W^H (\beta^H),$ implicitly defined on $(1 / (2 b^2), + \infty $) by
(\ref{s}), is an increasing function whose image is $(0,
b),$ so that we can rewrite (\ref{s}) as
$\beta^H = \frac{1}{4b W^H (\beta^H) } \ln \left( \frac{b + W^H
    (\beta^H)}{b - W^H (\beta^H)} \right),$ by means of which
\eqref{imp3} can be recast into
\[ \beta b^2 = \frac{\ln \left( \frac{b + W^H(\beta^H)}{b - W^H (\beta^H)} \right)}{4b W^H (\beta^H) } (b^2 - (W^H (\beta^H))^2)
  = \frac{\ln\left( \frac{1 +
      \tilde{W}^H (\beta^H)}{1 - \tilde{W}^H (\beta^H)} \right)}{4 \tilde{W}^H (\beta^H) } (1 -
  (\tilde{W}^H (\beta^H))^2), \] where
$\tilde{W}^H (\beta^H) = W^H (\beta^H) / b$. A straightforward check
shows that function
$h : (0, 1) \rightarrow \mathbb{R},$ defined by
\[ h (z) = \frac{1}{4 z} \ln \left( \frac{1 + z}{1 - z} \right) (1 -
  z^2), \]
is a strictly decreasing function with
$\lim_{z \rightarrow 0^+} h (z) = {1}/{2}$ and $\lim_{z
  \rightarrow 1^-} h (z) = 0.$
This means that equation $h (z) = \beta b^2$ has exactly one solution
$z (\beta) \in (0, 1)$ for each $\beta \in (0, 1 / (2
b^2))$. Moreover, $z (\beta)$ is decreasing with respect to
$\beta$. From $\tilde{W}^H (\beta^H) = z (\beta)$ we obtain
$W^H (\beta^H) = b z (\beta) \in (0, b)$, which, recalling that $W^H$
is increasing and its image is $(0, b)$, proves that for each
$\beta \in (0, 1 / (2 b^2))$ there is exactly one
$\beta^H (\beta) \in (1 / (2 b^2), + \infty)$. Moreover,
$\beta^H (\beta)$ is strictly decreasing and $\lim_{\beta \rightarrow 0^{+}} \beta^H (\beta) = + \infty,$
$\lim_{\beta \rightarrow (1 / (2 b^2))^{-}} \beta^H (\beta) = 1 / (2 b^2).$
 Through
function $\beta^H (\beta)$ we can easily translate each scenario for
$S^{\ast}$ into a unique corresponding scenario
for $S^H.$ For instance, a destabilizing scenario for
$S^{\ast}$ occurs if and only if the stability
interval is $I_d=(0,\beta_b)$ with either $\beta_b<1/(2b^2)$ or
$\beta_b=1/(2b^2).$ The image of $I_d$ through $\beta^H (\beta)$ is
then $(\beta^H (\beta_b),+\infty),$ with either
$\beta^H (\beta_b)=1/(2b^2)$ (unconditionally stable scenario) or
$\beta^H (\beta_b)>1/(2b^2)$ (stabilizing scenario). A mixed scenario
for $S^{\ast}$ occurs if and only if the stability
interval is $I_m=(\beta_a,\beta_b)$ with either $\beta_b<1/(2b^2)$ or
$\beta_b=1/(2b^2).$ The image of $I_m$ through $\beta^H (\beta)$ is
then $(\beta^H (\beta_b),\beta^H (\beta_b)),$ with either
$\beta^H (\beta_b)=1/(2b^2)$ (destabilizing scenario) or
$\beta^H (\beta_a)>1/(2b^2)$ (mixed scenario). The last case of
unconditional stability is straightforward.

If we fix all the parameters and we study the behavior of the matrices in \eqref{jhl}
and of their eigenvalues
on increasing $\omega,$ we can immediately note that $J^{\ast}=J^{H}$
if we replace $\beta\in (0,1/(2b^2))$ with $\beta^H(\beta)\in(1/(2b^2),+\infty)$ via the strictly decreasing function
 $\varphi:(0,1/(2b^2))\to(1/(2b^2),+\infty),\,\beta\mapsto \varphi(\beta),$ whose existence has been shown along the proof.
This allows concluding.
\end{proof}


\end{document}